\documentclass[a4paper,UKenglish,cleveref,autoref,thm-restate,numberwithinsect,pdfa]{lipics-v2021}

\usepackage{bussproofs}
\usepackage{multicol}
\usepackage[all]{xy}
\usepackage{fancybox}

\hideLIPIcs

\title{Abstract Framework for All-Path Reachability Analysis toward
  Safety and Liveness Verification (Full Version)} 

\titlerunning{Abstract Framework for All-Path Reachability Analysis
  (Full Version)} 

\author{Misaki Kojima}{Graduate School of Informatics, Nagoya University, Japan
\and \url{https://www.lctrs.jp/kojima/} 
}{kojima@i.nagoya-u.jp}{https://orcid.org/0000-0001-5194-3947}{Grant-in-Aid for JSPS Fellows Grant Number JP24KJ1240}

\author{Naoki Nishida}{Graduate School of Informatics, Nagoya University, Japan
\and \url{https://www.lctrs.jp/nishida/}
}{nishida@i.nagoya-u.ac.jp}{https://orcid.org/0000-0001-8697-4970}{}

\authorrunning{M. Kojima and N. Nishida} 

\Copyright{Misaki Kojima and Naoki Nishida} 

\ccsdesc[100]{Theory of computation~Rewrite systems} 

\keywords{abstract reduction system,
reachability,
cyclic proof,
runtime-error verification
}

\funding{This research was supported by JSPS KAKENHI Grant Number JP24K02900.}

\acknowledgements{We thank the anonymous reviewers 
of FSCD~2026
for their valuable feedback, which improved the paper.}

\nolinenumbers

\newcommand{\Bfnum}[1]{\textcolor{darkgray}{\rm\sffamily\bfseries #1}}
\newcommand{\bfnum}[1]{(#1)}

\def\cA{\mathcal{A}}

\def\cM{\mathcal{M}}
\def\cO{\mathcal{O}}
\def\cR{\mathcal{R}}
\def\cS{\mathcal{S}}
\def\cT{\mathcal{T}}
\def\cV{\mathcal{V}}
\newcommand{\sort}[2][]{\ifthenelse{\equal{#1}{}}{\mbox{\sl{#2}}}{\mbox{\sl#1{#2}}}}
\newcommand{\symb}[1]{\mathsf{#1}}
\newcommand{\var}[1]{\mathit{#1}}
\newcommand{\Var}{\mathcal{V}ar}
\newcommand{\Constraint}[1]{[ #1 ]}
\newcommand{\CTerm}[2]{\langle #1 \,|\, #2 \rangle}
\newcommand{\NF}[1][]{\ifthenelse{\equal{#1}{}}{\mathit{NF}}{\mathit{NF\!}_{#1}}}

\newcommand{\ReachPred}[2]{#1 \Rightarrow #2}
\newcommand{\Derivative}[1][\cR]{\ifthenelse{\equal{#1}{}}{\partial}{\partial_{#1}}}
\newcommand{\Valid}[3]{#2 \mathrel{\mathrel{\models^\forall_{\mathrm{#1}}}} #3}
\newcommand{\NotValid}[3]{#2 \mathrel{\mathrel{\not\models^\forall_{\mathrm{#1}}}} #3}
\newcommand{\PartiallyValid}[2][\cA]{\Valid{partial}{#1}{#2}}
\newcommand{\NotPartiallyValid}[2][\cA]{\NotValid{partial}{#1}{#2}}
\newcommand{\TotallyValid}[2][\cA]{\Valid{total}{#1}{#2}}
\newcommand{\NotTotallyValid}[2][\cA]{\NotValid{total}{#1}{#2}}
\newcommand{\RuleName}[1]{{\rm \textsc{#1}}}
\newcommand{\InfRule}[3][]{\ifthenelse{\equal{#1}{}}{\displaystyle\frac{~ #2 ~}{~ #3 ~}}{\mbox{\rm\RuleName{#1}} ~\, \displaystyle\frac{~ #2 ~}{~ #3 ~}
}}
\newcommand{\DVP}[1][]{\mathsf{DVP}}
\newcommand{\DVPnew}[1][]{\mathsf{DVP}_+}
\newcommand{\CDVP}[1][]{\mathsf{DVP}}
\newcommand{\CDVPnew}[1][]{\mathsf{DVP}_\bot}
\newcommand{\DCC}[1][]{\mathsf{DCC}}
\newcommand{\DCCnew}[1][]{\mathsf{DCC}_{\ominus}}
\newcommand{\varsl}[1]{\mbox{\sl #1}}
\def\Vbud{V_{\mathit{bud}}}
\def\Vopen{V_{\mathit{open}}}
\def\Vder{V_{\mathrm{Der}}}
\newcommand{\Path}[1]{\to_{\mbox{\scriptsize \RuleName{#1}}}}
\newcommand{\toDer}{\Path{Der}}
\newcommand{\toSubs}{\Path{Subs}}
\newcommand{\toDis}{\Path{Dis}}
\def\cAabcd{\cA_1}
\def\cApet{\cA_2}
\newcommand{\State}[1]{\langle\, #1 \,\rangle}
\newcommand{\StateSet}[1]{\mathit{State}_{#1}}
\newcommand{\ValSem}[2][\cM^{\Sigma}]{ \lfloor\!\!\lfloor\, #2 \,\rfloor\!\!\rfloor_{#1}}
\newcommand{\Eval}[2][\cM^{\Sigma}]{[\![ #2 ]\!]_{#1}}

\newcommand{\CF}{\mathit{CF}}

\def\cSfact{\cS}
\def\SigmaFact{\Sigma}
\def\cMfact{\cM}
\def\cRfact{\cR}

\newenvironment{scprooftree}[1]{\gdef\scalefactor{#1}\begin{center}\proofSkipAmount \leavevmode}{\scalebox{\scalefactor}{\DisplayProof}\proofSkipAmount \end{center} }

\newcommand{\cTheory}{\cT}

\EventEditors{}
\EventNoEds{1}
\EventLongTitle{}
\EventShortTitle{}
\EventAcronym{}
\EventYear{}
\EventDate{}
\EventLocation{}
\EventLogo{}
\SeriesVolume{}
\ArticleNo{}

\begin{document}

\maketitle

\begin{abstract}
	An all-path reachability (APR, for short) predicate over an object set is a pair of a source set and a target set, which are subsets of the object set.
	APR predicates have been defined for abstract reduction systems (ARSs, for short) and then extended to logically constrained term rewrite systems (LCTRSs, for short) as pairs of constrained terms that represent sets of terms modeling configurations, states, etc.
	An APR predicate is partially (or demonically) valid w.r.t.\ a rewrite system if every finite maximal reduction sequence of the system starting from any element in the source set includes an element in the target set.
	Partial validity of APR predicates w.r.t.\ ARSs is defined by means of two inference rules, which can be considered a proof system to construct (possibly infinite) derivation trees for partial validity.
	On the other hand, a proof system for LCTRSs consists of four inference rules, leaving a gap between the inference rules for ARSs and LCTRSs.
	In this paper, we revisit the framework for APR analysis and adapt it to verification of not only safety but also liveness properties.
	To this end, we first reformulate an abstract framework for partial validity w.r.t.\ ARSs so that there is a one-to-one correspondence between the inference rules for partial validity w.r.t.\ ARSs and LCTRSs.
	Secondly, we show how to apply APR analysis to safety verification.
	Thirdly, to apply APR analysis to liveness verification, we introduce a novel stronger validity of APR predicates, called total validity, which requires not only finite but also infinite execution paths to reach target sets.
	Finally, for a partially valid APR predicate with a cyclic-proof tree, we show that the acyclicity of the proof graph obtained from the cyclic-proof tree is a necessary and sufficient condition for total validity.
	The condition implies that if there exists a cyclic-proof tree for an APR predicate, the proof graph of which is acyclic, then the APR predicate is totally valid.
\end{abstract}

\section{Introduction}
\label{sec:intro}

Recently, program verification approaches using \emph{logically constrained term rewrite systems} (LCTRSs, for short)~\cite{KN13frocos} have been extensively investigated~\cite{FKN17tocl,WM18,CL18,NW18vstte,KN18eptcs,KNS19ss,CLB23,MNKS23wst,KNM25jlamp,MFK25}.
LCTRSs are effective models of both functional and imperative programs. For instance, equivalence checking by means of LCTRSs is useful to ensure the correctness of terminating functions (cf.~\cite{FKN17tocl}).
Since the reduction of rewrite systems is in general non-deterministic, rewrite systems are reasonable models of concurrent programs.
A transformation of sequential programs into LCTRSs~\cite{FKN17tocl,KN18eptcs} has been extended to concurrent programs~\cite{KNM25jlamp}.
In addition, a method for runtime-error verification by means of \emph{all-path reachability analysis} of LCTRSs has been developed for, e.g., race and starvation freedom~\cite{KN23jlamp,KN23padl,KN24rp}.

An \emph{all-path reachability} (APR, for short) \emph{predicate} over an object set $A$ is a pair $\ReachPred{P}{Q}$ of a \emph{source} set $P$ and a \emph{target} set $Q$, which are subsets of $A$.
The APR predicate is said to be \emph{partially valid} (or \emph{demonically valid}~\cite[Definition~5]{CL18}) w.r.t.\ a rewrite system $\cR$, the reduction of which is defined over $A$, if every finite \emph{execution path}---a maximal reduction sequence---of $\cR$ starting from any element $s$ in $P$ includes an element in $Q$.
Partial validity of $\ReachPred{P}{Q}$ w.r.t.\ $\cR$ means that
every terminating execution from $P$ eventually reaches $Q$.
The \emph{APR problem} w.r.t.\ $\cR$ is a problem to determine whether given APR predicates are partially valid w.r.t.\ $\cR$ or not.

An \emph{abstract} APR framework, which is a framework for \emph{abstract reduction systems} (ARSs, for short), was first proposed~\cite{CL18}.
In the framework, partial validity of APR predicates w.r.t.\ an ARS $\cA=(A,\to_\cA)$ is defined by the two rules of the inference system $\DVP$ (Demonically Valid Predicate)~\cite{CL18},
which are shown in \Cref{fig:rules-of-DVP}.
Here, the \emph{derivative} $\Derivative[\cA](P)$ is the set of successors of elements in $P$ w.r.t.\ $\to_\cA$ and a set $P$ ($\subseteq A$) is said to be \emph{$\cA$-runnable} if $P\ne \emptyset$ and $P$ does not include any normal form of $\cA$.
To be more precise, the set of partially valid APR predicates w.r.t.\ $\cA$ is defined by the greatest fixed point of the functional of $\DVP$ parameterized by $\cA$.
Rule \RuleName{Subsumption} defines trivially partially valid APR predicates $\ReachPred{P}{Q}$ with $P \subseteq Q$, independent from $\cA$; since $P \subseteq Q$, every execution path starting from $P$ reaches $Q$ because the head element of the path is in $Q$.
Rule \RuleName{Step} first removes the execution paths starting with elements in $P\cap Q$, which are partially valid w.r.t.\ $\cA$, and then generates a subgoal $\ReachPred{\Derivative[\cA](P\setminus Q)}{Q}$ for the tail execution paths of those starting with elements $P\setminus Q$.
Since the inference system $\DVP$ w.r.t.\ $\cA$ can be used to prove partial validity of APR predicates w.r.t.\ $\cA$, it can be considered a proof system to construct (possibly infinite) derivation trees of given APR predicates.

\begin{figure}[t]
	\centering
	$
		\InfRule[Subsumption]{}{\ReachPred{P}{Q}
		}
		~\mbox{if $P \subseteq Q$.}
		\qquad
		\InfRule[Step]{\ReachPred{\Derivative[\cA](P\setminus Q)}{Q}
		}{\ReachPred{P}{Q}}
		~\mbox{if
			$(P \setminus Q)$ is $\cA$-runnable.
		}
	$
	\caption{Inference rules of $\DVP$~\cite{CL18} for partial validity of APR predicates w.r.t.\ an ARS $\cA=(A,\to_{\cA})$, where
		$\Derivative[\cA](P) = \{ t' \in A \mid \exists t \in P.\ t \to_{\cA} t' \}$.
	}
	\label{fig:rules-of-DVP}
\end{figure}

The abstract APR framework has been adapted to LCTRSs, and the proof system $\DCC$ (Demonic Circular Coinduction) for partial validity has been presented~\cite{CL18}.
An APR predicate over a signature $\Sigma$ is a pair $\ReachPred{\CTerm{s}{\phi}}{\CTerm{t}{\psi}}$ of constrained terms $\CTerm{s}{\phi},\CTerm{t}{\psi}$ over $\Sigma$, which represents sets of (ground) terms standing for configurations, states, etc (cf.~\cite{FKN17tocl,KN18eptcs,KN23padl}).
To simplify the discussion, we assume that the two constrained terms have no shared variables.
Note that $\cR$ induces the ARS $(T(\Sigma),\to_\cR)$ for the set $T(\Sigma)$ of ground terms over $\Sigma$, and a constrained term $\CTerm{s}{\phi}$ can be considered the set of ground normalized instances of $s$ by means of substitutions satisfying $\phi$.
The proof system $\DCC$ w.r.t.\ an LCTRS $\cR$ with an underlying theory $\cTheory$ consists of the four inference rules shown in \Cref{fig:rules-of-DCC}, where $\Delta_\cR(\CTerm{s}{\phi})$ is the set of successor constrained terms of $\CTerm{s}{\phi}$ w.r.t.\ constrained narrowing (cf.~\cite[Lemma~3.5]{KN24jip}):
$\Delta_\cR(\CTerm{s}{\phi}) = \{ \CTerm{s[r]_p}{\phi \land (s = s[\ell]_p) \land \varphi} \mid (\ell \to r ~\Constraint{\varphi}) \in \cR, \mbox{$s|_p$ is not a variable} \}$~\cite[Definition~11]{CL18},
where $\ell \to r ~\Constraint{\varphi}$ is renamed so that $\Var(s,\phi) \cap \Var(\ell,r,\varphi) = \emptyset$.
Here, $G$ given in advance is a set of APR predicates to be proved partially valid, e.g., a main goal and other APR predicates used as auxiliary lemmas to prove the main goal.
In applying \RuleName{Circ} to $\ReachPred{\CTerm{s}{\phi}}{\CTerm{t}{\psi}}$,
the APR predicate $\ReachPred{\CTerm{s'}{\phi'}}{\CTerm{t'}{\psi'}}$ in the side condition may be the same as $\ReachPred{\CTerm{s}{\phi}}{\CTerm{t}{\psi}}$.
For the soundness of $\DCC$, \RuleName{Der} has to be applied to all APR predicates in $G$. Note that the application of \RuleName{Subs} is followed by \RuleName{Circ} or either \RuleName{Axiom} or \RuleName{Der}.

\begin{figure}[t]
	\[
		\InfRule[Axiom]{}{\ReachPred{\CTerm{s}{\phi}}{\CTerm{t}{\psi}}
		}
		~\mbox{if $\phi$ is unsatisfiable w.r.t.\ $\cTheory$.}
	\]
	\[
		\InfRule[Subs]{\ReachPred{\CTerm{s}{\phi \land \neg (\exists \vec{x}.\ ((s = t)  \land \psi))}}{\CTerm{t}{\psi}}
		}{\ReachPred{\CTerm{s}{\phi}}{\CTerm{t}{\psi}}
		}
		~\mbox{if $(s = t) \land \phi \land \psi$ is satisfiable w.r.t.\ $\cTheory$,}
	\]
	\qquad
	where $\{\vec{x}\} = \Var(t,\psi) \setminus \Var(s,\phi)$. \[
		\InfRule[Der]{\ReachPred{\CTerm{s_1}{\phi_1}}{\CTerm{t}{\psi}}
			\quad
			\ldots
			\quad
			\ReachPred{\CTerm{s_n}{\phi_n}}{\CTerm{t}{\psi}}
		}{\ReachPred{\CTerm{s}{\phi}}{\CTerm{t}{\psi}}}
		~\mbox{if
			$\CTerm{s}{\phi}$ is $\cR$-runnable,
		}
	\]
	\qquad
	where
	$\Delta_\cR(\CTerm{s}{\phi}) = \{ \CTerm{s_i}{\phi_i} \mid 1 \leq i \leq n \}$
	for some $n>0$.
	\[
		\InfRule[Circ]{\ReachPred{\CTerm{t'}{\phi
					\wedge
					(\exists \vec{x}.\ (s = s') \wedge \phi') \wedge \psi'}}{\CTerm{t}{\psi}}
			\qquad
			\ReachPred{\CTerm{s}{\phi
					\wedge
					\neg
					(\exists \vec{x}.\ (s = s') \wedge \phi') }}{\CTerm{t}{\psi}}
		}{\ReachPred{\CTerm{s}{\phi}}{\CTerm{t}{\psi}}
		}
	\]
	\qquad
	where
	$(\ReachPred{\CTerm{s'}{\phi'}}{\CTerm{t'}{\psi'}}) \in G$
	and
	$\{\vec{x}\} = \Var(s',\phi')$.
	\caption{Inference rules of $\DCC$~\cite{CL18} for partial validity of APR predicates in $G$ of an LCTRS $\cR$ with an underlying theory $\cTheory$.}
	\label{fig:rules-of-DCC}
\end{figure}

The two proof systems $\DVP$ and $\DCC$ have no \emph{one-to-one correspondence} between their inference rules.
The main role of \RuleName{Circ} in $\DCC$ is the introduction of \emph{circularity} to proof trees of $\DCC$ so as to make the trees finite as in \emph{cyclic proofs}~\cite{Bro05}.
Since the initial role of $\DVP$ is to define partial validity, $\DVP$ does not consider circularity and is used to construct possibly infinite proof trees as in $\mbox{LKID}^\omega$~\cite[Chapter~4]{Bro06phd}.
When we do not use \RuleName{Circ} in $\DCC$ and allow the construction of infinite proof trees,
roughly speaking, rules \RuleName{Axiom}, \RuleName{Subs}, and \RuleName{Der} in $\DCC$ correspond to rules \RuleName{Subsumption} and \RuleName{Step} in $\DVP$, while there is no one-to-one correspondence between the inference rules in $\DCC$ and $\DVP$.
The absence of one-to-one correspondence makes the correctness proof of $\DCC$ non-trivial.
If $\DCC$ were clearly an instance of $\DVP$,
then the correctness of $\DCC$ would be immediate from that of $\DVP$;
it suffices to show that rules in $\DCC$ are instances of corresponding rules in $\DVP$.
In addition, the abstract framework is sufficiently useful to investigate APR-based approaches to verification of, e.g., \emph{safety} and \emph{liveness} properties, because concurrent transition systems which can be represented by ARSs are usual models for such properties.

In this paper, we aim to develop an abstract foundation for APR analysis toward runtime-error verification.
To this end, we revisit the framework for partial validity of APR analysis and adapt it to verification of not only safety but also liveness properties such as race and starvation freedom.

To apply APR analysis to runtime-error verification, a weakened but easily implementable variant of $\DCC$ for LCTRSs has been proposed for simpler APR predicates~\cite{KN23jlamp}.
The simplification goes well with the verification of \emph{safety} properties such as race freedom.
On the other hand, it requires the verification of \emph{liveness} properties such as \emph{starvation freedom} to introduce some approximations.
For example, a counter for the waiting time to acquire a semaphore has been introduced to ensure starvation freedom~\cite{KN23padl}:
states in which the counter value exceeds an upper limit specified in advance for verification are approximately considered error states.

The APR framework seems to be well-suited to verification of liveness properties:
We let $P_0$ be either the set of initial states or a set of intermediate states to reach states in a target set $Q$;
then, we attempt to prove partial validity of $\ReachPred{P_0}{Q}$;
the difference operation $P \setminus Q$ leaves states in $P$ that have not yet reached any state in $Q$.
The shortcoming of partial validity is that infinite execution paths do not have to be considered, while it does not matter for verification of safety properties:
Roughly speaking, reduction sequences ending with error states of safety properties are finite execution paths, and infinite execution paths can be excluded.
However, for general runtime-error verification, not only terminating executions but also non-terminating ones must be taken into account, because a runtime error to be verified may happen in a non-terminating execution.
For this problem, in~\cite{KN23jlamp}, simpler APR predicates have been introduced as mentioned above, and a given LCTRS is modified in order to make all finite prefixes of (possibly infinite) execution paths \emph{finite execution paths} of the modified LCTRS.
Unfortunately, the limitation of APR predicates eliminates the good compatibility that exists between the difference operation $P \setminus Q$ and liveness verification.

We first reformulate an abstract framework for partial validity of APR predicates w.r.t.\ ARSs so that there is a one-to-one correspondence between the inference rules for partial validity w.r.t.\ ARSs and LCTRSs.
To be more precise, we reformulate inference rules for partial validity w.r.t.\ ARSs (\Cref{sec:all-path-reachability-for-ARSs}), and adapt a cyclic-proof system for partial validity w.r.t.\ LCTRSs in~\cite[Section~3.2]{KN24jip} to ARSs (\Cref{sec:cyclic-proof-system-for-ARSs}), providing a formal description of proof trees for partial validity of APR predicates.
Secondly, we show how to apply APR analysis to safety verification (\Cref{sec:reduction-of-safety-properties-to-APR}).
Thirdly, to apply APR analysis to liveness verification, we introduce a novel, stronger validity of APR problems, called \emph{total validity}, which requires not only finite but also infinite execution paths to reach target sets (\Cref{subsec:total-validity}).
Note that partial validity is a necessary condition for total validity.
Finally, for a partially valid APR predicate with a cyclic-proof tree, we show that the acyclicity of the \emph{proof graph} obtained from the cyclic-proof tree is a necessary and sufficient condition for total validity, showing how to apply APR analysis to liveness verification (\Cref{subsec:proof-method-for-total-validity}).
All omitted proofs of the claims are provided in the appendix.

As mentioned above, partial validity takes into account finite execution paths only.
To apply the APR framework to runtime-error verification, however, \emph{all} (i.e., not only finite but also infinite) execution paths have to be taken into account.
To this end, we introduce total validity and
show that for a partially valid APR predicate with a proof tree in cyclic-proof style,
there is no cycle in the proof graph obtained from the tree if and only if the APR predicate is totally valid w.r.t.\ $\cA$.
As a consequence, our sufficient condition for total validity of $\ReachPred{P}{Q}$ w.r.t.\ $\cA$ is the existence of an APR proof, the proof graph of which is acyclic.
The cyclic-proof system for APR predicates provides the formal definition of proof graphs, which form the basis for the necessary and sufficient condition of total validity and, in turn, for the reduction of liveness properties to APR predicates.

Since ``demonical validity''~\cite{CL18} (partial validity in this paper) does not consider infinite execution paths and we consider a stronger validity for all execution paths, we renamed ``demonical validity'' to ``partial validity'', which originates from \emph{partial} and \emph{total} correctness of programs.
Notice that total correctness of programs requires termination of programs, but total validity in this paper does not require termination but requires all (possibly infinite) execution paths to reach elements in target sets.

The main contributions of this paper are
(i)
a reformulation of the APR framework for ARSs,
(ii)
a novel notion of validity---total validity---for APR predicates,
and
(iii)
a sufficient condition of total validity for, e.g., APR-based liveness verification.
As demonstrated in this paper for liveness verification, there must be ample scope for further extending the APR framework.
In addition, the APR framework can be ported to other kinds of rewrite systems due to its usefulness in program verification.
However, the abstract framework consisting of the two inference rules in $\DVP$ is too simple to discuss practical applications alongside several rewrite systems, such as LCTRSs.
Note that two formulations already exist for LCTRSs:
the original one~\cite{KN13frocos} and its extension~\cite{CL18}.
Therefore, the abstract APR framework formulated in this paper would be highly significant for further research on APR-based verification.

\section{Preliminaries}
\label{sec:preliminaries}

In this section, we briefly recall all-path reachability of ARSs~\cite{CL18}.
Familiarity with basic notions and notations on ARSs~\cite{BN98,Ohl02} is assumed.

An \emph{abstract reduction system} (ARS, for short) over an object set $A$ is a pair $(A,\to_\cA)$ such that $\to_\cA$ is a binary relation over $A$.
The reflexive closure of $\to_\cA$ is denoted by $\to_\cA^=$,
the transitive closure of $\to_\cA$ is denoted by $\to_\cA^+$, and
the reflexive and transitive closure of $\to_\cA$ is denoted by $\to_\cA^*$.
We denote the set of normal forms (irreducible elements) of $\cA$ by $\NF[\cA]$.
An element $s \in A$ is said to \emph{have a normal form} if there exists a normal form $b \in \NF[\cA]$ such that $a \mathrel{\to_\cA^*} b$.
In the remainder of the paper, we use an ARS $\cA = (A,\to_\cA)$ without notice.

The \emph{derivative} of a set $P$ ($\subseteq A$) is the set $\Derivative[\cA](P) = \{ t \in A \mid \exists s \in P.\ s \to_\cA t \}$.
The set $P$ is called \emph{runnable w.r.t.\ $\cA$} ($\cA$-runnable, for short) if $P\ne \emptyset$ and $P \cap \NF[\cA] = \emptyset$.
By definition, it is clear that $\Derivative[\cA](P)\ne\emptyset$ if and only if $P \setminus \NF[\cA] \ne \emptyset$.
We define execution paths of ARSs.
\begin{definition}[execution path~\cite{CL18}]
	Let $\tau$ be a (possibly infinite) reduction sequence $s_1 \to_\cA s_2 \to_\cA \cdots$  of $\cA$.
	We say that $\tau$ is an \emph{execution path} (of $\cA$) if $\tau$ is maximal, i.e., $\tau$ is either finite ending with an irreducible element or infinite.
\end{definition}
Note that execution paths of $\cA$ are defined coinductively by the inference rules in $\DVP$~\cite{CL18}.

Next, we define APR predicates over $A$.
\begin{definition}[APR predicate~\cite{CL18}]
	An \emph{all-path reachability} (APR, for short) \emph{predicate} over $A$ is a pair $\ReachPred{P}{Q}$ of $P,Q \subseteq A$.\footnote{APR predicates in this paper are called \emph{reachability property} and \emph{reachability formulas} for ARSs and LCTRSs, respectively, in~\cite{CL18}, and \emph{APR problems} in~\cite{KN23jlamp,KN23padl,KN24jip,Kojima24phd}.
		Since we deal with ``all-path reachability'' only, we only use ``all-path reachability predicate'' to unify the terminologies.}
	Note that $Q$ is not restricted to non-empty sets.
\end{definition}
Note that an APR predicate is defined over a set of objects, independent of any ARSs.

Let $\mathit{Rules}$ be a finite set of inference rules over a set $\cO$ of objects, which are of the form $\frac{A_1~\ldots~A_n}{A}$ with $A_1,\ldots,A_n \in \cO$.
$\widehat{\mathit{Rules}}$ stands for the functional of $\mathit{Rules}$ (see e.g.,~\cite[Appendix~A.2]{CL18corr}):
For a set $X$ ($\subseteq \cO$),
$\widehat{\mathit{Rules}}(X) = \{ A \mid \frac{A_1~\ldots~A_n}{A} \in \mathit{Rules},
	A_1,\ldots,A_n \in X \}$.
We denote the greatest fixed point of $\widehat{\mathit{Rules}}$ by $\nu\widehat{\mathit{Rules}}$, i.e., $\nu\widehat{\mathit{Rules}}$ is the greatest set $Y$ ($\subseteq \cO$) such that $Y \supseteq \widehat{\mathit{Rules}}(Y)$.

An APR predicate $\ReachPred{P}{Q}$ over $A$ is said to be \emph{partially valid} (or \emph{demonically valid}~\cite{CL18}) w.r.t.\ $\cA$ if each execution path starting from an element in $P$ either eventually reaches an element in $Q$ or is infinite.
Partial validity is defined coinductively as follows.
\begin{definition}[partial validity~\cite{CL18}]
	\label{def:DVP}
	An APR predicate $\ReachPred{P}{Q}$ over $A$ is said to be \emph{partially valid} w.r.t.\ $\cA$, written as $\PartiallyValid[\cA]{\ReachPred{P}{Q}}$,
	if $(\ReachPred{P}{Q}) \in \nu\widehat{\DVP[\cA]}$, where $\DVP$ consists of
	the rules \RuleName{Subsumption} and \RuleName{Step} in \Cref{fig:rules-of-DVP}.
	Note that $\DVP$ implicitly takes $\cA$ as a parameter.
	We write $\NotPartiallyValid[\cA]{\ReachPred{P}{Q}}$ if $\ReachPred{P}{Q}$ is not partially valid w.r.t.\ $\cA$.
\end{definition}

By definition, APR predicates trivially have the following properties for partial validity.
\begin{proposition}
	\label{prop:basic-properties-of-partially-valid-APR-problems}
	Let $P,P',Q,Q',R \subseteq A$.
	Then, all of the following hold:
	\begin{enumerate}
		\renewcommand{\labelenumi}{(\arabic{enumi})}
		\leftskip=1ex
		\item if
		      $\PartiallyValid{\ReachPred{P}{Q}}$ and $\PartiallyValid{\ReachPred{P'}{Q'}}$,
		      then
		      $\PartiallyValid{\ReachPred{(P\cup P')}{(Q\cup Q')}}$,
		\item if
		      $\PartiallyValid{\ReachPred{(P\cup P')}{Q}}$,
		      then
		      $\PartiallyValid{\ReachPred{P}{Q}}$ and $\PartiallyValid{\ReachPred{P'}{Q}}$,
		\item if
		      $\PartiallyValid{\ReachPred{P}{Q}}$,
		      then
		      $\PartiallyValid{\ReachPred{P}{Q'}}$
		      for any $Q'$ such that $Q \subseteq Q' \subseteq A$,
		\item if
		      $\PartiallyValid{\ReachPred{P}{Q}}$ and $\PartiallyValid{\ReachPred{Q}{R}}$,
		      then
		      $\PartiallyValid{\ReachPred{P}{R}}$,
		      and
		\item $\PartiallyValid{\ReachPred{P}{\emptyset}}$
		      if and only if
		      no element in $P$ has a normal form (i.e., $\{ t \in \NF[\cA] \mid \exists s \in P.\ s \mathrel{\to_\cA^*} t \} = \emptyset$).

	\end{enumerate}
\end{proposition}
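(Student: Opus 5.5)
The cleanest route is to first establish a path characterization of partial validity and then derive all five items from it. The characterization reads: $\PartiallyValid{\ReachPred{P}{Q}}$ if and only if every finite execution path starting from an element of $P$ contains an element of $Q$. We prove both directions separately.

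For the direction from validity to the path property, we argue by induction on the length $n$ of a finite execution path $s_1 \to_\cA \cdots \to_\cA s_n$ with $s_1\in P$. Since $\nu\widehat{\DVP}$ is a fixed point, $\ReachPred{P}{Q}$ is the conclusion of either \RuleName{Subsumption} or \RuleName{Step} with premise in $\nu\widehat{\DVP}$.
\begin{itemize}
\item If the rule is \RuleName{Subsumption}, then $s_1\in Q$ and we are done.
\item If the rule is \RuleName{Step} and $s_1\in Q$, we are again done.
\item If the rule is \RuleName{Step} and $s_1\notin Q$, then $s_1\in P\setminus Q$. This set is runnable, so $s_1\notin\NF[\cA]$, and hence $n\ge 2$. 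Then $s_2\in\Derivative[\cA](P\setminus Q)$, and the induction hypothesis applies to $\ReachPred{\Derivative[\cA](P\setminus Q)}{Q}$ and the shorter path starting at $s_2$.
\end{itemize}

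For the converse, by coinduction it suffices to show that the set $Y$ of all predicates satisfying the path property is a post-fixed point, i.e.\ $Y\subseteq\widehat{\DVP}(Y)$. Take $\ReachPred{P}{Q}\in Y$.
\begin{itemize}
\item If $P\subseteq Q$, \RuleName{Subsumption} applies.
\item Otherwise $P\setminus Q\neq\emptyset$, and no $s\in P\setminus Q$ is a normal form, since the one-element path $s$ would violate the property. So \RuleName{Step} is applicable, and its premise lies in $Y$: any finite execution path from $t\in\Derivative[\cA](P\setminus Q)$ can be prefixed by some $s\to_\cA t$ with $s\in P\setminus Q$. The resulting path meets $Q$, but $s\notin Q$, so the original path from $t$ meets $Q$.
\end{itemize}

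With the characterization in hand, each item is a short argument.
\begin{itemize}
\item[(1)] Any path from $P\cup P'$ starts in $P$ or in $P'$, so it meets $Q$ or $Q'$, hence meets $Q\cup Q'$.
\item[(2)] Restricting the source set only shrinks the set of paths that must be checked.
\item[(3)] A path meeting $Q$ also meets the larger set $Q'$.
\item[(4)] Given a finite path from $P$, take the first element $s_i\in Q$. The suffix starting at $s_i$ is itself a finite execution path, since maximality is preserved by suffixes. By validity of $\ReachPred{Q}{R}$ this suffix meets $R$.
\item[(5)] The path property for target $\emptyset$ says that there is no finite execution path from $P$. This is equivalent to no element of $P$ reaching a normal form, because any reduction $s\to_\cA^* t$ with $t\in\NF[\cA]$ is exactly a finite execution path, and conversely.
\end{itemize}

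The main obstacle is the coinductive direction of the characterization: one must check carefully that the runnability side condition of \RuleName{Step} follows from the path property, and that the post-fixed-point argument matches the paper's definition of $\nu\widehat{\DVP}$. The induction direction and items (1)–(5) are routine.
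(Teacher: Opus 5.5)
Your proof is correct and is essentially the paper's argument. The paper gives no separate proof and asserts the items hold ``by definition'', relying on the path reading of partial validity (every finite execution path from $P$ meets $Q$); you make the equivalence of that reading with $\nu\widehat{\DVP}$ explicit and then read off (1)--(5). You are right to use the post-fixed-point principle $Y \subseteq \widehat{\DVP}(Y)$: the paper's wording ``$Y \supseteq \widehat{\mathit{Rules}}(Y)$'' is a slip, since under it the greatest such $Y$ is the whole object set and every predicate would be valid.
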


For an APR predicate $\ReachPred{P}{Q}$, regarding partial validity,
\Cref{prop:basic-properties-of-partially-valid-APR-problems}~\bfnum{1}--\bfnum{2} implies
soundness and completeness of splitting $P$ into $n$ sets $P_1,\ldots,P_n$ (i.e., $P = P_1 \cup \cdots \cup P_n$):
$\PartiallyValid{\ReachPred{(P_1\cup\cdots\cup P_n)}{Q}}$
if and only if
$\PartiallyValid{\ReachPred{P_i}{Q}}$ for all $1 \leq i \leq n$.
On the other hand, splitting the target set $Q$ into $Q_1,\ldots,Q_n$ (i.e., $Q=Q_1\cup\cdots\cup Q_n$) is sound by \Cref{prop:basic-properties-of-partially-valid-APR-problems}~\bfnum{1}, but not complete in general:
if
$\PartiallyValid{\ReachPred{P}{Q_i}}$ for all $1 \leq i \leq n$,
then,
$\PartiallyValid{\ReachPred{P}{(Q_1\cup\cdots\cup Q_n)}}$,
but
the other direction does not hold in general (see \Cref{ex:cAabcd} below).

The inference rules in $\DVP$ can be considered a proof system for partial validity of APR predicates, i.e.,
$\PartiallyValid{\ReachPred{P}{Q}}$
if and only if there exists a (possibly infinite) proof tree obtained from $\ReachPred{P}{Q}$ by applying the rules of $\DVP$.

\begin{example}
	\label{ex:cAabcd}
	Consider the ARS $\cAabcd=(\{\symb{a},\symb{b},\symb{c},\symb{d}\},\to_{\cAabcd})$ such that
	${\to_{\cAabcd}} = \{
		(\symb{a},\symb{b}),
		(\symb{a},\symb{d}), \linebreak
		(\symb{b},\symb{a}),
		(\symb{b},\symb{c})
		\}$.
	We have that $\NotPartiallyValid[\cAabcd]{\ReachPred{\{\symb{a}\}}{\{\symb{c}\}}}$
	because
	the finite execution path $\symb{a} \to_{\cAabcd} \symb{d}$ does not reach $\symb{c}$, and thus
	$(\ReachPred{\{\symb{a}\}}{\{\symb{c}\}}) \notin \nu\widehat{\DVP[\cAabcd]}$ w.r.t.\ $\cAabcd$.
	In fact, only the rule \RuleName{Step} is applicable, and we obtain the following stuck incomplete proof tree:
	\begin{prooftree}
		\AxiomC{$\ReachPred{\{\symb{b},\symb{d}\}}{\{\symb{c}\}}$}
		\RightLabel{\small(\RuleName{Step})}
		\UnaryInfC{$\ReachPred{\{\symb{a}\}}{\{\symb{c}\}}$}
	\end{prooftree}
	On the other hand, we have that $\PartiallyValid[\cAabcd]{\ReachPred{\{\symb{a}\}}{\{\symb{c},\symb{d}\}}}$
	because $(\ReachPred{\{\symb{a}\}}{\{\symb{c},\symb{d}\}}) \in \nu\widehat{\DVP[\cAabcd]}$ w.r.t.\ $\cAabcd$, i.e., we have the following infinite proof tree:
	\begin{prooftree}
		\AxiomC{$\vdots$}
		\RightLabel{\small (\RuleName{Step})}
		\UnaryInfC{$\ReachPred{\{\symb{b},\symb{d}\}}{\{\symb{c},\symb{d}\}}$}
		\RightLabel{\small(\RuleName{Step})}
		\UnaryInfC{$\ReachPred{\{\symb{a},\symb{c}\}}{\{\symb{c},\symb{d}\}}$}
		\RightLabel{\small(\RuleName{Step})}
		\UnaryInfC{$\ReachPred{\{\symb{b},\symb{d}\}}{\{\symb{c},\symb{d}\}}$}
		\RightLabel{\small(\RuleName{Step})}
		\UnaryInfC{$\ReachPred{\{\symb{a}\}}{\{\symb{c},\symb{d}\}}$}
	\end{prooftree}
	To prove that $\PartiallyValid[\cAabcd]{\ReachPred{\{\symb{b},\symb{d}\}}{\{\symb{c},\symb{d}\}}}$, by \Cref{prop:basic-properties-of-partially-valid-APR-problems}~\bfnum{2}, we can split $\ReachPred{\{\symb{b},\symb{d}\}}{\{\symb{c},\symb{d}\}}$ into
	$\ReachPred{\{\symb{b}\}}{\{\symb{c},\symb{d}\}}$
	and
	$\ReachPred{\{\symb{d}\}}{\{\symb{c},\symb{d}\}}$, both of which are partially valid w.r.t.\ $\cAabcd$.
	On the other hand,
	$\ReachPred{\{\symb{b},\symb{d}\}}{\{\symb{c},\symb{d}\}}$ cannot be split into
	$\ReachPred{\{\symb{b},\symb{d}\}}{\{\symb{c}\}}$
	and
	$\ReachPred{\{\symb{b},\symb{d}\}}{\{\symb{d}\}}$, the former of which is not partially valid w.r.t.\ $\cAabcd$.
\end{example}

\section{Reformulation of Inference Rules for ARSs}
\label{sec:all-path-reachability-for-ARSs}

In this section, we reformulate the proof system $\DVP$ for partial validity w.r.t.\ ARSs to establish a one-to-one correspondence between the inference rules for ARSs and LCTRSs.

By abstracting the inference rules in $\DCC$ for LCTRSs in \Cref{fig:rules-of-DCC}, we obtain inference rules for ARSs, which have one-to-one correspondence with the inference rules for LCTRSs.

\begin{definition}[$\DVPnew$]
	\label{def:DVPnew}
	We define $\DVPnew$ consisting of the inference rules in \Cref{fig:rules-of-DVPnew}.
\end{definition}
In applying \RuleName{Subs} and \RuleName{Der} of $\DVPnew$ to $\ReachPred{P}{Q}$, we can freely split the results of $P \setminus Q$ and $\Derivative[\cA](P)$ into $n$ sets $P_1,\ldots,P_n$, respectively.
If $P_i=\emptyset$ for \RuleName{Subs} or \RuleName{Der}, then the generated subgoal $\ReachPred{P_i}{Q}$ can immediately be proved by \RuleName{Axiom}.
To avoid such redundant splits, we assume that
\begin{itemize}
	\item ``if $n > 1$ then
	      $\emptyset \subset P_i \subseteq P$ for all $1 \leq i \leq n$'' for \RuleName{Subs},
	      and
	\item ``$P_i \ne \emptyset$ for all $1 \leq i \leq n$'' for \RuleName{Der}.
\end{itemize}
Note that for an $\cA$-runnable set $P$, we have that $\Derivative[\cA](P)\ne\emptyset$ for \RuleName{Der}, because
$P \setminus \NF[\cA]  \ne \emptyset$.
Note also that $P_i$ and $P_j$ ($i \ne j$) may overlap, i.e., $P_i \cap P_j \ne \emptyset$.

\begin{figure}[t]
	\begin{minipage}[t]{.4\textwidth}
		\[
			\InfRule[Axiom]{}{\ReachPred{P}{Q}
			}
			~\mbox{if $P = \emptyset$.}
		\]
	\end{minipage}
	\begin{minipage}[t]{.59\textwidth}
		\[
			\InfRule[Subs]{\ReachPred{P_1}{Q}
				\quad
				\ldots
				\quad
				\ReachPred{P_n}{Q}
			}{\ReachPred{P}{Q}
			}
			~\mbox{if $P \cap Q  \ne \emptyset$,}
		\]
		\qquad
		where $P \setminus Q = P_1 \cup \cdots \cup P_n$ for some $n > 0$.
	\end{minipage}

	\begin{minipage}{.99\textwidth}
		\[
			\InfRule[Der]{\ReachPred{P_1}{Q}
				\quad
				\ldots
				\quad
				\ReachPred{P_n}{Q}
			}{\ReachPred{P}{Q}
			}
			~\mbox{if $P \cap Q = \emptyset$ and $P$ is $\cA$-runnable,}
		\]
		\qquad
		where $\Derivative[\cA](P) = P_1 \cup \cdots \cup P_n$ for some $n > 0$.
	\end{minipage}

	\caption{Rules of $\DVPnew$ for proving partial validity of APR predicates of an ARS $\cA$.}
	\label{fig:rules-of-DVPnew}
\end{figure}

By definition, the side conditions of rules in $\DVPnew$ are orthogonal, and thus,
for each APR predicate, at most one rule in $\DVPnew$ is applicable (cf.~\cite{KN24jip}).
\begin{restatable}{proposition}{PropUniquenessOfApplicableRulesInDVPnew}
	\label{prop:uniqueness-of-applicable-rules-in-DVPnew}
	Let $P,Q \subseteq A$.
	Then, at most one rule in $\DVPnew$ is applicable to the APR predicate $\ReachPred{P}{Q}$.
\end{restatable}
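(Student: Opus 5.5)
The plan is a direct case analysis on the side conditions of the three rules \RuleName{Axiom}, \RuleName{Subs}, and \RuleName{Der} in \Cref{fig:rules-of-DVPnew}. It suffices to show that the three side conditions are pairwise mutually exclusive. Then, for any $P,Q \subseteq A$, at most one of them holds, and hence at most one rule is applicable to $\ReachPred{P}{Q}$. The choice of the splitting $P_1,\ldots,P_n$ affects only the premises, not applicability. So ``applicable'' should be read as ``the side condition holds,'' with the existence of a suitable splitting being trivial: $n=1$ with $P_1 = P\setminus Q$ or $P_1 = \Derivative[\cA](P)$.

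I would treat the three pairs in turn.
\begin{itemize}
\item \RuleName{Axiom} versus \RuleName{Subs}. If $P=\emptyset$, then $P\cap Q=\emptyset$, so the condition $P\cap Q\ne\emptyset$ of \RuleName{Subs} fails.
\item \RuleName{Axiom} versus \RuleName{Der}. If $P=\emptyset$, then $P$ is not $\cA$-runnable, since runnability requires $P\ne\emptyset$; so \RuleName{Der} is not applicable.
\item \RuleName{Subs} versus \RuleName{Der}. The condition $P\cap Q\ne\emptyset$ of \RuleName{Subs} directly contradicts the conjunct $P\cap Q=\emptyset$ of \RuleName{Der}.
\end{itemize}
Since all three pairs are incompatible, no two rules can apply simultaneously, which gives the claim.

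One point needs care. With the convention stated after \Cref{def:DVPnew} (``$P_i \ne \emptyset$'' for \RuleName{Der}, and ``$\emptyset\subset P_i\subseteq P$ if $n>1$'' for \RuleName{Subs}), one might worry that the convention blocks a rule whose side condition holds. This only matters for existence, not uniqueness, so it does not affect the proof. Even so, I would note that for \RuleName{Der} the set $\Derivative[\cA](P)$ is nonempty when $P$ is runnable, so a valid splitting exists.

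I expect no genuine obstacle here; the only subtle step is making explicit that applicability is determined by the side conditions. The fact that the splittings are arbitrary shows that the \emph{instance} of a rule is not unique, but this is irrelevant to the claim, which concerns the rule itself.
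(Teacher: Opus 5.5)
Your proposal is correct and is essentially the paper's argument: both simply verify that the side conditions of \RuleName{Axiom}, \RuleName{Subs}, and \RuleName{Der} are mutually exclusive. The paper phrases this as a nested case split on $P=\emptyset$, $P\cap Q=\emptyset$, and $P\cap\NF[\cA]=\emptyset$ rather than as pairwise exclusion, which also isolates the one case where no rule applies; that case is reused later to show $\CDVPnew$ is exhaustive.
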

Note that the way of applying \RuleName{Subs} and \RuleName{Der} in $\DVPnew$ is not unique, because multiple decompositions of resulting subgoals are possible.

For any ARS, $\DVP$ and $\DVPnew$ coincide.
\begin{restatable}{theorem}{ThmEquivalenceOfDVPandDVPnew}
	\label{thm:equivalence-of-DVP-and-DVPnew}
	For any ARS $\cA$, $\nu\widehat{\DVP[\cA]} = \nu\widehat{\DVPnew[\cA]}$.
\end{restatable}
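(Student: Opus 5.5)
We show the two inclusions by coinduction: for each direction we exhibit a set $X$ containing the left-hand fixed point and check $X \subseteq \widehat{\cdot}(X)$ for the other functional, so that $X$ lies in the other greatest fixed point. Write $V = \nu\widehat{\DVP[\cA]}$ and $V_+ = \nu\widehat{\DVPnew[\cA]}$. Both are fixed points, so every element of $V$ (resp.\ $V_+$) is the conclusion of some rule instance of $\DVP$ (resp.\ $\DVPnew$) whose premises are again in $V$ (resp.\ $V_+$).

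\textbf{$V \subseteq V_+$.} Take $X = V \cup \{\ReachPred{P'}{Q} \mid \ReachPred{P}{Q} \in V,\ P' \subseteq P\}$, i.e., close $V$ under shrinking sources. I first check, by the coinduction principle, that $V$ itself is closed under subsets of sources; this is essentially \Cref{prop:basic-properties-of-partially-valid-APR-problems}~\bfnum{2} applied with $P = P' \cup (P\setminus P')$. Hence $X = V$.

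Now let $\ReachPred{P}{Q} \in V$ and distinguish the cases of the $\DVPnew$ side conditions.
\begin{itemize}
\item If $P = \emptyset$, then \RuleName{Axiom} applies.
\item If $P \cap Q \ne \emptyset$, use \RuleName{Subs} with $n=1$ and $P_1 = P\setminus Q$; the premise $\ReachPred{P\setminus Q}{Q}$ lies in $V$ by subset closure.
\item If $P\ne\emptyset$ and $P \cap Q = \emptyset$, then $P \not\subseteq Q$, so the $\DVP$ rule producing $\ReachPred{P}{Q}$ must be \RuleName{Step}. Therefore $P = P\setminus Q$ is runnable and $\ReachPred{\Derivative[\cA](P)}{Q} \in V$. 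Apply \RuleName{Der} with $n=1$ and $P_1 = \Derivative[\cA](P)$, which is nonempty by runnability.
\end{itemize}

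\textbf{$V_+ \subseteq V$.} This is the harder direction, because $\DVPnew$ allows arbitrary splitting while $\DVP$ has a single premise. Take $X = \{\ReachPred{P}{Q} \mid P = P_1\cup\cdots\cup P_n,\ n\ge 0,\ \ReachPred{P_i}{Q}\in V_+ \text{ for all } i\}$, i.e., finite unions of sources of $V_+$-predicates sharing the same target. We must show $X \subseteq \widehat{\DVP}(X)$.

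Given $\ReachPred{P}{Q} \in X$ with $P = \bigcup_i P_i$:
\begin{itemize}
\item If $P \subseteq Q$, \RuleName{Subsumption} applies.
\item Otherwise, show that $P\setminus Q$ is runnable and that $\ReachPred{\Derivative[\cA](P\setminus Q)}{Q} \in X$.
\end{itemize}

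For the second case, first unfold each $P_i$ through $\DVPnew$. Axiom cases contribute nothing. A \RuleName{Subs} case replaces $P_i$ by pieces covering $P_i \setminus Q$, all in $V_+$. After at most one such step per $i$, every piece $R$ either is empty or satisfies $R\cap Q=\emptyset$, and is the conclusion of \RuleName{Der}. In particular $R$ is runnable, and $\Derivative[\cA](R)$ is a finite union of sources of $V_+$-predicates.

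The pieces $R$ cover $P\setminus Q$, since each $P_i\setminus Q$ is covered and the pieces are disjoint from $Q$. Hence $P\setminus Q$ is nonempty (as $P\not\subseteq Q$) and contains no normal form, so it is runnable. Moreover, derivative distributes over unions, and every element of $P\setminus Q$ lies in some piece. One point needs care: a piece may contain elements of $P$ lying outside $P\setminus Q$, but pieces are disjoint from $Q$ and contained in $P$, so this cannot happen. Therefore $\Derivative[\cA](P\setminus Q) = \bigcup_R \Derivative[\cA](R)$, which is a finite union of $V_+$-sources, so the subgoal is in $X$.

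The main obstacle is exactly this bookkeeping in the second direction. One must handle the unrestricted splitting and possibly overlapping $P_i$, and verify that the pieces produced by \RuleName{Subs} cover exactly $P\setminus Q$ so that the derivatives match, rather than a proper subset or superset. The finite-union closure of $X$ is what resolves it.
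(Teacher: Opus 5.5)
Your proof is correct. Its case analysis matches the paper's. In one direction, \RuleName{Subsumption}/\RuleName{Step} are matched by \RuleName{Axiom}, \RuleName{Subs}, \RuleName{Der}. In the other, each \RuleName{Subs} is unfolded once into \RuleName{Axiom}/\RuleName{Der} pieces, and their derivatives are collected into a single \RuleName{Step} premise.

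Where you differ is in how the fixed points are handled. The paper argues by rule simulation. It obtains $\ReachPred{P_{i,j}}{Q} \in \nu\widehat{\DVP}$ ``by induction'', then uses \Cref{prop:basic-properties-of-partially-valid-APR-problems}~\bfnum{1} to put the union $\ReachPred{(\bigcup_{i,j} P_{i,j})}{Q}$ back into $\nu\widehat{\DVP}$. Members of a greatest fixed point may have only infinite derivations, so that induction has no well-founded measure; it is really a coinduction in disguise. It also rests on a union property that the paper states without proof.

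Your invariant $X$, finite unions of $\nu\widehat{\DVPnew}$-sources with a common target, builds union-closure into the coinduction hypothesis. A single post-fixed-point check therefore settles the hard direction, with neither induction nor \bfnum{1}.

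In the easy direction, you use subset-closure to make $\nu\widehat{\DVP}$ itself a post-fixed point of $\widehat{\DVPnew}$, instead of inserting intermediate \RuleName{Subs} nodes. You only ever need $\ReachPred{(P\setminus Q)}{Q} \in \nu\widehat{\DVP}$, and that is immediate: the \RuleName{Step} instance for $\ReachPred{P}{Q}$ (or \RuleName{Subsumption} when $P \subseteq Q$) also gives one for $\ReachPred{(P\setminus Q)}{Q}$. So the general subset-closure argument, which you announce but do not carry out, can be dropped.
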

By \Cref{thm:equivalence-of-DVP-and-DVPnew}, $\DVP$ and $\DVPnew$ define the same partially valid APR predicates and they are equivalent as proof systems for partial validity of APR predicates.
On the other hand, $\DVPnew$ has more flexibility than $\DVP$ w.r.t.\ the split of source sets of generated subgoals, in addition to the one-to-one correspondence with $\DCC$ in terms of description.

The disproof criterion for LCTRSs~\cite{KN24jip} is formulated for ARSs as follows.
\begin{proposition}
	\label{prop:disproof-criterion}
	Let $\ReachPred{P}{Q}$ be an APR predicate over $A$.
	If $P \cap Q = \emptyset$
	and
	$P \cap \NF[\cA] \ne \emptyset$,
	then
	$\NotPartiallyValid{\ReachPred{P}{Q}}$.
\end{proposition}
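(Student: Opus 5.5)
We argue by contradiction, using the coinductive characterization of partial validity via $\nu\widehat{\DVP[\cA]}$ from \Cref{def:DVP}.

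Suppose $\PartiallyValid{\ReachPred{P}{Q}}$ holds. Since $\nu\widehat{\DVP[\cA]}$ is a fixed point, we have $\nu\widehat{\DVP[\cA]} = \widehat{\DVP[\cA]}(\nu\widehat{\DVP[\cA]})$. Hence $\ReachPred{P}{Q}$ must be the conclusion of an instance of \RuleName{Subsumption} or of \RuleName{Step} whose premises all lie in $\nu\widehat{\DVP[\cA]}$. We check that neither rule's side condition can hold.

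For \RuleName{Subsumption}, we would need $P \subseteq Q$. Together with $P \cap Q = \emptyset$, this forces $P = \emptyset$. But this contradicts $P \cap \NF[\cA] \ne \emptyset$.

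For \RuleName{Step}, we would need $P \setminus Q$ to be $\cA$-runnable. Since $P \cap Q = \emptyset$, we have $P \setminus Q = P$. Runnability would then require $P \cap \NF[\cA] = \emptyset$, again contradicting the hypothesis.

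So no rule produces $\ReachPred{P}{Q}$ from any set, and $\ReachPred{P}{Q} \notin \widehat{\DVP[\cA]}(X)$ for every $X$. In particular it is not in the greatest fixed point, i.e., $\NotPartiallyValid{\ReachPred{P}{Q}}$.

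As a sanity check against the intended semantics, pick $s \in P \cap \NF[\cA]$. The one-element sequence $s$ is then a finite execution path starting in $P$. Its only element $s$ is not in $Q$, so this path never reaches $Q$.

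The only point needing care is the step from "$\ReachPred{P}{Q}$ is in the greatest fixed point" to "$\ReachPred{P}{Q}$ is the conclusion of some rule applied to members of that fixed point". This is the post-fixed-point inclusion $\nu\widehat{\DVP[\cA]} \subseteq \widehat{\DVP[\cA]}(\nu\widehat{\DVP[\cA]})$, which holds for greatest fixed points of monotone operators (Knaster--Tarski). Alternatively, one could argue via $\DVPnew$ using \Cref{thm:equivalence-of-DVP-and-DVPnew}: with $P \ne \emptyset$ and $P \cap Q = \emptyset$, none of \RuleName{Axiom}, \RuleName{Subs}, or \RuleName{Der} applies. This is the same case analysis as above.
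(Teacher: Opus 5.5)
Your proof is correct, but your main argument takes a different route from the paper's.

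The paper argues semantically. It picks $s \in P \cap \NF[\cA]$ and observes that the one-element sequence $s$ is a finite execution path from $P$ that never meets $Q$; this is essentially your ``sanity check'' paragraph. That is shorter and gives an explicit counterexample path. However, it tacitly relies on the execution-path reading of partial validity, i.e., on the fact (taken from \cite{CL18}, not proved in the paper) that membership in $\nu\widehat{\DVP[\cA]}$ forces every finite execution path from $P$ to meet $Q$.

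Your argument instead works directly from \Cref{def:DVP}. The side conditions of \RuleName{Subsumption} and \RuleName{Step} depend only on $P$, $Q$ and $\cA$, and both fail. Hence $\ReachPred{P}{Q} \notin \widehat{\DVP[\cA]}(X)$ for every $X$, so it cannot lie in $\nu\widehat{\DVP[\cA]} \subseteq \widehat{\DVP[\cA]}(\nu\widehat{\DVP[\cA]})$. This is self-contained. It also shows that the hypothesis is exactly the ``no rule applicable'' case in the proof of \Cref{prop:uniqueness-of-applicable-rules-in-DVPnew}, which is what motivates \RuleName{Dis}. Its limitation is that it only detects failure at the root. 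The path-based argument is the one that scales to failures deeper in a derivation, as in the disproof half of \Cref{thm:soundness-of-APR-proofs}.

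One small caveat: your key step reads $\nu$ as the greatest \emph{post}-fixed point ($Y \subseteq \widehat{\mathit{Rules}}(Y)$). The paper's phrase ``greatest $Y$ with $Y \supseteq \widehat{\mathit{Rules}}(Y)$'' is evidently a slip, since read literally it yields all of $\cO$ and the proposition would fail. Your reading is the intended one.
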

\begin{proof}
	Assume that $P \cap Q = \emptyset$ and $P \cap \NF[\cA] \ne \emptyset$.
	Then, there exists a normal form $s \in P \cap \NF[\cA]$ such that $s \notin Q$, and thus we have a finite execution path $s$ that does not reach $Q$.
	Therefore, we have that $\NotPartiallyValid{\ReachPred{P}{Q}}$.
\end{proof}

By definition, it is clear that $P\ne \emptyset$ and $P \cap \NF[\cA] \ne \emptyset$ if and only if $P \ne \emptyset$ and $P$ is not $\cA$-runnable.
\Cref{prop:disproof-criterion} implies an inference rule for disproof (cf.~\cite{KN24jip}).
\begin{definition}[$\CDVPnew$]
	\label{def:CDVP}
	We define $\CDVPnew$ consisting of the inference rules in $\DVPnew$ and the following rule for disproof:
	\[
		\InfRule[Dis]{\bot
		}{\ReachPred{P}{Q}}
		~\mbox{if
			$P \cap Q = \emptyset$,
			$P \ne \emptyset$,
			and
			$P$ is not $\cA$-runnable.
		}
	\]
	Note that $\bot$ is used as a special case---the \emph{invalid} one---of APR predicates.
\end{definition}
By definition, the side conditions of rules in $\CDVPnew$ are orthogonal and exhaustive, and thus,
for each APR predicate, exactly one rule in $\CDVPnew$ is applicable (cf.~\cite{KN24jip}).
\begin{restatable}{proposition}{PropUniquenessOfApplicableRulesInCDVPnew}
	\label{prop:uniqueness-of-applicable-rules-in-CDVPnew}
	Let $P,Q \subseteq A$.
	Then, exactly one rule in $\CDVPnew$ is applicable to the APR predicate $\ReachPred{P}{Q}$.
\end{restatable}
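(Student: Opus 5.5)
The statement is a pure case analysis on the side conditions of the four rules of $\CDVPnew$, namely \RuleName{Axiom}, \RuleName{Subs}, \RuleName{Der}, and \RuleName{Dis}. The plan is to show first exhaustiveness (at least one rule applies) and then orthogonality (at most one rule applies). For the latter, \Cref{prop:uniqueness-of-applicable-rules-in-DVPnew} already covers the three rules of $\DVPnew$, so only \RuleName{Dis} needs to be compared against them.

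\emph{Exhaustiveness.} Fix $P,Q \subseteq A$ and split into cases.
\begin{enumerate}
\item If $P=\emptyset$, then \RuleName{Axiom} applies.
\item If $P\ne\emptyset$ and $P\cap Q\ne\emptyset$, then \RuleName{Subs} applies. We must exhibit a decomposition $P\setminus Q = P_1\cup\cdots\cup P_n$ with $n>0$. Taking $n=1$ and $P_1 = P\setminus Q$ works, and this also respects the non-redundancy convention, since that convention only constrains the case $n>1$.
\item If $P\ne\emptyset$ and $P\cap Q=\emptyset$, split once more on whether $P$ is $\cA$-runnable.
\begin{enumerate}
\item If $P$ is $\cA$-runnable, then \RuleName{Der} applies. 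Here $\Derivative[\cA](P)\ne\emptyset$, as noted after \Cref{def:DVPnew}, because $P\setminus\NF[\cA]=P\ne\emptyset$. So we may take $n=1$ and $P_1=\Derivative[\cA](P)$, which satisfies the requirement $P_i\ne\emptyset$.
\item Otherwise \RuleName{Dis} applies, since its three conditions $P\cap Q=\emptyset$, $P\ne\emptyset$, and ``$P$ is not $\cA$-runnable'' all hold.
\end{enumerate}
\end{enumerate}

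\emph{Orthogonality.} By \Cref{prop:uniqueness-of-applicable-rules-in-DVPnew}, no two of \RuleName{Axiom}, \RuleName{Subs}, \RuleName{Der} are simultaneously applicable. It remains to check that \RuleName{Dis} excludes each of them.
\begin{itemize}
\item \RuleName{Dis} requires $P\ne\emptyset$, which excludes \RuleName{Axiom}.
\item \RuleName{Dis} requires $P\cap Q=\emptyset$, which excludes \RuleName{Subs}.
\item \RuleName{Dis} requires that $P$ is not $\cA$-runnable, which excludes \RuleName{Der}.
\end{itemize}
Combining exhaustiveness with orthogonality gives exactly one applicable rule.

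The main obstacle is a definitional point rather than a logical one. We must fix that ``applicable'' means the side condition holds and some admissible decomposition of the premises exists. Under this reading, the non-redundancy assumptions on splits could in principle block \RuleName{Subs} or \RuleName{Der} even when their side conditions hold. The trivial one-part splits used in the case analysis show this never happens.
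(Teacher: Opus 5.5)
Your proof is correct and essentially matches the paper's: the paper also invokes \Cref{prop:uniqueness-of-applicable-rules-in-DVPnew} for \RuleName{Axiom}, \RuleName{Subs}, and \RuleName{Der}, notes that the side condition of \RuleName{Dis} rules out each of them, and reuses that proposition's case analysis to see that the only remaining case ($P\ne\emptyset$, $P\cap Q=\emptyset$, $P\cap\NF[\cA]\ne\emptyset$) is exactly where \RuleName{Dis} applies. Your explicit check that the one-part splits meet the non-redundancy conventions is a small detail the paper leaves implicit.
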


\section{Cyclic Proof System for APR Problems of ARSs}
\label{sec:cyclic-proof-system-for-ARSs}

In this section, we adapt the cyclic-proof system for partial validity of APR predicates w.r.t.\ LCTRSs~\cite{KN24jip} to the abstract APR framework reformulated in \Cref{sec:all-path-reachability-for-ARSs}.
To this end, we first revisit \RuleName{Circ} in \Cref{fig:rules-of-DCC}, and then define APR pre-proofs for partial validity w.r.t.\ ARSs.

In proving partial validity of APR predicates using the inference rules such as $\CDVPnew$, we would like to construct finite proof trees if possible.
In the APR framework, a \emph{circularity} rule, \RuleName{Circ}, has been introduced~\cite[Definition~12]{CL18} (see \Cref{fig:rules-of-DCC}).
Roughly speaking, the rule can be considered as a composition of ``circularity'' in \emph{cyclic proofs}~\cite{Bro05} and ``cut'' in \emph{sequent calculus}.
Rule \RuleName{Circ} in \Cref{fig:rules-of-DCC} for LCTRSs is formulated for ARSs as follows:
\[
	\InfRule[Circ]{\ReachPred{Q'}{Q}
		\qquad
		\ReachPred{(P \setminus P')}{Q}
	}{\ReachPred{P}{Q}
	}
	~\mbox{if $(\ReachPred{P'}{Q'}) \in G$,}
\]
where $G$ is a given set of APR predicates over $A$.
Note that $\ReachPred{P'}{Q'}$ may be the same as $\ReachPred{P}{Q}$, while \RuleName{Der} has to be applied to $\ReachPred{P'}{Q'}$ somewhere for soundness.
Note also that $Q'$ of the subgoal $\ReachPred{Q'}{Q}$ may be improved by replacing it with $Q''$ such that $\{ t \in Q' \mid \exists s \in P\cap P'.\ s \mathrel{\to_\cA^*} t \} \subseteq Q'' \subseteq Q'$.
As stated in \Cref{sec:intro}, $G$ is assumed to be given in advance as a set of APR predicates to be proved partially valid.
In practice, we start with a single APR predicate to be proved partially valid, $G$ is the empty set;
when we apply \RuleName{Der} to an APR predicate, we add the predicates to $G$;
for soundness, for each APR predicate to which \RuleName{Circ} is applied, there must exist an APR predicate in $G$ to which \RuleName{Der} has already been applied, i.e.,
\RuleName{Der} must be applied to all APR predicates in $G$.

From the viewpoint of practical use, we do not consider ``cut'' because, as for the use of ``cut'' in proof theories, it is not so easy to split an APR predicate $\ReachPred{P}{Q}$ into appropriate ones $\ReachPred{P}{R}$ and $\ReachPred{R}{Q}$ for some set $R$;
deriving $R$ is similar to a human deriving a lemma.
Then, the circularity rule without ``cut'' is formulated for ARSs as follows:
\[
	\InfRule[Cyc]{}{\ReachPred{P}{Q}
	}
	~\mbox{if $P \subseteq P'$ and $Q \supseteq Q'$ for some $(\ReachPred{P'}{Q'}) \in G$,}
\]
where $G$ is a given set of APR predicates over $A$.
To distinguish the circularity rule without ``cut'' from \RuleName{Circ}, we named the former \RuleName{Cyc}, which originates from ``cyclic proofs''.
The role of \RuleName{Cyc} is not only ``circularity'' but also ``generalization'' of APR predicates:
$\ReachPred{P'}{Q'}$ is more general than $\ReachPred{P}{Q}$
in the sense that
if
$\PartiallyValid{\ReachPred{P'}{Q'}}$,
then
$\PartiallyValid{\ReachPred{P}{Q}}$
(\Cref{prop:basic-properties-of-partially-valid-APR-problems}~\bfnum{1}--\bfnum{2}).
Viewed in this light,
if $P=P'$ and $Q=Q'$, then \RuleName{Cyc} is just ``circularity'' in cyclic proofs,
and
otherwise, \RuleName{Cyc} generalizes $\ReachPred{P}{Q}$ to $\ReachPred{P'}{Q'}$, which is included in $G$ to be proved partially valid as, e.g., a more general subgoal for the main goal.
Note that ``cut'' is formulated as follows:
\[
	\InfRule[Cut]{\ReachPred{P}{Q'}
		\qquad
		\ReachPred{Q'}{Q}
	}{\ReachPred{P}{Q}
	}
\]
As in other proof systems, \RuleName{Cut} must be powerful but needs a heuristic for automation.
As a first step, we leave the introduction of \RuleName{Cut} to the proof system below as future work.

In~\cite{KN24jip}, a simpler proof system for partial validity w.r.t.\ LCTRSs has been formulated in the \emph{cyclic-proof} style~\cite{Bro05}.
In the proof system, the circularity rule is not explicitly used, but the \emph{bud}-\emph{companion} relationship of cyclic proofs is used instead in proof trees.
We formulate the cyclic-proof system for partial validity w.r.t.\ ARSs.
\begin{definition}[derivation tree of {$\CDVP$}]
	\label{def:APR-derivation-tree}
	An \emph{APR derivation tree} w.r.t.\ $\cA=(A,\to_\cA)$ is a finite tree $\cT=(V,\varsl{a},\varsl{r},\varsl{c})$ such that
	\begin{itemize}
		\item $V$ is a finite set of nodes,
		\item $\varsl{a}$ is a total mapping from $V$ to the set of APR predicates over $A$, which includes $\bot$ as an APR predicate, \item $\varsl{r}$ is a partial mapping from $V$ to $\CDVPnew$,
		\item $\varsl{c}$ is a partial mapping from $V$ to $V^*$ (we write $\varsl{c}_j(v)$ for the $j$-th component of $\varsl{c}(v)$) which is the $j$-th child of $v$,
		      and
		\item for all nodes $v \in V$, $\varsl{c}_j(v)$ is defined just in case $\varsl{r}(v)$ is a rule with $k$ premises ($1 \leq j \leq k)$, and $\frac{\varsl{a}(\varsl{c}_1(v)) ~ \ldots ~ \varsl{a}(\varsl{c}_k(v))}{\varsl{a}(v)}$ is an instance of rule $\varsl{r}(v)$,
		      where if $k=0$, then $\varsl{c}(v)=\epsilon$.
	\end{itemize}
	Note that a node $v \in V$ is a leaf if and only if either $\varsl{c}(v)=\epsilon$ or $\varsl{c}(v)$ is undefined.
	For a node $v$ with $\varsl{a}(v) = (\ReachPred{P}{Q})$, $\varsl{a}_L(v)$ and $\varsl{a}_R(v)$ denote $P$ and $Q$, respectively.
	A leaf $v$ of $\cT$ is said to be \emph{closed} if $\varsl{c}(v)=\epsilon$ or $\varsl{a}(v)=\bot$.
	A leaf $v$ of $\cT$ is said to be \emph{open} if it is not closed, i.e., $\varsl{c}(v)\ne\epsilon$ and $\varsl{a}(v)\ne\bot$.
	The set of open leaves of $\cT$ is denoted by $\Vopen$.
\end{definition}
Note that any node $v$ with $\varsl{a}(v)\ne \bot$ must have a rule attached, i.e., $\varsl{r}(v)$ must be defined.
By introducing the circularity relationship into APR derivation trees, we define APR pre-proofs.
\begin{definition}[APR pre-proof]
	\label{def:APR-pre-proof}
	An \emph{APR pre-proof} for an APR predicate $\ReachPred{P}{Q}$ w.r.t.\ $\cA$ is a pair $(\cT,\xi)$ of an APR derivation tree $\cT=(V,\varsl{a},\varsl{r},\varsl{c})$ (with $v_0$ the root node) and a partial mapping $\xi$ from $\Vopen$ to $V\setminus \Vopen$
	such that
	$\varsl{a}(v_0) = (\ReachPred{P}{Q})$,
	and
	for any open leaf $v$, if $\xi(v)$ is defined, then
	$\xi(v)$ is a node of $\cT$ such that $\varsl{a}(v)=\varsl{a}(\xi(v))$\footnote{\label{fnt:relaxation-of-bud-companion-relationships}
		From the viewpoint of \RuleName{Cyc}, the condition ``$\varsl{a}(v)=\varsl{a}(\xi(v))$'' can be relaxed to
		``$\varsl{a}_L(v) \subseteq \varsl{a}_L(\xi(v))$''.
		However, as a starting point, we do not use the relaxed condition and leave it as future work.}
	and $\varsl{r}(\xi(v)) = \mbox{\rm\RuleName{Der}}$.
	An open leaf $v$ with $\xi(v)$ defined is called a \emph{bud} node of $\cT$, and the node $\xi(v)$ is called a \emph{companion} of $v$.
	We denote the set of bud nodes in $V$ by $\Vbud$ ($\subseteq \Vopen$).
	The APR pre-proof $(\cT,\xi)$ is said to be \emph{closed} if
	any leaf $v \in V$ is either closed or a bud node.
	The APR pre-proof $(\cT,\xi)$ is said to be \emph{open} if $\cT$ is not closed, i.e., there exists an open leaf that is not a bud.
\end{definition}
Note that a companion does not have to be an ancestor of its bud nodes.
\Cref{tbl:APR-pre-proof-nodes} illustrates how $\varsl{a}$, $\varsl{r}$, $\varsl{c}$, and $\xi$ are defined for a node $v \in V$ regarding an APR pre-proof $((V,\varsl{a},\varsl{r},\varsl{c}),\xi)$.
Unlike the usual definition of companions in cyclic-proofs, for a bud node $v$, we required the additional condition ``$\varsl{r}(\xi(v)) = \mbox{\rm\RuleName{Der}}$''.
The reason for requiring the additional condition will be explained later.

\begin{table}
	\caption{How mappings $\varsl{a}$, $\varsl{r}$, $\varsl{c}$, and $\xi$ are defined for a node $v \in V$ regarding an APR pre-proof $((V,\varsl{a},\varsl{r},\varsl{c}),\xi)$.}
	\label{tbl:APR-pre-proof-nodes}
	\centering
	\begin{tabular}{|c|c|c|c|c|c|}
		\hline
		$\varsl{a}(v)$     & $\varsl{r}(v)$   & $\varsl{c}(v)$                               & open/closed     & $\xi$                                             \\
		\hline\hline
		                   & \RuleName{Axiom} & $\epsilon$                                   & closed leaf     &                                                   \\
		\cline{2-4}
		                   & \RuleName{Subs}  &                                              &                 &                                                   \\
		\cline{2-2}
		$\ReachPred{P}{Q}$ & \RuleName{Der}   & \raisebox{7pt}[0pt]{$v_1\ldots v_n$ ($n>0$)} & (internal node) & \raisebox{7pt}[0pt]{(not in the domain of $\xi$)} \\
		\cline{2-3}
		                   & \RuleName{Dis}   & $v_1$ such that $\varsl{a}(v_1)=\bot$        &                 &                                                   \\
		\cline{2-5}
		                   & undefined        & undefined                                    & open leaf       & defined or undefined                              \\
		\hline
		$\bot$             & undefined        & undefined                                    & closed leaf     & (not in the domain of $\xi$)                      \\
		\hline
	\end{tabular}
\end{table}

\begin{example}
	\label{ex:APR-pre-proof-of-cAabcd}
	Let us consider the ARS $\cAabcd$ in \Cref{ex:cAabcd} again.
	Let $v_0,v_1,\ldots,v_5$ be nodes such that
	\begin{itemize}
		\item $\varsl{a}(v_0) = (\ReachPred{\{\symb{a}\}}{\{\symb{c},\symb{d}\}})$, $\varsl{r}(v_0)=\mbox{\rm\RuleName{Der}}$, $\varsl{c}(v_0) = v_1$,
		\item $\varsl{a}(v_1) = (\ReachPred{\{\symb{b},\symb{d}\}}{\{\symb{c},\symb{d}\}})$, $\varsl{r}(v_1)=\mbox{\rm\RuleName{Subs}}$, $\varsl{c}(v_1) = v_2$,
		\item $\varsl{a}(v_2) = (\ReachPred{\{\symb{b}\}}{\{\symb{c},\symb{d}\}})$, $\varsl{r}(v_2)=\mbox{\rm\RuleName{Der}}$, $\varsl{c}(v_2) = v_3v_4$,
		\item $\varsl{a}(v_3) = (\ReachPred{\{\symb{a}\}}{\{\symb{c},\symb{d}\}})$, neither $\varsl{r}(v_3)$ nor $\varsl{c}(v_3)$ is defined,
		\item $\varsl{a}(v_4) = (\ReachPred{\{\symb{c}\}}{\{\symb{c},\symb{d}\}})$, $\varsl{r}(v_4)=\mbox{\rm\RuleName{Subs}}$, $\varsl{c}(v_4) = v_5$,
		\item $\varsl{a}(v_5) = (\ReachPred{\emptyset}{\{\symb{c},\symb{d}\}})$, $\varsl{r}(v_5)=\mbox{\rm\RuleName{Axiom}}$, $\varsl{c}(v_5) = \epsilon$,
		      and
		\item $\xi(v_3) = v_0$.
	\end{itemize}
	Then, $((\{v_0,v_1,\ldots,v_5\},\varsl{a},\varsl{r},\varsl{c}),\xi)$ is an APR pre-proof for $\ReachPred{\{\symb{a}\}}{\{\symb{c},\symb{d}\}}$ w.r.t.\ $\cAabcd$,
	which is visualized in \Cref{fig:cAabcd-proof}~(a).
\end{example}

\begin{figure}[t]
	\begin{tabular}{@{}c@{\quad}c@{}}
		\begin{minipage}{.55\textwidth}
			\begin{prooftree}\AxiomC{$v_3: ~ \ReachPred{\{\symb{a}\}}{\{\symb{c},\symb{d}\}}$\dag}
				\AxiomC{}
				\RightLabel{\small(\RuleName{Axiom})}
				\UnaryInfC{$v_5: ~ \ReachPred{\emptyset}{\{\symb{c},\symb{d}\}}$}
				\RightLabel{\small(\RuleName{Subs})}
				\UnaryInfC{$v_4: ~ \ReachPred{\{\symb{c}\}}{\{\symb{c},\symb{d}\}}$}
				\RightLabel{\small(\RuleName{Der})}
				\BinaryInfC{$v_2: ~ \ReachPred{\{\symb{b}\}}{\{\symb{c},\symb{d}\}}$}
				\RightLabel{\small(\RuleName{Subs})}
				\UnaryInfC{$v_1: ~ \ReachPred{\{\symb{b},\symb{d}\}}{\{\symb{c},\symb{d}\}}$}
				\RightLabel{\small(\RuleName{Der})}
				\UnaryInfC{$v_0: ~ \ReachPred{\{\symb{a}\}}{\{\symb{c},\symb{d}\}}$\dag}
			\end{prooftree}
		\end{minipage}
		 &
		\begin{minipage}{.12\textwidth}
			\[
				\xymatrix@R=8pt@C=5pt{
				&& *+[o][F]{v_5} \\
				&& *+[o][F]{v_4} \ar[u] \\
				& *+[o][F]{v_2}  \ar[ur] \ar@(ul,ul)[dd]\\
				& *+[o][F]{v_1} \ar[u] \\
				& *+[o][F]{v_0} \ar[u] \\
				}
			\]
		\end{minipage}
		\\[48pt]
		(a) APR pre-proof
		 & (b) proof graph \\
	\end{tabular}
	\caption{An APR pre-proof and its proof graph for the APR predicate $\ReachPred{\{\symb{a}\}}{\{\symb{c},\symb{d}\}}$, where $\dag$ indicates the bud-companion relationship.}
	\label{fig:cAabcd-proof}
\end{figure}

The proof graph obtained from a closed APR pre-proof is defined as follows.
\begin{definition}[proof graph]
	\label{def:proof-graph}
	The \emph{proof graph} of a closed APR pre-proof $(\cT,\xi)$ with $\cT=(V,\varsl{a},\varsl{r},\varsl{c})$ is a directed graph $(V',E)$ obtained from $\cT$ by identifying each bud node and its companion:
	\begin{itemize}
		\item $V' = V \setminus \Vopen$, where
		      each node in $V$ is associated with the APR predicate $\varsl{a}(v)$,
		      and
		\item
		      $(v,v') \in E$ if and only if
		      there exists a node $v'' \in V$ such that
		      $v''$ appears in $\varsl{c}(v)$,
		      if $v''$ is a bud node then $v'$ is a companion of $v''$, and otherwise, $v''=v'$.
	\end{itemize}
	For an edge $(v,v') \in E$ and a rule name $X \in \{ \mbox{\rm\RuleName{Axiom}}, \mbox{\rm\RuleName{Subs}}, \mbox{\rm\RuleName{Der}}, \mbox{\rm\RuleName{Dis}} \}$, we write $v \Path{\it X} v'$
	if $\varsl{r}(v)=\mbox{\it X}$.
\end{definition}

\begin{example}
	The proof graph of $((\{v_0,v_1,\ldots,v_5\},\varsl{a},\varsl{r},\varsl{c}),\xi)$ in \Cref{ex:APR-pre-proof-of-cAabcd} (i.e., \Cref{fig:cAabcd-proof}~(a)) is illustrated in \Cref{fig:cAabcd-proof}~(b).
\end{example}

By definition, nodes of proof graphs trivially have the following properties.
\begin{proposition}
	\label{prop:basic-properties-of-proof-graphs}
	Let $(\cT,\xi)$ be a closed APR pre-proof,
	$\cT= ((V,\varsl{a},\varsl{r},\varsl{c})$,
	$(V,E)$ be the proof graph of $(\cT,\xi)$,
	$v,v' \in V$,
	and
	$X \in  \{ \mbox{\rm\RuleName{Axiom}}, \mbox{\rm\RuleName{Subs}}, \mbox{\rm\RuleName{Der}}, \mbox{\rm\RuleName{Dis}} \}$.
	Then, all of the following statements hold:
	\begin{enumerate}
		\renewcommand{\labelenumi}{(\arabic{enumi})}
		\leftskip=1ex
		\item if $(v,v') \in E$, then $\varsl{a}_R(v) = \varsl{a}_R(v')$,
		\item if $v \toSubs v'$, then $(\varsl{a}_L(v) \setminus \varsl{a}_R(v)) \supseteq \varsl{a}_L(v')$,
		\item if $v \toDer v'$, then both of the following hold:
		      \begin{itemize}
			      \item for any $s \in \varsl{a}_L(v)$, there exist a node $v'' \in V$ and an element $t \in \varsl{a}_L(v'')$ such that
			            $v \toDer v''$ and $s \to_\cA t$,
			            and
			      \item for any $t \in \varsl{a}_L(v')$, there exists an element $s \in \varsl{a}_L(v)$ such that $s \to_\cA t$,
		      \end{itemize}
		\item if $v \toDis v'$, then $\varsl{a}(v') = \bot$,
		      and
		\item if $\varsl{r}(v) = \mbox{\rm\RuleName{Axiom}}$, then
		      $\varsl{a}_L(v) = \emptyset$ and $v$ has no outedge.
	\end{enumerate}
\end{proposition}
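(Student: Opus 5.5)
The plan is to unfold \Cref{def:proof-graph} and read off each item from the side conditions and premise/conclusion shapes of the rules of $\CDVPnew$ (\Cref{fig:rules-of-DVPnew} and \Cref{def:CDVP}). The one key observation is about edges. If $(v,v') \in E$, then there is a child $v''$ of $v$ in $\cT$ such that either $v'=v''$, or $v''$ is a bud and $v'=\xi(v'')$. In the bud case \Cref{def:APR-pre-proof} gives $\varsl{a}(v'')=\varsl{a}(v')$. So in both cases $\varsl{a}(v')=\varsl{a}(\varsl{c}_j(v))$ for some $j$. Every property of an edge therefore reduces to a property of the instance $\frac{\varsl{a}(\varsl{c}_1(v))\ \ldots\ \varsl{a}(\varsl{c}_k(v))}{\varsl{a}(v)}$ of the rule $\varsl{r}(v)$.

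With this, the items are handled one by one.

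\textbf{Item (1).} Every rule among \RuleName{Subs} and \RuleName{Der} keeps the target $Q$ unchanged in its premises. \RuleName{Axiom} has no premises, so it yields no edges. \RuleName{Dis} produces $\bot$; here I would read $\varsl{a}_R(\bot)$ as vacuous or adopt the convention for $\bot$, and I would note this caveat explicitly.

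\textbf{Item (2).} The premises of \RuleName{Subs} are the sets $P_i$ with $P\setminus Q = P_1\cup\cdots\cup P_n$. Hence $P_i\subseteq P\setminus Q$.

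\textbf{Item (4).} The only child of a \RuleName{Dis} node is labelled $\bot$. It is a closed leaf, not a bud, so $v'$ is that child.

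\textbf{Item (5).} \RuleName{Axiom} requires $P=\emptyset$ and has $\varsl{c}(v)=\epsilon$, so $v$ has no outedge.

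\textbf{Item (3).} This is the only item needing a small argument. Let $\varsl{a}(v)=(\ReachPred{P}{Q})$ with $\varsl{r}(v)=\RuleName{Der}$. Then $P$ is $\cA$-runnable and $\Derivative[\cA](P)=P_1\cup\cdots\cup P_n$, where $P_j=\varsl{a}_L(\varsl{c}_j(v))$.

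For the first bullet, take $s\in P$. Runnability means $s\notin\NF[\cA]$, so some $t$ satisfies $s\to_\cA t$. Then $t\in\Derivative[\cA](P)$, so $t\in P_j$ for some $j$. Let $v''$ be $\varsl{c}_j(v)$, or its companion if $\varsl{c}_j(v)$ is a bud. By the key observation, $t\in\varsl{a}_L(v'')$, and $v\toDer v''$ is an edge.

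For the second bullet, $v'$ corresponds to some child $\varsl{c}_j(v)$ with $\varsl{a}_L(v')=P_j\subseteq\Derivative[\cA](P)$. By the definition of the derivative, every $t\in P_j$ has a predecessor $s\in P$ with $s\to_\cA t$.

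The main point of care is not any one item but the bud/companion identification: one must check that redirecting edges to companions preserves labels. This is exactly the condition $\varsl{a}(v)=\varsl{a}(\xi(v))$ of \Cref{def:APR-pre-proof}. A second point is confirming that bud nodes are excluded from $V'=V\setminus\Vopen$ while companions lie in $V\setminus\Vopen$, so the edge endpoints are well-defined.
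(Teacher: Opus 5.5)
Your proposal is correct and follows the paper's approach: the paper gives no proof beyond saying these properties hold trivially by definition. Your unfolding (edges inherit the labels of tree children through $\varsl{a}(v)=\varsl{a}(\xi(v))$, each item is then read off the shape of $\varsl{r}(v)$, and runnability supplies the first bullet of (3)) is that argument written out. Your caveat on item (1) is justified, since for an edge out of a \RuleName{Dis} node the target is labelled $\bot$, for which $\varsl{a}_R$ is undefined, so (1) should be read as a claim about edges out of \RuleName{Subs} and \RuleName{Der} nodes.
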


Cyclic proofs satisfy the \emph{global trace condition}~\cite{Bro05}.
The application of rules of \emph{inductive predicates} is the measure of coinduction in the cyclic-proof setting, which ensures the soundness of cyclic proofs.
Roughly speaking, the global trace condition requires every infinite \emph{trace}---a sequence of atomic formulas in premises sets of sequents---to follow an infinite path of the proof graph, which goes through infinitely many edges corresponding to the application of rules of inductive predicates.
In our case, such edges correspond to those from nodes with $\mbox{\rm\RuleName{Der}}$ and for soundness of an APR proof, the APR proof has to satisfy the global trace condition, i.e., every infinite path of the proof graph has to go through node with \RuleName{Der} infinitely often.
To make proof graphs implicitly satisfy the global trace condition, we require companions to be nodes with $\mbox{\rm\RuleName{Der}}$;
the form of APR predicates can be considered the same as that of sequents, and each APR predicate has exactly one set corresponding to an atomic formula of the premises of sequents;
viewed in this light, traces correspond to paths of proof graphs of closed APR pre-proofs;
every cycle goes through companions infinitely often, and thus every infinite trace follows an infinite path that goes through infinitely many nodes with $\mbox{\rm\RuleName{Der}}$.

The requirement to the bud-companion relationship---companions have \RuleName{Der} attached---does not lose generality.
Let $((V,\varsl{a},\varsl{r},\varsl{c}),\xi)$ be an APR pre-proof for an APR predicate $\ReachPred{P}{Q}$ w.r.t.\ $\cA$, $v$ be a bud node, and $v'$ be a companion of $v$ (i.e., $\xi(v)=v'$).
If $\varsl{r}(v') \ne \mbox{\rm\RuleName{Der}}$, then we can drop the bud-companion relationship between $v$ and $v'$, obtaining another APR pre-proof for $\ReachPred{P}{Q}$ as follows:
\begin{itemize}
	\item if $\varsl{r}(v') \in \{\mbox{\rm\RuleName{Axiom}}, \mbox{\rm\RuleName{Dis}}\}$, then we let
	      $\varsl{r}(v) = \varsl{r}(v')$ and $\varsl{c}(v)=\varsl{c}(v')$,
	      and
	\item if $\varsl{r}(v')=\mbox{\rm\RuleName{Subs}}$ and $\varsl{c}(v')=v'_1\ldots v'_n$, then
	      we introduce new $n$ nodes $v_1,\ldots,v_n$ to $V$ and let
	      $\varsl{r}(v) = \mbox{\rm\RuleName{Subs}}$,
	      $\varsl{c}(v)=v_1\ldots v_n$,
	      and
	      $\varsl{a}(v_i) = \varsl{a}(v'_i)$ and $\xi(v_i)=v'_i$ for all $i \in \{1,\ldots,n\}$.
\end{itemize}
While the second case above introduces a new bud-companion relationship to the transformed APR pre-proof, the repetition of the above transformation halts because $v'_i$ is a child of $v'$ such that $\varsl{r}(v)=\mbox{\rm\RuleName{Subs}}$ and thus $\varsl{r}(v'_i) \in  \{\mbox{\rm\RuleName{Axiom}},\mbox{\rm\RuleName{Der}},\mbox{\rm\RuleName{Dis}}\}$.

We now define APR proofs and disproofs as APR pre-proofs satisfying certain conditions.
\begin{definition}[APR proof and disproof]
	\label{def:APR-proof}
	An APR pre-proof $((V,a,r,p),\xi)$ for an APR predicate $\ReachPred{P}{Q}$ w.r.t.\ $\cA$ is called
	an \emph{APR proof} if the domain of $\xi$ is $\Vopen$ (i.e., all open nodes are bud nodes) and there is no node $v \in V$ such that $\varsl{r}(v)= \mbox{\rm\RuleName{Dis}}$.
	The APR pre-proof $((V,a,r,p),\xi)$ is called
	an \emph{APR disproof} if there exists a node $v \in V$ such that $\varsl{r}(v)= \mbox{\rm\RuleName{Dis}}$.
\end{definition}
Note that a closed APR pre-proof is either an APR proof or an APR disproof.
Note also that an APR disproof may have open nodes that are not bud nodes.

\begin{example}
	The APR pre-proof in \Cref{fig:cAabcd-proof}~(a) is an APR proof for $\ReachPred{\{\symb{a}\}}{\{\symb{c},\symb{d}\}}$.
	The APR pre-proof in \Cref{fig:cAabcd-disproof} is an APR disproof for $\ReachPred{\{\symb{a}\}}{\{\symb{c}\}}$.
\end{example}

\begin{figure}[t]
	\begin{minipage}{.95\textwidth}
		\begin{prooftree}\AxiomC{$\bot$}
			\RightLabel{\small(\RuleName{Dis})}
			\UnaryInfC{$\ReachPred{\{\symb{b},\symb{d}\}}{\{\symb{c}\}}$}
			\RightLabel{\small(\RuleName{Der})}
			\UnaryInfC{$\ReachPred{\{\symb{a}\}}{\{\symb{c}\}}$}
		\end{prooftree}
	\end{minipage}
	\caption{An APR pre-proof for $\ReachPred{\{\symb{a}\}}{\{\symb{c}\}}$, which is an APR disproof.}
	\label{fig:cAabcd-disproof}
\end{figure}

APR proofs and disproofs are sound.
\begin{restatable}{theorem}{ThmSoundnessOfAPRProofs}
	\label{thm:soundness-of-APR-proofs}
	Let $\ReachPred{P}{Q}$ be an APR predicate over $A$.
	Then, both of the following statements hold:
	\begin{enumerate}
		\renewcommand{\labelenumi}{(\arabic{enumi})}
		\leftskip=1ex
		\item if there exists an APR proof for $\ReachPred{P}{Q}$ w.r.t.\ $\cA$, then
		      $\PartiallyValid{\ReachPred{P}{Q}}$,
		      and
		\item if there exists an APR disproof of $\ReachPred{P}{Q}$ w.r.t.\ $\cA$, then
		      $\NotPartiallyValid{\ReachPred{P}{Q}}$.
	\end{enumerate}
\end{restatable}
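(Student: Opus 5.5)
For (1), I use the coinduction principle for $\nu\widehat{\DVPnew[\cA]}$ together with \Cref{thm:equivalence-of-DVP-and-DVPnew}. Let $((V,\varsl{a},\varsl{r},\varsl{c}),\xi)$ be an APR proof for $\ReachPred{P}{Q}$. Define
\[
Y = \{ \varsl{a}(v) \mid v \in V\setminus\Vopen \}.
\]
Since there is no \RuleName{Dis} node, $\bot$ does not occur in $Y$, so $Y$ is a set of APR predicates over $A$. It suffices to show $Y \subseteq \widehat{\DVPnew[\cA]}(Y)$. Then $Y \subseteq \nu\widehat{\DVPnew[\cA]} = \nu\widehat{\DVP[\cA]}$, and the root predicate $\ReachPred{P}{Q}$ is partially valid. (The root is not an open leaf: if the tree is a single bud, its companion would have to be a \RuleName{Der} node, i.e.\ a non-open node, and there is none.)

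To check $Y \subseteq \widehat{\DVPnew[\cA]}(Y)$, take $v \in V\setminus\Vopen$. By \Cref{tbl:APR-pre-proof-nodes}, $\varsl{r}(v)$ is \RuleName{Axiom}, \RuleName{Subs}, or \RuleName{Der}, and the corresponding rule instance $\frac{\varsl{a}(\varsl{c}_1(v))\ \ldots\ \varsl{a}(\varsl{c}_k(v))}{\varsl{a}(v)}$ is an instance of a rule of $\DVPnew$. It remains to show each premise lies in $Y$. For a child $v'$ of $v$, either $v' \notin \Vopen$, so $\varsl{a}(v')\in Y$ directly; or $v'$ is open, hence a bud (all open leaves are buds), so $\varsl{a}(v') = \varsl{a}(\xi(v'))$ with $\xi(v') \in V\setminus\Vopen$, and again $\varsl{a}(v')\in Y$. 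Hence $\varsl{a}(v) \in \widehat{\DVPnew[\cA]}(Y)$.

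The step I expect to need care is precisely this: the companion's predicate must genuinely lie in the post-fixed-point set. Since companions are required to be non-open nodes with \RuleName{Der} attached, that holds, and no global trace argument is needed for partial validity.

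For (2), let $v$ be a node with $\varsl{r}(v)=$\RuleName{Dis}, and let $v_0=u_0,u_1,\ldots,u_m=v$ be the tree path from the root. I show by backward induction that $\NotPartiallyValid{\varsl{a}(u_i)}$ for each $i$.
\begin{itemize}
\item At $u_m$, this is \Cref{prop:disproof-criterion}.
\item If $u_i$ uses \RuleName{Subs}, then $\varsl{a}_L(u_{i+1}) \subseteq \varsl{a}_L(u_i)$ with the same target set, so the contrapositive of \Cref{prop:basic-properties-of-partially-valid-APR-problems}~(2) transfers invalidity upward.
\item If $u_i$ uses \RuleName{Der} with $\varsl{a}_L(u_i)\cap\varsl{a}_R(u_i)=\emptyset$, take a finite execution path from some $t\in\varsl{a}_L(u_{i+1})$ avoiding $Q$. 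By \Cref{prop:basic-properties-of-proof-graphs}~(3) it has a predecessor $s\in\varsl{a}_L(u_i)$ with $s\to_\cA t$. Since $s\notin Q$, prepending $s$ gives a finite execution path from $\varsl{a}_L(u_i)$ missing $Q$.
\end{itemize}
This path uses only tree edges, so buds are irrelevant for the disproof.
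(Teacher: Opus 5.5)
Your proof is correct. Part (2) follows the paper's route, but part (1) takes a different one.

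For (1), the paper argues by contradiction on the semantic side. It takes a finite execution path $s_0 \to_\cA \cdots \to_\cA s_k$ from $P$ that avoids $Q$ and ends in a normal form. Using \Cref{prop:basic-properties-of-proof-graphs}, it follows this path through the proof graph, with buds identified with their companions. The node whose source finally holds $s_k \notin Q$ cannot carry \RuleName{Axiom} or \RuleName{Der}, so it must carry \RuleName{Dis}, which contradicts the pre-proof being a proof.

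You instead stay on the coinductive side. You show that the labels of the non-open nodes form a post-fixed point of $\widehat{\DVPnew[\cA]}$, and then use \Cref{thm:equivalence-of-DVP-and-DVPnew} to transfer this to $\nu\widehat{\DVP[\cA]}$. What your route buys:
\begin{itemize}
\item It works directly with the formal definition of partial validity, so (1) does not need the execution-path characterization taken from the literature.
\item It avoids the bookkeeping of matching reduction steps against \RuleName{Der} edges and interleaved \RuleName{Subs} edges, which the paper handles only loosely.
\item It shows that soundness needs only that companions are non-open nodes; the \RuleName{Der} requirement on companions plays no role.
\end{itemize}
The price is that your argument depends on \Cref{thm:equivalence-of-DVP-and-DVPnew}. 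The paper's path-tracing argument, by contrast, is the one that carries over to infinite paths in the proof of \Cref{thm:total-validity-proof}.

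For (2), you do essentially what the paper does: walk the tree path from the \RuleName{Dis} node back to the root, extending the witness path by predecessors at \RuleName{Der} nodes. Your backward induction is just a cleaner way of organizing it. One small point: \Cref{prop:basic-properties-of-proof-graphs} is stated for proof graphs of closed pre-proofs, and a disproof need not be closed. The fact you actually use, $\varsl{a}_L(u_{i+1}) \subseteq \Derivative[\cA](\varsl{a}_L(u_i))$, follows directly from the side condition of \RuleName{Der}, so cite that instead.
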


When $\{ t \in A \mid \exists s \in P.\ s \to_\cA^* t \}$ is finite,
the APR predicate $\ReachPred{P}{Q}$ has either an APR proof or an APR disproof.
In addition, if the side conditions of rules in $\CDVPnew$ are decidable,
then partial validity of $\ReachPred{P}{Q}$ w.r.t.\ $\cA$ is decidable.

\begin{restatable}{theorem}{ThmFinitenessAndDecidabilityOfAPRPredicate}
	\label{thm:finiteness-and-decidability-of-APR-predicates}
	Let $\ReachPred{P}{Q}$ be an APR predicate over $A$,
	and $\Derivative[\cA]^*(P) = \{ t \in A \mid \exists s \in P.\ s \to_\cA^* t \}$.
	Then, both of the following statements hold:
	\begin{enumerate}
		\renewcommand{\labelenumi}{(\arabic{enumi})}
		\leftskip=1ex
		\item
		      If $\Derivative[\cA]^*(P)$ is finite, then there exists either an APR proof or disproof for $\ReachPred{P}{Q}$,
		      and
		\item if $\Derivative[\cA]^*(P)$ is finite and the emptiness, intersection emptiness, and $\cA$-runnability problems for $A$ are decidable, then partial validity of $\ReachPred{P}{Q}$ w.r.t.\ $\cA$
		      is decidable.
	\end{enumerate}
\end{restatable}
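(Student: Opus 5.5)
We prove (1) by building a closed APR pre-proof whose node labels all have the form $\ReachPred{P'}{Q}$ with $P' \subseteq \Derivative[\cA]^*(P)$. Since $\Derivative[\cA]^*(P)$ is finite, only finitely many such labels exist.

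The tree is grown from the root $\ReachPred{P}{Q}$ using the deterministic strategy given by \Cref{prop:uniqueness-of-applicable-rules-in-CDVPnew}. At each node exactly one rule of $\CDVPnew$ applies. We always take the trivial decomposition $n=1$ for \RuleName{Subs} and \RuleName{Der}, so the unique child of a \RuleName{Subs} node $\ReachPred{P'}{Q}$ is $\ReachPred{P'\setminus Q}{Q}$, and that of a \RuleName{Der} node is $\ReachPred{\Derivative[\cA](P')}{Q}$. Both children stay inside $\Derivative[\cA]^*(P)$, because this set contains $P$ and is closed under $\setminus Q$ and under $\Derivative[\cA]$.

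The tree is therefore a single path. We expand it until one of three things happens:
\begin{itemize}
	\item \RuleName{Axiom} closes the path;
	\item \RuleName{Dis} closes the path, and the result is a disproof;
	\item a node to which \RuleName{Der} would apply carries a label already borne by an earlier \RuleName{Der} node on the path. We then leave that node open as a bud, with the earlier node as its companion. This respects the requirement $\varsl{r}(\xi(v))=\RuleName{Der}$ of \Cref{def:APR-pre-proof}.
\end{itemize}

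Termination is the main point. Along the path, \RuleName{Subs} cannot be applied twice in a row: after a \RuleName{Subs} step the source set is disjoint from $Q$, so its side condition $P'\cap Q\ne\emptyset$ fails at the child. Hence \RuleName{Der} nodes occur at least every second step. There are only finitely many possible labels (at most $2^{|\Derivative[\cA]^*(P)|}$), so by pigeonhole either the path closes by \RuleName{Axiom} or \RuleName{Dis}, or some \RuleName{Der}-label repeats within finitely many steps.

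If \RuleName{Dis} is used, we obtain an APR disproof. Otherwise every open leaf is a bud and no \RuleName{Dis} node occurs, so by \Cref{def:APR-proof} we have an APR proof. This proves (1).

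For (2), every step of the construction is effective under the stated hypotheses:
\begin{itemize}
	\item choosing which rule applies requires deciding $P'=\emptyset$, $P'\cap Q=\emptyset$, and $\cA$-runnability, all assumed decidable;
	\item detecting a repeated label requires comparing the current source set with earlier ones. Since $P' \subseteq Q''$ iff $P'\setminus Q''=\emptyset$, equality of two source sets reduces to two emptiness checks on sets built by difference, which are decidable.
\end{itemize}
I am treating the emptiness oracle as applying to the sets built during the construction (differences, derivatives), and I would check that the statement's hypotheses cover these.

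The procedure halts by (1). By \Cref{thm:soundness-of-APR-proofs}, it yields an APR proof exactly when $\PartiallyValid{\ReachPred{P}{Q}}$, and an APR disproof exactly when $\NotPartiallyValid{\ReachPred{P}{Q}}$. So its output decides partial validity.
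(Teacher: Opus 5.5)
Your proposal is correct and follows the paper's own argument. Like the paper, you build the naive pre-proof without splitting source sets, choose rules deterministically via the uniqueness of applicable rules in $\CDVPnew$, make a node a bud whenever its label repeats that of an earlier \RuleName{Der} node, and get termination from the finiteness of the labels $\ReachPred{P'}{Q}$ with $P' \subseteq \Derivative[\cA]^*(P)$. Two of your points are more explicit than the paper's proof: that \RuleName{Subs} never occurs twice in a row (so a non-closing path must eventually repeat a \RuleName{Der} label), and that bud detection reduces to emptiness checks on set differences.
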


When we simultaneously prove two or more APR predicates, i.e., a set $G$ of APR predicates, to be partially valid, the notion of APR derivation trees and pre-proofs can be extended to forests:
We consider a set of APR pre-proofs for APR predicates in $G$;
the bud-companion relationship is allowed between nodes in different derivation trees.

The proof system $\DCC$ for LCTRSs~\cite{CL18} and its weakened variant~\cite{KN23jlamp} are instances of $\DVPnew$ and $\CDVPnew$, respectively,
and APR pre-proofs w.r.t.\ ARSs, which provide a concrete method for constructing APR (dis)proofs for APR predicates w.r.t.\ LCTRSs, are used as a foundation for both systems.
For the page limitation, in the rest of this section, we give an informal proof for $\DCC$ being an instance of $\DVPnew$.
A formal proof can be seen in the appendix.

Let $\cR$ be an LCTRS over a signature $\Sigma$ with an underlying theory $\cTheory$~\cite[Section~3]{CL18}.
LCTRS $\cR$ induces the ARS $(T(\Sigma),\to_\cR)$.
A \emph{constrained term} $\CTerm{s}{\phi}$ consisting of a term $s$ and a constraint $\phi$ represents the set of ground normalized instances $s\gamma$ w.r.t.\ $\phi$, where $\gamma$ is a ground normalized substitution and $\phi\gamma$ is evaluated to $\symb{true}$.
In light of this, we deal with constrained terms as the set of ground instances, and pairs of constrained terms can be considered APR problems over $T(\Sigma)$.
We show that each rule in $\DCC$ is an instance of the corresponding rule in $\DVPnew$ w.r.t.\ the ARS $(T(\Sigma),\to_\cR)$.
Let us consider an APR problem $\ReachPred{\CTerm{s}{\phi}}{\CTerm{t}{\psi}}$.
\begin{itemize}
	\item
	      By definition, it is clear that $\phi$ is unsatisfiable if and only if $\CTerm{s}{\phi}$ is the emptyset.
	      Thus, \RuleName{Axiom} in $\DCC$ is an instance of \RuleName{Axiom} in $\DVPnew$.
	\item
	      For a constrained rewrite rule $\ell \to r ~\Constraint{\varphi}$ in $\cR$,
	      satisfiability of $(s = t) \land \phi \land \psi$ means that
	      $\CTerm{s}{\phi} \cap \CTerm{t}{\psi} \ne \emptyset$, and
	      the constrained term $\CTerm{s}{\phi \land \neg (\exists \vec{x}.\ ((s = t)  \land \psi))}$
	      represents
	      the set $\CTerm{s}{\phi} \setminus \CTerm{t}{\psi}$, where $\{\vec{x}\} = \Var(t,\psi) \setminus(s,\phi)$.
	      Thus, \RuleName{Subs} in $\DCC$ is an instance of \RuleName{Subs} in $\DVPnew$.
	\item
	      We have that $\Delta_\cR(\CTerm{s}{\phi}) = \Derivative[(T(\Sigma),\to_\cR)](\CTerm{s}{\phi})$~\cite[Theorem~1]{CL18}.
	      By definition, $\cR$-runnability of $\CTerm{s}{\phi}$ is defined by $(T(\Sigma),\to_\cR)$-runnability of the set $\CTerm{s}{\phi}$.
	      Thus, \RuleName{Der} in $\DCC$ is an instance of \RuleName{Der} in $\DVPnew$.
	\item
	      To remove the ``cut'' function from \RuleName{Circ} in $\DCC$,
	      both ``$\CTerm{t'}{\psi'} \subseteq \CTerm{t}{\psi}$''
	      and
	      ``$\CTerm{s}{\phi} \subseteq \CTerm{s'}{\phi'}$''
	      are necessary.
	      Under these conditions, the premises are partially valid and can thus be dropped.
	      Then, a simplified variant without ``cut'' is obtained as follows:
	      \label{eq:Cyc-for-LCTRS}
	      \begin{equation*}\InfRule[Cyc]{}{\ReachPred{\CTerm{s}{\phi}\!}{\!\CTerm{t}{\psi}}
		      }
		      ~\mbox{if
			      $\CTerm{s}{\phi} \,{\subseteq}\, \CTerm{s'\!}{\phi'}$
			      and
			      $\CTerm{t}{\psi} \,{\supseteq}\, \CTerm{t'\!}{\psi'}$
			      for some $(\ReachPred{\CTerm{s'\!}{\phi'}\!}{\!\CTerm{t'\!}{\psi'}}) \,{\in}\, G$.}
	      \end{equation*}
	      By definition, the above \RuleName{Cyc} is an instance of \RuleName{Cyc} in $\DVPnew$.
\end{itemize}

\section{Reduction of Safety Properties to APR predicates}
\label{sec:reduction-of-safety-properties-to-APR}

Informally speaking, \emph{safety properties} specify that ``something bad never happens''~\cite[Section~3.3.2]{BK08}.
As a safety property w.r.t.\ a system and a set $E$ of error states, we consider the property that
any (possibly non-terminating) execution of the system never reaches any error state in $E$.
This kind of safety property is formulated as a problem for ARSs as follows:
\begin{description}
	\item[Instance] An ARS $\cA=(A,\to_\cA)$ and sets $P,E$ ($\subseteq A$)
	\item[Question] Is there no (possibly infinite) execution path $s_0 \to_\cA s_1 \to_\cA \cdots$ such that $s_0 \in P$ and
	      $s_i \in E$ for some $i \geq 0$?
\end{description}
Note that not all error states are, in general, irreducible.
This problem is the same as \emph{non-reachability} from $P$ to $E$ w.r.t.\ $\cA$.
On the other hand, this problem can be reduced to the APR problem~\cite{KN23jlamp}.
In this section, we revisit the reduction
of the non-reachability problem
to partial validity of APR predicates w.r.t.\ LCTRSs, and then reformulate it for ARSs.

Note that in general, the negation of partial validity of the APR predicate $\ReachPred{P}{E}$ is not equivalent to non-reachability from $P$ to $E$:
\begin{itemize}
	\item The negation of partial validity of $\ReachPred{P}{E}$ is that
	      there exists a finite execution path starting from a state in $P$, which does not include any error state in $E$, and
	\item non-reachability from $P$ to $E$ is that
	      there exists \emph{no} execution path starting from a state in $P$, which does not include any error state in $E$.
\end{itemize}

We first revisit the reduction to APR predicates w.r.t.\ LCTRSs in~\cite{KN24jip}.
For an LCTRS $\cR$ w.r.t.\ error states represented by constrained terms $\CTerm{e_1}{\psi_1},\ldots,\CTerm{e_k}{\psi_k}$,
the reduction proceeds as follows:
\begin{enumerate}
	\item
	      Fresh constants $\symb{success}$ and $\symb{error}$ are introduced to the signature of $\cR$.
	\item
	      Constrained rewrite rules to reduce any state, including an error state, to $\symb{success}$ are added to $\cR$.
	\item
	      Constrained rewrite rules $e_i \to \symb{error} ~ \Constraint{\psi_i}$ with $1 \leq i \leq k$ are added to $\cR$.
	\item
	      The non-reachability problem w.r.t.\ the error states $E$ is reduced to
	      the APR predicate $\ReachPred{I}{\{\symb{success}\}}$, where $I$ is the set of ground terms for initial states of a system. \end{enumerate}
Note that the introduced constants are normal forms of the modified LCTRS.
The role of $\symb{success}$ is to make all prefixes of any reduction sequence from an initial state a finite execution path ending with $\symb{success}$.
If an initial execution path reaches an error state represented by some $\CTerm{e_i}{\psi_i}$, then there exists a finite execution path ending with $\symb{error}$.

We now adapt the above approach to ARSs.
Usually, error states are irreducible, and thus, we consider only irreducible error states.
If an error state is reducible, then, as in the approach above for LCTRSs, we introduce a fresh constant such as $\symb{error}$, together with the additional reduction from the error states to $\symb{error}$, which is considered a dummy error state in the modified system.

Let us reconsider to reduce non-reachability (from $P$ to $E$ w.r.t.\ an ARS $\cA$) to a partially valid APR predicate $\ReachPred{P}{Q}$.\footnote{We do not consider to reduce to APR predicates of the form $\ReachPred{P'}{Q}$ such that $P' \ne P$, because we should consider all execution paths starting from $P$, and thus the APR predicate $\ReachPred{P}{P''}$ for some $P'' \subseteq P'$ needs to be partially valid w.r.t.\ $\cA$.
	When we choose an APR predicate of the form $\ReachPred{P}{Q}$, we do not have to find appropriate sets $P',P''$.}
For the reduced partially valid APR predicate, we do not have to take into account infinite execution paths, because any execution path ending with an error state is a finite execution path.
If $Q \cap E \ne \emptyset$, then the reduction is not sound, and thus $Q$ should have no error state in $E$: $Q \cap E = \emptyset$.
For the partial validity, all normal forms that are reachable from $P$ and are not error states should be included in $Q$:
$\{ t \in \NF[\cA] \setminus E \mid \exists s \in P.\ s \to_\cA^* t \} \subseteq Q$.
If $Q$ includes a reducible state, then the reduction may not be sound:
An error state may be reachable from the reducible state.
Thus, $Q$ should be a set of normal forms w.r.t.\ $\cA$.
We call such $\ReachPred{P}{Q}$ a \emph{safety APR predicate for $E$ w.r.t.\ $\cA$}.

\begin{definition}[safety APR predicate for error states]
	Let $E\subseteq \NF[\cA]$ be a set of error states.
	An APR predicate $\ReachPred{P}{Q}$ is called a \emph{safety predicate for $E$ w.r.t.\ $\cA$} if
	$Q \cap E = \emptyset$,
	$\{ t \in \NF[\cA] \setminus E \mid \exists s \in P.\ s \to_\cA^* t \} \subseteq Q$,
	and
	$Q \subseteq \NF[\cA]$.
\end{definition}

Safety APR predicates can be used for verification of safety properties.

\begin{restatable}{theorem}{ThmPartiallyValidSafetyProperties}
	\label{thm:partially-valid-safety-properties}
	Let $P, Q\subseteq A$, $E \subseteq \NF[\cA]$, and $\ReachPred{P}{Q}$ be a safety APR predicate for $E$ w.r.t.\ $\cA$. Then,
	$\PartiallyValid{\ReachPred{P}{Q}}$
	if and only if
	there is no finite execution path of $\cA$ that starts with an element in $P$ and includes an element in $E$.
\end{restatable}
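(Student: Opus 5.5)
The plan is to work with the path-based characterisation of partial validity instead of the greatest fixed point directly. Concretely, I would use the fact stated after \Cref{def:DVP}: $\PartiallyValid{\ReachPred{P}{Q}}$ holds iff every finite execution path starting in $P$ contains an element of $Q$. If this equivalence needs justification, one direction is an induction on the length of a finite path that does not meet $Q$. Along such a path $s_0\to_\cA\cdots\to_\cA s_n$, each goal in the $\DVP$ derivation is forced to be \RuleName{Step} (not \RuleName{Subsumption}), with $s_i \in P_i\setminus Q$, and the run ends at $s_n\in\NF[\cA]$, where neither rule applies. The other direction shows that the set of predicates satisfying the path condition is closed under $\widehat{\DVP[\cA]}$ in the required sense. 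This is routine, and I would simply invoke it.

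For the ``only if'' direction, I assume $\PartiallyValid{\ReachPred{P}{Q}}$ and take a finite execution path $\tau: s_0\to_\cA\cdots\to_\cA s_n$ with $s_0\in P$ that includes some $s_i\in E$. Since $E\subseteq\NF[\cA]$, $s_i$ is irreducible, so $i=n$. By partial validity, some $s_j\in Q$. Because $Q\subseteq\NF[\cA]$, $s_j$ is irreducible, so again $j=n$. Then $s_n\in Q\cap E=\emptyset$, a contradiction.

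For the ``if'' direction, I assume no finite execution path from $P$ includes an element of $E$, and take an arbitrary finite execution path $s_0\to_\cA\cdots\to_\cA s_n$ with $s_0\in P$. Then $s_n\in\NF[\cA]$ and $s_0\to_\cA^* s_n$. By hypothesis $s_n\notin E$, so $s_n\in\{t\in\NF[\cA]\setminus E\mid \exists s\in P.\ s\to_\cA^* t\}\subseteq Q$. Hence every finite execution path from $P$ reaches $Q$, which gives partial validity.

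The only potential obstacle is the link between the coinductive definition and the path characterisation. That link is standard, and the paper already asserts it informally, so I would cite it. Everything else is a direct consequence of the three defining conditions of a safety APR predicate together with $E\subseteq\NF[\cA]$.
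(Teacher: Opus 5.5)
Your proof is correct and follows the paper's argument. The paper argues both directions by contradiction rather than directly, but it uses the same facts: $E,Q\subseteq\NF[\cA]$ forces the relevant element to be the last one of the path, $Q\cap E=\emptyset$ settles the \emph{only-if} part, and the inclusion of reachable non-error normal forms in $Q$ settles the \emph{if} part. You also state explicitly that the argument relies on the path characterisation of $\nu\widehat{\DVP[\cA]}$, which the paper uses without comment.
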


\begin{figure}[t]
	\centering
	\scalebox{.95}{$\xymatrix@C=5ex{
		\mathit{PG}_0:~~  \ar[r] & \ovalbox{$\symb{noncrit}_0$} \ar[rrrr]^{\mbox{$\langle b_0,x \rangle:=\langle \symb{true},1 \rangle$}} &&&& \ovalbox{$\symb{wait}_0$} \ar[rrrr]^{\mbox{$x=0 \,\lor\,\lnot b_{1}$}} &&&& \ovalbox{$\symb{crit}_0$} \ar@/^12pt/[llllllll]^{\mbox{$b_0:=\symb{false}$}}
		}
	$}
	\\[5pt]
	\scalebox{.95}{$\xymatrix@C=5ex{
		\mathit{PG}_1:~~  \ar[r] & \ovalbox{$\symb{noncrit}_1$} \ar[rrrr]^{\mbox{$\langle b_1,x \rangle:=\langle \symb{true},0 \rangle$}} &&&& \ovalbox{$\symb{wait}_1$} \ar[rrrr]^{\mbox{$x=1 \,\lor\,\lnot b_{0}$}} &&&& \ovalbox{$\symb{crit}_1$} \ar@/^12pt/[llllllll]^{\mbox{$b_1:=\symb{false}$}}
		}
	$}
	\caption{Program graph $\mathit{PG}_i$ ($i=0,1$) for Peterson's mutual exclusion algorithm.}
	\label{fig:PG-Peterson}
\end{figure}

\begin{example}\label{example:peterson-ec}
	Let us consider the \emph{program graphs} $\mathit{PG}_0, \mathit{PG}_1$ in \Cref{fig:PG-Peterson} for \emph{Peterson's mutual exclusion algorithm}~\cite[Example~2.25]{BK08}.
	Two processes $P_0,P_1$ specified by $\mathit{PG}_0, \mathit{PG}_1$, respectively, share Boolean variables $b_0,b_1$ and an integer variable $x$;
	the Boolean variables $b_i$ indicates that $P_i$ wants to enter the critical section $\symb{crit}_i$;
	the integer variable $x$ stores the identifier of the process that has priority for the critical section at that time.
	A state of the \emph{asynchronous integer transition system} (AITS, for short) with shared variables~\cite{BK08} consisting of $\mathit{PG}_0$ and $\mathit{PG}_1$ is
	a tuple of a location of $\mathit{PG}_0$, a location of $\mathit{PG}_1$, and an \emph{assignment} from variables $b_0,b_1,x$ to values.
	The initial states of the AITS are
	$\State{\symb{noncrit}_0,\symb{noncrit}_1,\symb{false},\symb{false},m}$ with $m\in\{0,1\}$.
	Let $\StateSet{i}$ be the set $\{\symb{noncrit}_i,\symb{wait}_i,\symb{crit}_i\}$ ($i\in\{0,1\}$), $\mathbb{B}$ be the set $\{\symb{true},\symb{false}\}$, and $\cS$ be the set of assignments for variables $b_0,b_1:\sort{bool}$ and $x:\sort{int}$, i.e.,
	$\cS = \{ \{b_0\mapsto v_0, b_1 \mapsto v_1, x \mapsto m \} \mid v_0,v_1 \in \mathbb{B}, ~ m \in \mathbb{Z} \}$.
	We denote a state $\State{p_0,p_1,\sigma}$ by $\State{p_0,p_1,\sigma(b_0),\sigma(b_1),\sigma(x)}$, where $\sigma \in \cS$.
	The AITS consisting of $P_0$ and $P_1$ is represented by the following ARS:
	\[
		\cApet =
		(\{ \State{p_0,p_1,\sigma} \mid p_0 \in \StateSet{0}, \, p_1 \in \StateSet{1}, \, \sigma \in \cS \},
		\to_{\cApet})
	\]
	such that
	\begin{itemize}
		\item $\State{\symb{noncrit}_0,p_1,v_0,v_1,m} \to_{\cApet} \State{\symb{wait}_0,p_1,\symb{true},v_1,1}$,
		\item $\State{\symb{wait}_0,p_1,v_0,v_1,m} \to_{\cApet} \State{\symb{crit}_0,p_1,\symb{true},v_1,m}$
		      if $m=0$ or $v_1 = \symb{false}$,
		\item $\State{p_0,\symb{crit}_1,v_0,v_1,m} \to_{\cApet} \State{p_0,\symb{noncrit}_1,v_,\symb{false},m}$,
		\item $\State{p_0,\symb{noncrit}_1,v_0,v_1,m} \to_{\cApet} \State{p_0,\symb{wait}_1,v_0,\symb{true},0}$,
		\item $\State{p_0,\symb{wait}_1,v_0,v_1,m} \to_{\cApet} \State{p_0,\symb{crit}_1,v_0,\symb{true},m}$
		      if $m=1$ or $v_0 = \symb{false}$,
		      and
		\item $\State{p_0,\symb{crit}_1,v_0,v_1,m} \to_{\cApet} \State{p_0,\symb{noncrit}_1,v_0,\symb{false},m}$,
	\end{itemize}
	where
	$p_i \in \StateSet{i}$ with $i\in\{0,1\}$,
	$v_0,v_1 \in \mathbb{B}$,
	and
	$m \in \mathbb{Z}$.

	Let us consider the \emph{race freedom} of mutual exclusion---the two processes do not enter their critical sections simultaneously---for the AITS $\cApet$.
	There is no normal form reachable from an initial state.
	This property will be examined later, together with race freedom.
	The error states are $\State{\symb{crit}_0,\symb{crit}_1,\sigma}$ with $\sigma \in \cS$.
	Since all the error states are reducible w.r.t.\ $\cApet$,
	we introduce a fresh element $\symb{error}$ as a dummy error state to be verified, and add
	the reduction from the original error states to $\symb{error}$:
	$\State{\symb{crit}_0,\symb{crit}_1,\sigma} \to_{\cApet} \symb{error}$, where $\sigma \in \cS$.
	We let $\cApet'$ be the modified ARS $(\{ \State{p_0,p_1,\sigma} \mid p_0 \in \StateSet{0}, \, p_1 \in \StateSet{1}, \, \sigma \in \cS \},\to_{\cApet'})$, where
	${\to_{\cApet'}} = {\to_{\cApet}} \cup \{ (\State{\symb{crit}_0,\symb{crit}_1,\sigma}, \symb{error}) \mid \sigma \in \cS \}$.
	The race freedom is reduced to the APR predicate
	$
		(1) ~
		\ReachPred{\{ \State{\symb{noncrit}_0,\symb{noncrit}_1,\symb{false},\symb{false},m} \mid m \in \{0,1\} \}}{\emptyset}
	$.
	We obtain an APR proof illustrated in \Cref{fig:APR-proof-rf}, and thus, by \Cref{thm:soundness-of-APR-proofs},
	the APR predicate~(1) is partially valid w.r.t.\ $\cApet'$.
	Therefore,
	by \Cref{thm:partially-valid-safety-properties}, the AITS defined by the program graphs in \Cref{fig:PG-Peterson} is race-free and, by \Cref{prop:basic-properties-of-partially-valid-APR-problems}~\bfnum{5}, there is no normal form reachable from an initial state.
\end{example}

\begin{figure}[t]
	\def\ScoreOverhang{1pt}
	\begin{prooftree}\AxiomC{(2)\dag}
		\AxiomC{(3)\ddag}
		\BinaryInfC{(8)}
		\AxiomC{(3)\ddag}
		\RightLabel{\small(\RuleName{Der})}
		\UnaryInfC{(9)$\lozenge$}
		\RightLabel{\small(\RuleName{Der})}
		\BinaryInfC{(4)}
		\AxiomC{(9)$\lozenge$}
		\RightLabel{\small(\RuleName{Der})}
		\UnaryInfC{(5)}
		\RightLabel{\small(\RuleName{Der})}
		\BinaryInfC{(2)\dag}
		\AxiomC{(2)\dag}
		\RightLabel{\small(\RuleName{Der})}
		\UnaryInfC{(10)$\blacklozenge$}
		\RightLabel{\small(\RuleName{Der})}
		\UnaryInfC{(6)}
		\AxiomC{(10)$\blacklozenge$}
		\AxiomC{(2)\dag}
		\AxiomC{(3)\ddag}
		\BinaryInfC{(11)}
		\RightLabel{\small(\RuleName{Der})}
		\BinaryInfC{(7)}
		\RightLabel{\small(\RuleName{Der})}
		\BinaryInfC{(3)\ddag}
		\RightLabel{\small(\RuleName{Der})}
		\BinaryInfC{(1)~$\ReachPred{\{ \State{\symb{noncrit}_0,\symb{noncrit}_1,\symb{false},\symb{false},m} \mid m \in \{0,1\} \}}{\emptyset}$}
	\end{prooftree}

	\setlength{\columnsep}{3pt}
	\footnotesize
	\begin{multicols}{2}
		\begin{itemize}
			\itemsep-1pt
			\leftskip=2ex
			\item[(2)] $\ReachPred{\{\State{\symb{wait}_0,\symb{noncrit}_1,\symb{true},\symb{false},1}\}}{\emptyset}$
			\item[(3)] $\ReachPred{\{\State{\symb{noncrit}_0,\symb{wait}_1,\symb{false},\symb{true},0}\}}{\emptyset}$
			\item[(4)] $\ReachPred{\{\State{\symb{crit}_0,\symb{noncrit}_1,\symb{true},\symb{false},1}\}}{\emptyset}$
			\item[(5)] $\ReachPred{\{\State{\symb{wait}_0,\symb{wait}_1,\symb{true},\symb{true},0}\}}{\emptyset}$
			\item[(6)] $\ReachPred{\{\State{\symb{wait}_0,\symb{wait}_1,\symb{true},\symb{true},1}\}}{\emptyset}$
			\item[(7)] $\ReachPred{\{\State{\symb{noncrit}_0,\symb{crit}_1,\symb{false},\symb{true},0}\}}{\emptyset}$
			\item[(8)] $\ReachPred{\{\State{\symb{noncrit}_0,\symb{noncrit}_1,\symb{false},\symb{false},1}\}}{\emptyset}$
			\item[(9)] $\ReachPred{\{\State{\symb{crit}_0,\symb{wait}_1,\symb{true},\symb{true},0}\}}{\emptyset}$
			\item[(10)] $\ReachPred{\{\State{\symb{wait}_0,\symb{crit}_1,\symb{true},\symb{true},1}\}}{\emptyset}$
			\item[(11)] $\ReachPred{\{\State{\symb{noncrit}_0,\symb{noncrit}_1,\symb{false},\symb{false},0}\}}{\emptyset}$
		\end{itemize}
	\end{multicols}
	\caption{An APR proof for $\ReachPred{\{ \State{\symb{noncrit}_0,\symb{noncrit}_1,\symb{false},\symb{false},m} \mid m \in \{0,1\} \}}{\emptyset}$, where $\dag$, $\ddag$, $\lozenge$, and $\blacklozenge$
		indicate bud-companion relationships.
	}
	\label{fig:APR-proof-rf}
\end{figure}

For $\ReachPred{P}{Q}$ being a safety APR predicate for $E \subseteq \NF[\cA]$,
the target set $Q$ is required to satisfy that
$\{ t \in \NF[\cA] \setminus E \mid \exists s \in P.\ s \to_\cA^* t \} \subseteq Q$,
while it suffices to satisfy $\{ t \in \NF[\cA] \setminus E \mid \exists s \in P.\ s \to_\cA^* t \} = Q$.
It may be difficult for a given rewrite system $\cA$ to compute the set $\{ t \in \NF[\cA] \setminus E \mid \exists s \in P.\ s \to_\cA^* t \}$, and thus we allow to use an over-approximation $Q$ such that $\{ t \in \NF[\cA] \setminus E \mid \exists s \in P.\ s \to_\cA^* t \} \subseteq Q$.
On the other hand, as in~\cite{KN23jlamp}, a simpler modification is useful:
We introduce a fresh element $\symb{any}$ to $A$ and extend $\to_\cA$ to ${\to_\cA} \cup \{ (s,\symb{any}) \mid s \in A \setminus E \}$.
This modification is not an approximation.

\begin{restatable}{theorem}{ThmSoundnessAndCompletenessOfUsingAny}
	\label{thm:soundness-and-completeness-of-using-any}
	Let $\ReachPred{P}{Q}$ be a safety APR predicate for $E \subseteq \NF[\cA]$.
	Then,
	$\PartiallyValid[\cA]{\ReachPred{P}{Q}}$ if and only if
	$\PartiallyValid[(A\cup\{\symb{any}\}, {\to_\cA}\cup\{ (s,\symb{any}) \mid s \in A \setminus E \})]{\ReachPred{P}{\{\symb{any}\}}}$.
\end{restatable}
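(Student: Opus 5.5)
The plan is to reduce both sides of the equivalence to the same path condition: \emph{no element of $E$ is $\to_\cA$-reachable from $P$}. For the left-hand side this is almost immediate from \Cref{thm:partially-valid-safety-properties}. For the right-hand side I would analyse the execution paths of the modified ARS directly.

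Write $\cA' = (A\cup\{\symb{any}\}, {\to_{\cA'}})$ with ${\to_{\cA'}} = {\to_\cA}\cup\{ (s,\symb{any}) \mid s \in A \setminus E \}$. Throughout I use the path characterization of partial validity recalled in \Cref{sec:preliminaries}: $\PartiallyValid{\ReachPred{P}{Q}}$ iff every finite execution path starting from $P$ includes an element of $Q$.

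For the left-hand side, \Cref{thm:partially-valid-safety-properties} gives $\PartiallyValid[\cA]{\ReachPred{P}{Q}}$ iff no finite execution path of $\cA$ starting in $P$ includes an element of $E$. Since $E \subseteq \NF[\cA]$, such a path can contain an element of $E$ only as its last element. Moreover, every reduction sequence $s_0 \to_\cA \cdots \to_\cA s_n$ with $s_0\in P$ and $s_n \in E$ is already maximal. So the left-hand side is equivalent to: there is no $s\in P$ and $e\in E$ with $s \to_\cA^* e$.

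For the right-hand side, the first step is to determine the normal forms of $\cA'$. Every $s\in A\setminus E$ reduces to $\symb{any}$, elements of $E$ remain irreducible (no new edges leave them), and $\symb{any}$ is fresh and irreducible. Hence $\NF[\cA'] = E\cup\{\symb{any}\}$.

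Next I would describe the finite execution paths of $\cA'$. Since $\symb{any}$ is irreducible, it can occur only at the end of a path, so every other step of a path is a $\to_\cA$-step. A finite execution path from $s_0\in P$ therefore either ends in $\symb{any}$, or it is a $\to_\cA$-sequence ending in some $e\in E$ and never visiting $\symb{any}$. As $\symb{any}\notin A \supseteq P$, the second kind of path fails to reach $\{\symb{any}\}$. Thus $\PartiallyValid[\cA']{\ReachPred{P}{\{\symb{any}\}}}$ iff there is no $\to_\cA$-sequence from $P$ to $E$.

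Conversely, any $\to_\cA$-sequence from $P$ to some $e \in E$ is itself a finite execution path of $\cA'$ avoiding $\symb{any}$, because $e$ is irreducible in $\cA'$. Both sides are then equivalent to the same reachability statement, which proves the theorem.

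I expect the only delicate points to be bookkeeping. One is justifying that intermediate elements of a path lie in $A\setminus E$, which holds because elements of $E$ are irreducible and hence terminal. The other is relying on the path-based characterization of $\nu\widehat{\DVP}$, which I would cite from \cite{CL18} as the paper does rather than reprove. If a fully self-contained argument is wanted, I would instead exhibit the set
\[
\{\ReachPred{P'}{\{\symb{any}\}} \mid P' \subseteq A\cup\{\symb{any}\} \text{ and no element of } E \text{ is } \to_\cA\text{-reachable from } P'\cap A\}
\]
as a post-fixed point of $\widehat{\DVP[\cA']}$, using \RuleName{Step} together with the fact that the set $P'\setminus\{\symb{any}\}$, whenever non-empty, is $\cA'$-runnable.
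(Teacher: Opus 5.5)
Your proof is correct and is essentially the paper's argument: both rest on $\NF[\cA'] = E\cup\{\symb{any}\}$, on the observation that finite execution paths of $\cA'$ from $P$ avoiding $\symb{any}$ are exactly $\to_\cA$-sequences from $P$ ending in $E$, and on \Cref{thm:partially-valid-safety-properties}. The only difference is that you route both sides through the intermediate condition ``no element of $E$ is $\to_\cA$-reachable from $P$'' instead of arguing each direction by contradiction. (Your optional post-fixed-point alternative would only give the direction from non-reachability to $\PartiallyValid[\cA']{\ReachPred{P}{\{\symb{any}\}}}$; the converse would still need an induction on the length of the path into $E$. Your main argument does not depend on that alternative.)
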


One may think that non-reachability analysis is more suitable for safety properties.
However, APR analysis is essentially very similar to non-reachability analysis:
\begin{itemize}
	\item For non-reachability from $P$ to $E$, all states reachable from $P$ are examined, unless an error state reachable from $P$ is detected.
	\item During the construction of an APR pre-proof, all states reachable from $P$ are examined, unless a subset $P'$ with $P'\cap E \ne \emptyset$ is detected.
\end{itemize}

\section{Reduction of Liveness Properties to APR Problems}
\label{sec:reduction-of-liveness}

Informally speaking, \emph{liveness properties} state that ``something good'' will eventually happen for every execution~\cite[Section~3.4]{BK08}.
As a liveness property w.r.t.\ a system, a source set $P$, and a set $Q$ for ``something good'', we consider the property that
any (possibly non-terminating) execution of the system starting from any state in $P$ reaches a state in $Q$.
This kind of safety properties is formulated as a problem for ARSs as follows:
\begin{description}
	\item[Instance] An ARS $\cA=(A,\to_\cA)$ and sets $P,Q$ ($\subseteq A$)
	\item[Question] Does \emph{every} (possibly infinite) execution path $s_0 \to_\cA s_1 \to_\cA \cdots$ with $s_0 \in P$ include an element in $Q$ (i.e., $s_i \in Q$ for some $i \geq 0$)?
\end{description}
Unfortunately, this problem cannot be reduced to partial validity of $\ReachPred{P}{Q}$ w.r.t.\ $\cA$, while $\ReachPred{P}{Q}$ seems natural for liveness properties.
The shortcoming is that partial validity does not take into account any infinite execution path.
In this section, we introduce a stronger validity, called \emph{total validity}, which takes into account \emph{all} (possibly infinite) execution paths.
Then, for a partially valid APR predicate with an APR proof, we show a necessary and sufficient condition for the tree to ensure total validity, showing how to apply APR analysis to verification of liveness properties.

\subsection{Total Validity of APR Predicates w.r.t.\ ARSs}
\label{subsec:total-validity}

We first introduce a stronger validity which takes into account all execution paths.
Total validity of $\ReachPred{P}{Q}$ w.r.t.\ $\cA$ ensures that every execution path starting from an element in $P$ eventually reaches an element in $Q$.
\begin{definition}[total validity]
	An APR predicate $\ReachPred{P}{Q}$ over $A$ is said to be \emph{totally valid} w.r.t.\ $\cA$, written as $\TotallyValid{\ReachPred{P}{Q}}$, if
	every (possibly infinite) execution path $s_0 \to_\cA s_1 \to_\cA \cdots$ with $s_0 \in P$ includes an element in $Q$, i.e., $s_i \in Q$ for some $i \geq 0$.
	We write $\NotTotallyValid[\cA]{\ReachPred{P}{Q}}$ if $\ReachPred{P}{Q}$ is not totally valid w.r.t.\ $\cA$.
\end{definition}

\begin{example}
	\label{ex:APR-predicate-for-PG0-starvation-freedom}
	Let us continue with \Cref{example:peterson-ec}.
	Starvation freedom for the process $P_0$ specified by $\mathit{PG}_0$ in \Cref{fig:PG-Peterson}---$P_0$ can reach $\symb{crit}_0$ from $\symb{wait}_0$---is reduced to total validity of the following APR predicate:
	\[
		(12)~
		\ReachPred{\{ \State{\symb{wait}_0,p_1,\sigma} \mid p_1 \in \StateSet{1}, \sigma \in \cS, \sigma(b_0) \,{=}\, \symb{true} \}}{\{ \State{\symb{crit}_0,p_1,\sigma} \mid p_1 \in \StateSet{1}, \sigma \in \cS \}}
	\]
\end{example}

By definition, it is clear that total validity implies partial one.
In other words, partial validity is a necessary condition for total one.
Thus, to prove total validity, we should first prove partial validity and we usually attempt to construct an APR proof.
However, the existence of such an APR proof does not ensure total validity.
For this reason, we need an additional condition.
To this end, in the next section, we will show a sufficient condition for APR proofs to additionally imply total validity.

\subsection{A Criterion for Total Validity of Partial Valid APR Predicates}
\label{subsec:proof-method-for-total-validity}

Let us consider an APR proof and its proof graph for $\ReachPred{P}{Q}$.
If the proof graph has no cycle, then there is no infinite execution path that does not include any element in $Q$;
if there exists such a path, then either it is impossible to obtain a finite APR proof or the proof graph has a cycle.
Recall that APR proofs are finite derivation trees.

\begin{restatable}{theorem}{ThmTotalValidityProof}
	\label{thm:total-validity-proof}
	Let $((V,\varsl{a},\varsl{r},\varsl{c}),\xi)$ be a partially valid APR proof for an APR predicate $\ReachPred{P}{Q}$ w.r.t.\ $\cA$,
	and $(V',E)$ be the proof graph of the APR proof. Then, $(V',E)$ is acyclic
	if and only if
	$\TotallyValid{\ReachPred{P}{Q}}$.
\end{restatable}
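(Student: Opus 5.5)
The plan is to prove the two directions separately. In both, an execution path of $\cA$ is related to a path in the proof graph $(V',E)$ by tracking which node's source set contains the current element. The basic facts are \Cref{prop:basic-properties-of-proof-graphs}, together with the side conditions of \RuleName{Subs} and \RuleName{Der} in $\DVPnew$: children of a \RuleName{Subs} node $v$ cover $\varsl{a}_L(v)\setminus Q$, and children of a \RuleName{Der} node $v$ cover $\Derivative[\cA](\varsl{a}_L(v))$ with $\varsl{a}_L(v)\cap Q=\emptyset$. By \Cref{prop:basic-properties-of-proof-graphs}~(1), every node carries the same target set $Q$.

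\emph{Acyclic $\Rightarrow$ totally valid.} I will argue by contraposition. Suppose some execution path $s_0\to_\cA s_1\to_\cA\cdots$ with $s_0\in P$ never meets $Q$. If the path is finite, then $\NotPartiallyValid{\ReachPred{P}{Q}}$, contradicting \Cref{thm:soundness-of-APR-proofs}~(1). So the path is infinite. I then build, by induction, an infinite walk $u_0,u_1,\ldots$ in $(V',E)$ starting from the root $u_0=v_0$, together with a nondecreasing index map $i(k)$ such that $s_{i(k)}\in\varsl{a}_L(u_k)$.
\begin{itemize}
\item If $\varsl{r}(u_k)=\mbox{\rm\RuleName{Subs}}$, then $s_{i(k)}\notin Q$, so $s_{i(k)}$ lies in some child. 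I move to that child, or to its companion if the child is a bud (they carry the same predicate), and keep the index.
\item If $\varsl{r}(u_k)=\mbox{\rm\RuleName{Der}}$, then $s_{i(k)+1}\in\Derivative[\cA](\varsl{a}_L(u_k))$, so it lies in some child. I move there (or to its companion) and increase the index by one.
\item \RuleName{Axiom} is impossible, since the source set is nonempty. \RuleName{Dis} does not occur in an APR proof.
\end{itemize}
This walk never terminates. In a finite graph an infinite walk revisits a node, which gives a cycle and contradicts acyclicity.

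\emph{Cycle $\Rightarrow$ not totally valid.} This is the direction I expect to be the main obstacle. Take a cycle $C$ in $(V',E)$. Every cycle passes through a companion, hence through a \RuleName{Der} node. No node of $C$ is an \RuleName{Axiom} node, since those have no outedges (\Cref{prop:basic-properties-of-proof-graphs}~(5)). So every node on $C$ has a nonempty source set disjoint from $Q$: for \RuleName{Der} nodes this is the side condition, and for children of \RuleName{Subs} nodes it follows from \Cref{prop:basic-properties-of-proof-graphs}~(2).

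The goal is an infinite $\to_\cA$-path avoiding $Q$ that starts in $P$. My first attempt is to go backwards along $C$ using the second bullet of \Cref{prop:basic-properties-of-proof-graphs}~(3): every element of a \RuleName{Der}-child has a predecessor in the parent, and \RuleName{Subs} edges shrink sets. Repeating around $C$, and then back along the tree path to the root, yields arbitrarily long $Q$-avoiding reduction sequences from $P$.

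Turning these into a single infinite one is the hard point. The sets may be infinite, so König's lemma is not directly available. I would instead define the greatest set $Z$ of pairs $(u,s)$, with $u$ on $C$ and $s\in\varsl{a}_L(u)$, such that $s$ has a successor (or, across a \RuleName{Subs} edge, itself) lying in $Z$ at the next node of $C$. I would then try to show that $Z\neq\emptyset$, using that the bud and its companion carry identical predicates, so going around $C$ returns to the same set. From any pair in $Z$, following $Z$ gives an infinite path avoiding $Q$, and this path is reachable from $P$ via the backward chain above. I am not certain that $Z\neq\emptyset$ holds in full generality; if it fails, this direction would need an extra finiteness hypothesis, for instance finitely branching $\cA$ or finite source sets on the cycle.
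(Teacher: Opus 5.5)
Your argument that an acyclic proof graph yields $\TotallyValid{\ReachPred{P}{Q}}$ is correct and is essentially the paper's, with the infinite walk through $(V',E)$ spelled out more carefully.

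The converse cannot be completed, because it is false as stated. Your suspicion that $Z$ may be empty is exactly right. Take $\cA=(\mathbb{N},\to_\cA)$ with $n+1\to_\cA n$ for all $n$, $Q=\{0\}$ and $P=\mathbb{N}\setminus\{0\}$.
The root $v_0:\ReachPred{P}{Q}$ admits \RuleName{Der}, since $P$ is $\cA$-runnable and disjoint from $Q$. Its single child is $v_1:\ReachPred{\mathbb{N}}{Q}$, because $\Derivative[\cA](P)=\mathbb{N}$. Then $v_1$ admits \RuleName{Subs}, with single child $v_2:\ReachPred{P}{Q}$, which is a bud with companion $v_0$.
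This is an APR proof whose proof graph is the cycle $v_0\to v_1\to v_0$. Yet the only execution path from any $n\geq 1$ is $n\to_\cA n-1\to_\cA\cdots\to_\cA 0$, so $\TotallyValid{\ReachPred{P}{Q}}$ holds.

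The paper's own proof of this direction fails at the same spot where you got stuck. From ``every $s\in\varsl{a}_L(v)$ has a $Q$-avoiding $\to_\cA^+$-predecessor in $\varsl{a}_L(v)$'' it concludes that an infinite execution path exists. That only gives an infinite \emph{backward} chain (here $\cdots\to_\cA 3\to_\cA 2\to_\cA 1$), not a forward one.

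Of the repairs you mention, finite branching does not help, since the example above is deterministic. Finiteness of the source set $S$ of some \RuleName{Der} node $w$ on the cycle (say a companion) does work:
\begin{enumerate}
\item Tracing backwards once around the cycle with \Cref{prop:basic-properties-of-proof-graphs}~(2)--(3) gives every $s\in S$ a predecessor $s'\in S$ with $s'\to_\cA^+ s$ through elements outside $Q$. There is at least one step, since the edge leaving $w$ is a \RuleName{Der} edge.
\item Iterating inside the finite set $S$ must repeat an element, which produces a $Q$-free loop $s\to_\cA^+ s$.
\item Tracing from $w$ up the tree to the root gives a $Q$-free sequence from an element of $P$ to $s$.
\end{enumerate}
Together these form an infinite execution path from $P$ that never meets $Q$. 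So your second direction goes through under such a hypothesis, for instance when $\Derivative[\cA]^*(P)$ is finite. In general only the implication you proved survives, which is all that \Cref{cor:sufficient-condition-for-total-validity} uses.

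A smaller slip: \RuleName{Subs} nodes can lie on the cycle (like $v_1$ above), and their source sets do meet $Q$. Only the elements you actually trace avoid $Q$, not the whole source sets on $C$.
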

Once we construct an APR proof, the total validity is decidable.
\begin{corollary}\label{cor:decidability-of-total-validity-proof}
	Let $((V,\varsl{a},\varsl{r},\varsl{c}),\xi)$ be an APR proof for a partially valid APR predicate $\ReachPred{P}{Q}$ w.r.t.\ $\cA$.
	Then, total validity of $\ReachPred{P}{Q}$ w.r.t.\ $\cA$ is decidable.
\end{corollary}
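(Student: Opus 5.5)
The plan is to reduce the question to a purely combinatorial check on a finite graph, using \Cref{thm:total-validity-proof}. By \Cref{thm:soundness-of-APR-proofs}~\bfnum{1}, the given APR proof already certifies $\PartiallyValid{\ReachPred{P}{Q}}$, so the hypotheses of \Cref{thm:total-validity-proof} hold. That theorem gives $\TotallyValid{\ReachPred{P}{Q}}$ if and only if the proof graph $(V',E)$ of $((V,\varsl{a},\varsl{r},\varsl{c}),\xi)$ is acyclic. It therefore suffices to show two things: $(V',E)$ can be effectively computed from the given APR proof, and acyclicity of $(V',E)$ is decidable.

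For the first step, I would follow \Cref{def:proof-graph} directly. By \Cref{def:APR-derivation-tree}, $V$ is finite, so $\Vopen$ is a finite set. It is determined by the tree structure alone: a leaf is a node with $\varsl{c}(v)$ undefined, and it is open when additionally $\varsl{a}(v)\ne\bot$, which the table in \Cref{tbl:APR-pre-proof-nodes} reads off syntactically.

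Hence $V'=V\setminus\Vopen$ is a finite, explicitly listed set. For each $v\in V'$ and each child $v''$ in the finite word $\varsl{c}(v)$, we add the edge $(v,\xi(v''))$ if $v''\in\Vbud$, and the edge $(v,v'')$ otherwise. Since the proof is an APR proof, $\xi$ is total on $\Vopen$, so every bud has its companion available.

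Crucially, this construction uses only the finite data $\varsl{c}$ and $\xi$. It never evaluates any side condition on the (possibly infinite) sets $\varsl{a}_L(v)$ and $\varsl{a}_R(v)$. No decidability assumption on $A$, such as emptiness or runnability, is needed here, in contrast to \Cref{thm:finiteness-and-decidability-of-APR-predicates}.

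For the second step, I would apply a standard algorithm to the finite directed graph $(V',E)$. Either a depth-first search that detects back edges, or a topological sort (Kahn's algorithm) that succeeds exactly when the graph has no cycle, decides acyclicity in time $O(|V'|+|E|)$. Combining the two steps with \Cref{thm:total-validity-proof} yields the decision procedure.

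The only point I expect to need care is making explicit that the procedure never inspects the underlying sets. One has to check that membership in $\Vopen$, $\Vbud$ and the edge relation are all determined by $\varsl{c}$, $\xi$, and whether $\varsl{a}(v)=\bot$, which is a finite tag. Once that is stated, the rest is routine.
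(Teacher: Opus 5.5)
Your argument follows the paper's exactly: invoke \Cref{thm:total-validity-proof} and run a cycle check on the finite proof graph. Your remark that $(V',E)$ is computable from $\varsl{c}$, $\xi$ and the $\bot$-tags alone is correct. The gap is in the equivalence you import, specifically the direction ``$\TotallyValid{\ReachPred{P}{Q}}$ implies $(V',E)$ acyclic'', which can fail when source sets are infinite.

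Take $A=\mathbb{N}\cup\{\symb{q}\}$ with $n+1\to_\cA n$ and $0\to_\cA\symb{q}$. The predicate $\ReachPred{\mathbb{N}}{\{\symb{q}\}}$ is totally valid, since every execution path is $n\to_\cA\cdots\to_\cA 0\to_\cA\symb{q}$. Yet it has the following APR proof. The root $v_0$, with $\varsl{a}(v_0)=(\ReachPred{\mathbb{N}}{\{\symb{q}\}})$, admits \RuleName{Der} with one child $v_1$, $\varsl{a}(v_1)=(\ReachPred{\mathbb{N}\cup\{\symb{q}\}}{\{\symb{q}\}})$. Then $v_1$ admits \RuleName{Subs} with one child $v_2$, $\varsl{a}(v_2)=(\ReachPred{\mathbb{N}}{\{\symb{q}\}})$, and $v_2$ is a bud with $\xi(v_2)=v_0$. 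This is exactly what the procedure in the proof of \Cref{thm:finiteness-and-decidability-of-APR-predicates} builds. Its proof graph is the cycle $v_0\to v_1\to v_0$, so your procedure answers ``not totally valid''. Moreover, with $n\to_\cA n$ and $n\to_\cA\symb{q}$ instead, the very same tree is again an APR proof, while $0\to_\cA 0\to_\cA\cdots$ never reaches $\symb{q}$. So not even the APR proof, let alone its graph, determines total validity.

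The faulty step is in the proof of \Cref{thm:total-validity-proof}. From ``every $s\in\varsl{a}_L(v)$ has a $Q$-avoiding $\to_\cA^+$-predecessor in $\varsl{a}_L(v)$'' it infers an infinite $Q$-avoiding \emph{forward} path. This only gives an infinite backward chain (above: $\cdots\to_\cA 2\to_\cA 1\to_\cA 0$).

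If $\varsl{a}_L(v)$ is finite, the argument goes through. This holds, for example, whenever $\Derivative[\cA]^*(P)$ is finite, since that set contains every $\varsl{a}_L(v)$. Pigeonhole then turns the chain into a $Q$-avoiding cycle of $\to_\cA$. Prefixing it with a $Q$-avoiding path from $P$, traced down the tree, gives the required infinite execution path. Under such a hypothesis your decision procedure is correct.

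Without it, only ``acyclic implies totally valid'' (\Cref{cor:sufficient-condition-for-total-validity}) survives. The cycle check is then a sound but incomplete test, not a decision procedure. The paper's one-line proof rests on the same equivalence, so you inherited this gap rather than introduced it. Still, the step you call routine is exactly where the argument breaks.
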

\begin{proof}
	It is decidable whether a finite graph is acyclic, and thus,
	by \Cref{thm:total-validity-proof}, totally validity is decidable.
\end{proof}

We call a closed APR pre-proof \emph{acyclic} if its proof graph is acyclic.
Notice that an acyclic APR pre-proof may have a bud node and its companion.
APR pre-proofs are simpler variants of cyclic proofs and the bud-companion relationship leads to the terminology ``cyclic''.
On the other hand, the existence of bud nodes and companions in an APR pre-proof does not always induce an actual circularity:
Regarding an APR pre-proof for an APR predicate, if its proof graph is acyclic, then the APR pre-proof can be expanded to another APR pre-proof for the APR predicate, which does not include any bud node and its companion (cf.~\cite[Section~5]{Bro05}).
For such an APR pre-proof, the bud-companion relationship enables us to reduce the search space and thus the size of constructing trees.

As a consequence of \Cref{thm:total-validity-proof}, for an APR predicate $\ReachPred{P}{Q}$, the existence of an acyclic APR pre-proof is a sufficient condition for total validity of $\ReachPred{P}{Q}$ w.r.t.\ $\cA$.
\begin{corollary}
	\label{cor:sufficient-condition-for-total-validity}
	For an APR predicate $\ReachPred{P}{Q}$ over $A$,
	if there exists an acyclic APR proof for $\ReachPred{P}{Q}$ w.r.t.\ $\cA$,
	then $\TotallyValid{\ReachPred{P}{Q}}$.
\end{corollary}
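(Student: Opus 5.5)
This corollary should follow almost immediately from \Cref{thm:total-validity-proof}. The only gap is that the theorem assumes the APR predicate is already known to be partially valid, so the plan has two steps.

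First, let $((V,\varsl{a},\varsl{r},\varsl{c}),\xi)$ be an acyclic APR proof for $\ReachPred{P}{Q}$ w.r.t.\ $\cA$. By \Cref{thm:soundness-of-APR-proofs}~\bfnum{1}, the existence of this APR proof gives $\PartiallyValid{\ReachPred{P}{Q}}$. Hence the pre-proof is an APR proof of a partially valid APR predicate, which is exactly the hypothesis of \Cref{thm:total-validity-proof}.

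Second, an APR proof is a closed APR pre-proof, since every open leaf is a bud. Its proof graph $(V',E)$ is therefore well defined, and it is acyclic by the definition of an acyclic APR pre-proof. Applying the ``only if'' direction of \Cref{thm:total-validity-proof} (acyclicity implies total validity) then gives $\TotallyValid{\ReachPred{P}{Q}}$.

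There is no real obstacle here. The one point to check is that the hypotheses of the theorem are met exactly: ``APR proof'' in the corollary must match ``partially valid APR proof'' in \Cref{thm:total-validity-proof}, and the soundness theorem supplies precisely this.

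If a self-contained argument were preferred over relying on the theorem, I would argue by contradiction. Suppose some execution path starting in $P$ never meets $Q$. Using \Cref{prop:basic-properties-of-proof-graphs}~\bfnum{2}--\bfnum{3}, I would trace the path through the proof graph: \RuleName{Subs} edges keep the current element, since it lies outside $Q$, and \RuleName{Der} edges advance it by one reduction step. The path cannot end at an \RuleName{Axiom} node, because that node has empty source. It also cannot end in a normal form outside $Q$, because that would contradict partial validity, and no \RuleName{Dis} node occurs in an APR proof. So the path is infinite and produces an infinite walk in the proof graph. Since the graph is finite, this walk must repeat a node, which means the graph has a cycle, contradicting acyclicity.
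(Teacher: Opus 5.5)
Your proof is correct and follows the paper's route: the corollary comes from the acyclic-implies-totally-valid (``only if'') direction of \Cref{thm:total-validity-proof}, with \Cref{thm:soundness-of-APR-proofs}~(1) supplying the partial-validity hypothesis, and your self-contained variant is essentially the paper's own proof of that direction. One small point on the variant: to trace a fixed execution path forward you need the rule equalities $P\setminus Q = P_1\cup\cdots\cup P_n$ for \RuleName{Subs} and $\Derivative[\cA](P) = P_1\cup\cdots\cup P_n$ for \RuleName{Der}, because \Cref{prop:basic-properties-of-proof-graphs}~(2)--(3) as stated give only the reverse inclusion and an existentially chosen successor.
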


\begin{example}
	\label{ex:total-validity}
	Let us continue with \Cref{ex:APR-predicate-for-PG0-starvation-freedom}.
	We have an APR proof for the APR predicate~(12) shown in \Cref{fig:APR-proof}, and the APR proof is acyclic because its proof graph shown in \Cref{fig:APR-proof-graph} is acyclic.
	Therefore, by \Cref{cor:sufficient-condition-for-total-validity}, the APR predicate (12) is totally valid w.r.t.\ $\cApet$, and thus the AITS is starvation free for $P_0$.
\end{example}

\begin{figure}[t]
	\def\defaultHypSeparation{\hskip 0.45em}
	\begin{scprooftree}{.92}
		\AxiomC{}
		\RightLabel{\small(\RuleName{Axiom})}
		\UnaryInfC{(17)}
		\RightLabel{\small(\RuleName{Subs})}
		\UnaryInfC{(13)}
		\AxiomC{}
		\RightLabel{\small(\RuleName{Axiom})}
		\UnaryInfC{(17)}
		\RightLabel{\small(\RuleName{Subs})}
		\UnaryInfC{(18)}
		\RightLabel{\small(\RuleName{Der})}
		\UnaryInfC{(14)\dag}
		\AxiomC{}
		\RightLabel{\small(\RuleName{Axiom})}
		\UnaryInfC{(17)}
		\RightLabel{\small(\RuleName{Subs})}
		\UnaryInfC{(21)}
		\AxiomC{(14)\dag}
		\RightLabel{\small(\RuleName{Der})}
		\BinaryInfC{(19)}
		\RightLabel{\small(\RuleName{Der})}
		\UnaryInfC{(15)} \AxiomC{}
		\RightLabel{\small(\RuleName{Axiom})}
		\UnaryInfC{(17)}
		\RightLabel{\small(\RuleName{Subs})}
		\UnaryInfC{(20)}
		\AxiomC{(14)\dag}
		\RightLabel{\small(\RuleName{Der})}
		\BinaryInfC{(16)}
		\RightLabel{\small(\RuleName{Der})}
		\QuaternaryInfC{(12)~$\ReachPred{\{ \State{\symb{wait}_0,p_1,\sigma} \mid p_1 \,{\in}\, \StateSet{1}, \sigma \,{\in}\, \cS, \sigma(b_0) \,{=}\, \symb{true} \}}{\{ \State{\symb{crit}_0,p_1,\sigma} \mid p_1 \,{\in}\, \StateSet{1}, \sigma \,{\in}\, \cS \}}$}
	\end{scprooftree}

	\smallskip
	\setlength{\columnsep}{3pt}
	\footnotesize
	\begin{itemize}
		\leftskip=2ex
		\itemsep-1pt
		\item[(13)]
		      $\ReachPred{\{\State{\symb{crit}_0,p_1,\symb{true},v_1,m} \mid v_1 \,{\in}\, \mathbb{B},  m \,{\in}\, \{0,1\}, m\,{=}\,0 \lor v_1\,{=}\,\symb{false} \} }{ \{ \State{\symb{crit}_0,p_1,\sigma} \mid p_1 \,{\in}\, \StateSet{1}, \sigma \,{\in}\, \cS \}}$
		\item[(14)]
		      $\ReachPred{\{\State{\symb{wait}_0,\symb{wait}_1,\symb{true},\symb{true},0}\}}{\{ \State{\symb{crit}_0,p_1,\sigma} \mid p_1 \in \StateSet{1}, \, \sigma \in \cS \}}$
		\item[(15)]
		      $\ReachPred{\{\State{\symb{wait}_0,\symb{crit}_1,\symb{true},v_1,1} \mid v_1 \in \mathbb{B} \}}{\{ \State{\symb{crit}_0,p_1,\sigma} \mid p_1 \in \StateSet{1}, \, \sigma \in \cS \}}$
		\item[(16)]
		      $\ReachPred{\{\State{\symb{wait}_0,\symb{noncrit}_1,\symb{true},\symb{false},m} \mid m \in \{0,1\}\}}{\{ \State{\symb{crit}_0,p_1,\sigma} \mid p_1 \in \StateSet{1}, \, \sigma \in \cS \}}$
		\item[(17)]
		      $\ReachPred{\emptyset}{\{ \State{\symb{crit}_0,p_1,\sigma} \mid p_1 \in \StateSet{1}, \, \sigma \in \cS \}}$
		\item[(18)]
		      $\ReachPred{\{\State{\symb{crit}_0,\symb{wait}_1,\symb{true},\symb{true},0}\}}{\{ \State{\symb{crit}_0,p_1,\sigma} \mid p_1 \in \StateSet{1}, \, \sigma \in \cS \}}$
		\item[(19)]
		      $\ReachPred{\{\State{\symb{wait}_0,\symb{noncrit}_1,\symb{true},\symb{false},1}\}}{\{ \State{\symb{crit}_0,p_1,\sigma} \mid p_1 \in \StateSet{1}, \, \sigma \in \cS \}}$
		\item[(20)]
		      $\ReachPred{\{\State{\symb{crit}_0,\symb{noncrit}_1,\symb{true},\symb{false},m} \mid m \in \{0,1\}\}}{\{ \State{\symb{crit}_0,p_1,\sigma} \mid p_1 \in \StateSet{1}, \, \sigma \in \cS \}}$
		\item[(21)]
		      $\ReachPred{\{\State{\symb{crit}_0,\symb{noncrit}_1,\symb{true},\symb{false},1}\}}{\{ \State{\symb{crit}_0,p_1,\sigma} \mid p_1 \in \StateSet{1}, \, \sigma \in \cS \}}$
		\item[(22)]
		      $\ReachPred{\{\State{\symb{wait}_0,\symb{wait}_1,\symb{true},\symb{true},0}\}}{\{ \State{\symb{crit}_0,p_1,\sigma} \mid p_1 \in \StateSet{1}, \, \sigma \in \cS \}}$
	\end{itemize}
	\caption{An APR proof for (12), where $\dag$ indicates bud-companion relationships.}
	\label{fig:APR-proof}
\end{figure}

\begin{figure}[t]
	\centering
	$
		\xymatrix@R=15pt@C=15pt{
		&              & *+[o][F]{v_{17}} \\
		& *+[o][F]{v_{17}} & *+[o][F]{v_{21}}\ar[u] && *+[o][F]{v_{17}} \\
		*+[o][F]{v_{17}} & *+[o][F]{v_{18}}\ar[u] && *+[o][F]{v_{19}}\ar[ul]\ar@(ur,ur)[dll] & *+[o][F]{v_{20}}\ar[u] \\
		*+[o][F]{v_{13}}\ar[u] & *+[o][F]{v_{14}}\ar[u] && *+[o][F]{v_{15}}\ar[u] & *+[o][F]{v_{16}}\ar[u]\ar@(ur,r)[lll] \\
		&& *+[o][F]{v_{12}}\ar[llu]\ar[lu]\ar[ru]\ar[rru] \\
		}
	$
	\caption{The proof graph of the APR proof for (12) in \Cref{fig:APR-proof}, where node $\varsl{a}(v_i)= (i)$.}
	\label{fig:APR-proof-graph}
\end{figure}

\section{Related Work}
\label{sec:related-work}

The work in this paper is the adaptation of the APR frameworks in~\cite{CL18,KN23jlamp} to ARSs, providing an abstract foundation for APR analysis.
The cyclic-proof system for partial validity of APR predicates in this paper is a simplified adaptation of the well-known cyclic-proof system $\mbox{CLKID}^\omega$~\cite[Chapter~5]{Bro06phd}.
Compared with sequents, APR predicates are very simple, and our APR proofs always satisfy the global trace condition.

Runtime-error verification by means of APR analysis w.r.t.\ constrained rewrite systems such as LCTRSs has been investigated.
$\mathbb{K}$ framework~\cite{RS10} is a more general setting of constrained rewriting than LCTRSs, and the proof system for partial validity of APR predicates consisting of constrained terms has been implemented in $\mathbb{K}$~\cite{SCMMSR14,SCMMSR19}.
The race-freedom of Peterson's mutual exclusion algorithm has been proved in~\cite{SCMMSR19} by means of the APR approach, while starvation freedom is not considered.
A comparison of \emph{all-path reachability logic} with CTL* can be seen in~\cite{SCMMSR19}.

Our previous APR-based approach in~\cite{KN23jlamp} to safety verification is to add rules to a given LCTRS so as to make any finite prefix of all (possible infinite) execution paths finite execution paths of the modified LCTRS.
On the other hand, as described in \Cref{sec:reduction-of-safety-properties-to-APR}, it suffices to include in the target set all irreducible states that are not error states, and we formulated the necessary condition as safety APR predicates for sets of error states.
However, for the inclusion of all non-error irreducible states in the target set, it is enough to add the reduction from all states to a freshly introduced dummy state that is irreducible w.r.t.\ the modified ARS.
Viewed in this light, in practice, these two approaches are similar, and the approach in this paper is an abstract framework of the previous one for LCTRSs.

Our previous work~\cite{KN23padl,KN24rp} for the verification of starvation freedom by means of APR analysis with LCTRSs reduces starvation freedom to safety APR predicates, and does not use the approach to liveness properties in this paper.
On the other hand, the previous work~\cite{KN24rp} deals with \emph{process-fairness} for starvation freedom.

\section{Conclusion}
\label{sec:conclusion}

In this paper, we first reformulated inference rules for partial validity of APR predicates w.r.t.\ ARSs, and adapted the cyclic-proof system for partial validity w.r.t.\ LCTRSs to ARSs.
The reformulated framework includes a disproof rule.
Then, we showed how to apply APR analysis to safety verification.
Finally, we introduced total validity of APR predicates w.r.t.\ ARSs and showed
that if there is an acyclic APR proof for an APR predicate w.r.t.\ an ARS, then the APR predicate is totally valid w.r.t.\ the ARS.
Total validity of APR predicates can be used for liveness verification.

Our APR pre-proofs do not consider the generalization for bud nodes.
To be more precise, in the current definition, a bud node and its companion have the same APR predicate.
As described in \Cref{fnt:relaxation-of-bud-companion-relationships}, the relaxation of the bud-companion relationship---an APR predicate can be more general than that of its companion---is one of our future directions for the practical use of APR analysis for runtime-error verification.
The introduction of \emph{process-fairness} to the abstract APR framework formulated in this paper is also a future direction to make APR-based verification more practical.
Another future direction of this research is to deal with the full version of the rule \RuleName{Circ}, i.e., ``cut'' in APR pre-proofs.

\bibliography{mybiblio}

\appendix

\section{Omitted Proofs}
\label{sec:proofs}

\PropUniquenessOfApplicableRulesInDVPnew*
\begin{proof}
	We make a case analysis depending on whether $P = \emptyset$.
	\begin{itemize}
		\item Case where $P=\emptyset$.
		      Rule \RuleName{Axiom} is applicable, but the others are not.
		\item Case where $P\ne \emptyset$.
		      We further make a case analysis depending on whether $P \cap Q = \emptyset$.
		      \begin{itemize}
			      \item Case where $P \cap Q = \emptyset$.
			            We make a case analysis depending on whether $P \cap \NF[\cA] = \emptyset$.
			            \begin{itemize}
				            \item Case where $P \cap \NF[\cA] = \emptyset$.
				                  Rule \RuleName{Der} is applicable, but the others are not.
				            \item Case where $P \cap \NF[\cA] \ne \emptyset$.
				                  There is no rule applicable to $\ReachPred{P}{Q}$.
			            \end{itemize}
			      \item Case where $P \cap Q \ne \emptyset$.
			            Rule \RuleName{Sub} is applicable, but the others are not.
		      \end{itemize}
	\end{itemize}
	Therefore, the claim holds.
\end{proof}

\ThmEquivalenceOfDVPandDVPnew*
\begin{proof}
	By definition, we have that $\nu\widehat{\DVP[\cA]} \subseteq \nu\widehat{\DVPnew[\cA]}$, because
	the applications of \RuleName{Subsumption} and \RuleName{Step} in $\DVP$ can be simulated by \RuleName{Subs} and either \RuleName{Axiom} or \RuleName{Der} in $\DVPnew$.
	We show that $\nu\widehat{\DVP[\cA]} \supseteq \nu\widehat{\DVPnew[\cA]}$.
	The application of \RuleName{Axiom} in $\DVPnew$ can be simulated by \RuleName{Subsumption} in $\DVP$.
	The application of \RuleName{Subs} in $\DVPnew$ is followed by either \RuleName{Axiom} or \RuleName{Der} in $\DVPnew$.
	Let us consider the following case:
	\begin{prooftree}
		\AxiomC{$\ReachPred{P_1}{Q}$}
		\AxiomC{\ldots}
		\AxiomC{$\ReachPred{P_n}{Q}$}
		\RightLabel{\small(\RuleName{Subs})}
		\TrinaryInfC{$\ReachPred{P}{Q}$}
	\end{prooftree}
	where
	$P \cap Q \ne \emptyset$, $P \setminus Q = P_1 \cup \cdots \cup P_n$ for some $n > 0$,
	and
	if $n > 1$ then $P_i \ne \emptyset$ for all $1 \leq i \leq n$.
	We make a case analysis depending on whether $P \setminus Q = \emptyset$ or not.
	\begin{itemize}
		\item Case where $P \setminus Q = \emptyset$.
		      By definition, we have that $P \subseteq Q$.
		      Therefore, the successive application of \RuleName{Subs} and
		      \RuleName{Axiom} can be simulated by \RuleName{Subsumption} in $\DVP$.

		\item Case where $P \setminus Q \ne \emptyset$.
		      By definition, we have that $P_i \ne \emptyset$ for all $1 \leq i \leq n$.
		      Since $P_i \subseteq P \setminus Q$, we have that $P_i \cap Q = \emptyset$.
		      Since $(\ReachPred{P}{Q}) \in \nu\widehat{\DVPnew[\cA]}$,
		      at least one rule is applicable to $\ReachPred{P_i}{Q}$ for each $i\in \{1,\ldots,n\}$.
		      By \Cref{prop:uniqueness-of-applicable-rules-in-DVPnew},
		      the applicable rule is \RuleName{Der} only.
		      Thus, for each $i \in \{1,\ldots,n\}$, we have the following subtree:
		      \begin{prooftree}
			      \AxiomC{$\ReachPred{P_{i,1}}{Q}$}
			      \AxiomC{\dots}
			      \AxiomC{$\ReachPred{P_{i,m_i}}{Q}$}
			      \RightLabel{\small(\RuleName{Der})}
			      \TrinaryInfC{$\ReachPred{P_i}{Q}$}
		      \end{prooftree}
		      where
		      $\Derivative[\cA](P_i) = P_{i,1} \cup \cdots \cup P_{i,m_i}$ for some $m_i > 0$,
		      and
		      $P_{i,j} \ne \emptyset$ for all $1 \leq j \leq m_i$.
		      By definition, we have that
		      $P \setminus Q = \bigcup_{i=1}^n P_i$
		      and
		      $P_i \setminus Q = P_i$ for all $1 \leq i \leq n$.
		      Hence, we have that
		      $\Derivative[\cA](P_i \setminus Q) = \Derivative[\cA](P_i) = \bigcup_{j=1}^{m_i} P_{i,j}$,
		      and thus
		      $\Derivative[\cA](P \setminus Q) = \Derivative[\cA](\bigcup_{i=1}^n P_i) = \bigcup_{i=1}^n \bigcup_{j=1}^{m_i} P_{i,j}$.
		      Thus, we have the following tree of $\DVP$:
		      \begin{prooftree}
			      \AxiomC{$\ReachPred{( \bigcup_{i=1}^n \bigcup_{j=1}^{m_i} P_{i,j} )}{Q}$}
			      \RightLabel{\small(\RuleName{Step})}
			      \UnaryInfC{$\ReachPred{P}{Q}$}
		      \end{prooftree}
		      Since $(\ReachPred{P_i}{Q}) \in \nu\widehat{\DVPnew[\cA]}$,
		      we have that $(\ReachPred{P_{i,j}}{Q}) \in \nu\widehat{\DVPnew[\cA]}$
		      for all $1 \leq i \leq n$ and $1 \leq j \leq m_i$.
		      By induction, we have that for all $1 \leq i \leq n$,
		      $(\ReachPred{P_{i,1}}{Q}),\ldots,(\ReachPred{P_{i,m_i}}{Q}) \in \nu\widehat{\DVP[\cA]}$,
		      and thus
		      $\PartiallyValid{\ReachPred{P_{i,j}}{Q}}$ for all $1 \leq i \leq n$ and $1 \leq j \leq m_i$.
		      It follows from \Cref{prop:basic-properties-of-partially-valid-APR-problems} that
		      $\PartiallyValid{\ReachPred{( \bigcup_{i=1}^n \bigcup_{j=1}^{m_i} P_{i,j} )}{Q}}$
		      and thus
		      $( \ReachPred{( \bigcup_{i=1}^n \bigcup_{j=1}^{m_i} P_{i,j} )}{Q} ) \in \nu\widehat{\DVP[\cA]}$.
		      Therefore, we have that $(\ReachPred{P}{Q}) \in \nu\widehat{\DVP[\cA]}$.
	\end{itemize}
	The application of \RuleName{Der} in $\DVPnew$ to $(\ReachPred{P}{Q})$ is a special case of the successive application of \RuleName{Subs} and \RuleName{Der} in $\DVPnew$, and thus $(\ReachPred{P}{Q}) \in \nu\widehat{\DVP}$.
\end{proof}

\PropUniquenessOfApplicableRulesInCDVPnew*
\begin{proof}
	By \Cref{prop:uniqueness-of-applicable-rules-in-DVPnew}, it suffices to show the case for \RuleName{Dis}.
	When a rule in $\DVPnew$ is applicable to $\ReachPred{P}{Q}$, by definition, it is clear that \RuleName{Dis} is not applicable to $\ReachPred{P}{Q}$.
	In the proof of \Cref{prop:uniqueness-of-applicable-rules-in-DVPnew}, in the case where no rule is applicable to $\ReachPred{P}{Q}$, we have that $P \ne \emptyset$, $P \cap Q = \emptyset$, and $P \cap \NF[\cA] \ne \emptyset$.
	In this case, by definition, \RuleName{Dis} is applicable.
\end{proof}

\ThmSoundnessOfAPRProofs*
\begin{proof}
	We first prove the first claim, which can be proved analogously to the proof in~\cite[Theorem~3]{CL18}.
	We proceed by contradiction.
	Assume that there exists an APR proof of $\ReachPred{P}{Q}$ and
	$\PartiallyValid{\ReachPred{P}{Q}}$.
	Let $((V,\varsl{a},\varsl{r},\varsl{c}),\xi)$ be an APR proof for $\ReachPred{P}{Q}$ with root node $v_0$, and $(V',E')$ be the proof graph of $((V,\varsl{a},\varsl{r},\varsl{c}),\xi)$ such that $v_0 \in V'$,
	where $\varsl{a}(v_0) = (\ReachPred{P}{Q})$.
	Since
	$\NotPartiallyValid{\ReachPred{P}{Q}}$,
	there exists an execution path $s_0 \to_\cA s_1 \to_\cA \cdots \to_\cA v_k$ such that
	$s_0 \in P$, $\{s_0,s_1,\ldots,s_k\} \cap Q = \emptyset$, and $s_k \in \NF[\cA]$.
	Since $\varsl{a}(v_0) = (\ReachPred{P}{Q})$, we have that $s_0 \in \varsl{a}_L(v_0)$.
	It follows from \Cref{prop:basic-properties-of-proof-graphs} that
	there exist nodes $v_1,\ldots,v_k \in V'$ such that
	$v_0 \mathrel{(\toDer \cup \toSubs)}
		v_1 \mathrel{(\toDer \cup \toSubs)}
		\cdots
		\mathrel{(\toDer \cup \toSubs)}
		v_k$.
	Since $s_k \in \NF[\cA]$, by the construction of proof graphs, $v_k$ is a node in $V$ such that $\varsl{r}(v_k) = \mbox{\rm\RuleName{Dis}}$, and thus $((V,\varsl{a},\varsl{r},\varsl{c}),\xi)$ is an APR disproof.
	This contradicts the assumption that $((V,\varsl{a},\varsl{r},\varsl{c}),\xi)$ is an APR proof.

	Next, we prove the second claim.
	Let $((V,\varsl{a},\varsl{r},\varsl{c}),\xi)$ be an APR disproof for $\ReachPred{P}{Q}$ with root node $v_0$,
	where $\varsl{a}(v_0) = (\ReachPred{P}{Q})$.
	Then, there exists node $v,v' \in V$ such that $\varsl{r}(v)=\mbox{\rm\RuleName{Dis}}$, $\varsl{c}(v)=v'$, and $\varsl{a}(v') = \bot$.
	Let $\varsl{a}(v) = (\ReachPred{P'}{Q})$.
	Then, by the condition of $v,v'$, we have that
	$P' \ne \emptyset$, $P' \cap Q = \emptyset$, and $P' \cap \NF[\cA] \ne \emptyset$.
	Thus, there exists an element $s \in P'$ such that $s \in \NF[\cA]$ and $s \notin Q$.
	By the construction of APR pre-proofs, there exist nodes $v_1,\ldots,v_k$ such that
	$v_0 \mathrel{(\toDer \cup \toSubs)}
		v_1 \mathrel{(\toDer \cup \toSubs)}
		\cdots \mathrel{(\toDer \cup \toSubs)}
		v_k \mathrel{(\toDer \cup \toSubs)}
		v \mathrel{(\toDer \cup \toSubs)}
		v'$.
	It is clear that \Cref{prop:basic-properties-of-proof-graphs} holds for paths of derivation trees, and thus, by \Cref{prop:basic-properties-of-proof-graphs}, there exist elements $s_0,s_1,\ldots,s_k$ such that
	$s_0 \mathrel{\to_{\cA}^=}
		s_1 \mathrel{\to_{\cA}^=}
		\cdots \mathrel{\to_{\cA}^=}
		s_k
		\mathrel{\to_{\cA}^=}
		s$
	and
	$s_i \in \varsl{a}_L(v_i)$ and $\varsl{a}_L(v_i) \cap Q = \emptyset$ for all $0 \leq i \leq k$.
	Thus, we have that $s_i \notin Q$ for all $0 \leq i \leq k$, and thus
	$s_0 \mathrel{\to_{\cA}^=}
		s_1 \mathrel{\to_{\cA}^=}
		\cdots \mathrel{\to_{\cA}^=}
		s_k
		\mathrel{\to_{\cA}^=}
		s$
	is a finite execution path that does not include any element in $Q$.
	Therefore,
	we have that $\NotPartiallyValid{\ReachPred{P}{Q}}$.
\end{proof}

\ThmFinitenessAndDecidabilityOfAPRPredicate*
\begin{proof}
	We first show a naive construction of APR pre-proofs of $\ReachPred{P}{Q}$, where we do not split any source sets.
	It follows from \Cref{prop:uniqueness-of-applicable-rules-in-CDVPnew} that for each APR predicate, there is exactly one applicable rule in $\CDVPnew$.
	Let $v_0$ be the root of a derivation tree we construct now, where $\varsl{a}(v_0) = (\ReachPred{P}{Q})$.
	In the following, we use $V$ as a set of nodes that have already been applied a rule in $\CDVPnew$ or considered a bud node;
	we use $\Vder$ as a set of nodes that can be companions of other nodes;
	we use $U$ as a queue of nodes that have not been applied any rule in $\CDVPnew$ yet.
	Then, we construct an APR pre-proof by breadth-first search (cf.~\cite{KN24jip}):
	\begin{enumerate}
		\item $V := \{v_0\}$, $\Vder := \emptyset$, and insert $v_0$ to $U$.
		\item Repeat the following until $U$ has no node, and then return $((V,\varsl{a},\varsl{r},\varsl{c}),\xi)$:
		      \begin{enumerate}
			      \item Take a node $v$ from $U$, and $V := V \cup \{ v \}$.
			      \item If there exists a node $v' \in \Vder$ with $\varsl{a}(v') = \varsl{a}(v)$,
			            then $\xi(v) := v'$ and skip \Bfnum{c}.
			      \item If \RuleName{Axiom} is applicable to $\varsl{a}(v)$,
			            then
			            $\varsl{r}(v) := \mbox{\RuleName{Axiom}}$, $\varsl{c}(v) := \epsilon$;
			            if \RuleName{Subs} is applicable to $\varsl{a}(v)$,
			            then
			            create a node $v'$, insert $v'$ to $U$, $\varsl{r}(v) := \mbox{\RuleName{Subs}}$,
			            $\varsl{a}(v') := (\ReachPred{(\varsl{a}_L(v)\setminus\varsl{a}_R(v))}{\varsl{a}_R(v)})$,
			            and
			            $\varsl{c}(v) := v'$;
			            if \RuleName{Der} is applicable to $\varsl{a}(v)$,
			            then
			            create a node $v'$, insert $v'$ to $U$, $\varsl{r}(v) := \mbox{\RuleName{Der}}$,
			            $\varsl{a}(v') := (\ReachPred{\Derivative[\cA](\varsl{a}_L(v))}{\varsl{a}_R(v)})$,
			            $\varsl{c}(v) := v'$,
			            and
			            $\Vder := \Vder \cup \{ v \}$;
			            if \RuleName{Dis} is applicable to $\varsl{a}(v)$,
			            then
			            create a node $v'$, $V := V \cup \{ v' \}$, $\varsl{r}(v) := \mbox{\RuleName{Dis}}$,
			            $\varsl{a}(v') := \bot$,
			            and
			            $\varsl{c}(v) := v'$.
		      \end{enumerate}
	\end{enumerate}
	It is clear that the above procedure is deterministic and the output is a closed APR pre-proof.

	To prove partial validity of $\ReachPred{P}{Q}$, it suffices to consider APR predicates $\ReachPred{P'}{Q}$ such that $P' \subseteq \Derivative[\cA]^*(P)$.
	Since $\Derivative[\cA]^*(P)$ is finite, there are only finitely many APR predicates to be considered.
	If the above procedure does not halt, then it is clear that there are infinitely many APR predicates to be considered for partial validity of $\ReachPred{P}{Q}$, and this contradicts the finiteness of APR predicates to be considered.
	If there exists a node $v$ with $\varsl{a}(v) = \bot$, then the output is an APR disproof, and otherwise, it is an APR proof.
	Therefore, the first claim holds.
	If the side conditions of rules in $\CDVPnew$ are decidable, then the procedure above halts for any APR predicate if $\Derivative[\cA]^*(P)$ is finite.
	Therefore,
	the second claim holds.
\end{proof}

\ThmPartiallyValidSafetyProperties*
\begin{proof}
	We
	first
	prove the \textit{only-if} part by contradiction.
	Assume that $\PartiallyValid{\ReachPred{P}{Q}}$ and there is a finite execution path $s_0 \to_\cA s_1 \to_\cA \cdots \to_\cA s_k$
	such that $s_0 \in P$ and $s_k \in E \subseteq \NF[\cA]$.
	Since $\ReachPred{P}{Q}$ is a safety APR predicate for $E$, we have that $Q \cap E = \emptyset$, $\{ t \in \NF[\cA] \setminus E \mid \exists s \in P.\ s \to_\cA^* t \} \subseteq Q$, and $Q \subseteq \NF[\cA]$, and hence
	$\{s_0,s_1,\ldots,s_k\} \cap Q = \emptyset$.
	This contradicts the assumption that $\PartiallyValid{\ReachPred{P}{Q}}$.

	Next, we prove the \textit{if} part by contradiction.
	Assume that
	$\NotPartiallyValid{\ReachPred{P}{Q}}$
	and there is no execution path that starts with an element in $P$ and includes an element in $E$.
	Then,
	there is a finite execution path $s_0 \to_\cA s_1 \to_\cA \cdots \to_\cA s_k \in \NF[\cA]$ such that
	$s_0 \in P$ and
	$\{s_0,s_1,\ldots,s_k\} \cap Q = \emptyset$.
	Since $\ReachPred{P}{Q}$ is a safety predicate for $E$ w.r.t.\ $\cA$, we have that $\{ t \in \NF[\cA] \setminus E \mid \exists s \in P.\ s \to_\cA^* t \} \subseteq Q$, and hence $s_k \in E$.
	This contradicts the assumption.
\end{proof}

\ThmSoundnessAndCompletenessOfUsingAny*
\begin{proof}
	Let $\cA'=(A\cup\{\symb{any}\}, \to_{\cA'})$, where ${\to_{\cA'}} = ({\to_\cA}\cup\{ (s,\symb{any}) \mid s \in A \setminus E \})$.
	Then, we have that $\NF[\cA']= E \cup \{\symb{any}\}$.

	We first
	prove the \textit{only-if} part by contradiction.
	Assume that $\PartiallyValid[\cA]{\ReachPred{P}{Q}}$ and
	$\NotPartiallyValid[\cA']{\ReachPred{P}{\{\symb{any}\}}}$.
	Then, there exists a finite execution path
	$s_0 \to_{\cA'} s_1 \to_{\cA'} \cdots \to_{\cA'} s_k \in E$ with
	$\symb{any} \notin \{s_0,s_1,\ldots,s_k\}$, and thus
	$s_0 \to_{\cA} s_1 \to_{\cA} \cdots \to_{\cA} s_k \in E$.
	It follows from \Cref{thm:partially-valid-safety-properties} that $\NotPartiallyValid[\cA]{\ReachPred{P}{Q}}$.
	This contradicts the assumption that $\PartiallyValid[\cA]{\ReachPred{P}{Q}}$.

	Next, we prove the \textit{if} part by contradiction.
	Assume that $\NotPartiallyValid[\cA]{\ReachPred{P}{Q}}$ and
	$\PartiallyValid[\cA']{\ReachPred{P}{\{\symb{any}\}}}$.
	Since $\NotPartiallyValid[\cA]{\ReachPred{P}{Q}}$, by \Cref{thm:partially-valid-safety-properties}, there exists a finite execution path $s_0 \to_\cA s_1 \to_{\cA} \cdots \to_{\cA} s_k \in E$.
	By the construction of $\cA'$, we have that $s_0 \to_{\cA'} s_1 \to_{\cA'} \cdots \to_{\cA'} s_k \in E$, and thus $\NotPartiallyValid[\cA']{\ReachPred{P}{\{\symb{any}\}}}$
	This contradicts the assumption that $\PartiallyValid[\cA']{\ReachPred{P}{\{\symb{any}\}}}$.
\end{proof}

\ThmTotalValidityProof*
\begin{proof}
	We first prove the \textit{only-if} part by contradiction.
	Assume that $(V',E)$ is acyclic and $\NotTotallyValid{\ReachPred{P}{Q}}$.
	Then, there exists a (possibly infinite) execution path $s_0 \to_\cA s_1 \to_\cA \cdots$ such that $s_0 \in P$ and none of $s_0,s_1,\ldots$ is in $Q$.
	We make a case analysis depending on whether the path is finite.
	\begin{itemize}
		\item Case where the execution path is finite.
		      Let the path have $n$ steps, i.e.,
		      $s_0 \to_\cA s_1 \to_\cA \cdots \to_\cA s_n \in \NF[\cA]$.
		      Since $s_n \in \NF[\cA] \setminus Q$,
		      there exists a node $v \in V$ such that
		      $s_n \in \varsl{a}_L(v) \subseteq A$, and thus, $\varsl{r}(v) = \RuleName{Dis}$.
		      This contradicts the assumption.

		\item Case where the execution path is infinite.
		      Since none of $s_0,s_1,\ldots$ is in $Q$,
		      there exists an infinite path of the proof graph $(V',E)$.
		      Since $((V,\varsl{a},\varsl{r},\varsl{c}),\xi)$ is an APR proof, $(V,\varsl{a},\varsl{r},\varsl{c})$ is a finite derivation tree, and thus $(V',E)$ is a finite graph having an infinite path.
		      Thus, $(V',E)$ has a cycle.
		      This contradicts the assumption.
	\end{itemize}

	Next, we prove the \textit{if} part by contradiction.
	Assume that
	$\TotallyValid{\ReachPred{P}{Q}}$
	and
	$(V',E)$ has a cycle.
	Then, there exists a bud node $v$ and its companion $v'$ in the cycle such that $\xi(v) = v'$.
	By the assumption, we have that $\varsl{a}(v)=\varsl{a}(v')$, and thus $\varsl{a}_L(v) = \varsl{a}_L(v')$. Since $v,v'$ are in the cycle, there exists a finite path from $v'$ to $v$:
	$v' \mathrel{\toDer} v_1 \mathrel{({\toDer} \cup {\toSubs})^*} v$.
	It follows from \Cref{prop:basic-properties-of-proof-graphs}~\bfnum{2}--\bfnum{3} that
	for any element $s \in \varsl{a}_L(v)$,
	there exists an element $s' \in \varsl{a}_L(v')$
	such that $s' \to_\cA^+ s$ and the reduction has no element in $Q$.
	It follows from $\varsl{a}_L(v) = \varsl{a}_L(v')$ that
	for any element $s \in \varsl{a}_L(v)$,
	there exists an element $s' \in \varsl{a}_L(v)$
	such that $s' \to_\cA^+ s$ and the reduction has no element in $Q$.
	Thus, there exists an infinite execution path that starts with an element $s'' \in \varsl{a}_L(v)$ and does not include any element in $Q$.
	In the same way, we have a reduction sequence $s_0 \to_\cA^* s''$ such that $s_0 \in P$, and thus there exists an infinite execution path that starts with an element $s'' \in \varsl{a}_L(v)$ and does not include any element in $Q$.
	This contradicts the assumption.
\end{proof}

\section{Formal Description for Instantiation of $\DVPnew$ to $\DCC$}

In this section, we show a formal description for instantiation of the proof system $\DVPnew$ for ARSs to the proof system $\DCC$ for LCTRSs.
We first recall LCTRSs in~\cite{CL18} and then show the description.

\subsection{Formal Description of LCTRSs}
\label{subsec:LCTRSs}

We briefly recall LCTRSs~\cite{FKN17tocl}, which are slightly different from those in~\cite{KN13frocos}.
Familiarity with basic notions and notations on term rewriting is assumed~\cite{BN98,Ohl02}.

Let $\cS = \{\iota_1,\ldots,\iota_n\}$ be a set of \emph{sorts} and $\Sigma=
	\{
	f_1 : \iota_{1,1} \times \cdots \times \iota_{1,k_1} \Rightarrow \iota_{1,0},
	\ldots,
	f_m : \iota_{m,1} \times \cdots \times \iota_{m,k_m} \Rightarrow \iota_{m,0}
	\}$ be an $\cS$-sorted \emph{signature}.
We let $\cV$ be an $\cS$-sorted set of variables.
A model $\cM$ for $\Sigma$ is a tuple $(M_{\iota_1},\ldots,M_{\iota_n},f_1^{\cM},\ldots,f_m^{\cM})$ such that
$M_{\iota_i}$ ($\ne \emptyset$) with $1 \leq i \leq n$ is the interpretation of $\iota_i$
and
$f_j^\cM$ with $1 \leq j \leq m$ is the interpretation of $f_j$, which is a function in $M_{\iota_{j,1}} \times \cdots \times M_{\iota_{j,k_j}} \Rightarrow M_{\iota_{j,0}}$.
We denote $\bigcup_{i=1}^n M_{\iota_i}$ by $M$.
The set of function symbols with type $\iota_{i_1} \times \cdots \times \iota_{i_k} \Rightarrow \iota$ in $\Sigma$ is denoted by $\Sigma_{\iota_{i_1},\ldots,\iota_{i_k},\iota}$.
Note that $\Sigma_{\varepsilon,\iota}$ denotes the set of constants of sort $\iota$.

We denote by $\cS^b$ a set of \emph{built-in sorts} that includes at least the sort $\sort{bool}$.
A \emph{built-in signature} $\Sigma^b$ is an $\cS^b$-sorted signature, and function symbols in $\Sigma^b$ are called \emph{built-in function symbols}.
A model for $\Sigma^b$ such that the interpretation of sort $\sort{bool}$ is $\{\top,\bot\}$ is called a \emph{built-in model} for $\Sigma^b$, which is denoted by $\cM^b$.
Regarding built-in signatures and their models, we do not distinguish $\Sigma^b_{\varepsilon,\iota}$ and $M^b_\iota$ for any $\iota \in \cS^b$.
We call a function symbol in $\Sigma^b \setminus M^b$ a \emph{calculation symbol}.
The set of first-order formulas with equality over the signature $\Sigma^b$ is denoted by $\CF^b_\Sigma$.
Formulas in $\CF^b_\Sigma$ are called \emph{built-in constraint formulas} (or simply \emph{built-in constraints}).
Functions and function symbols returning values of $\sort{bool}$ are predicates and predicate symbols, respectively, and terms of sort $\sort{bool}$ are atomic formulas.

Let $\cS$ be a set of sorts with $\cS \supseteq \cS^b$ and $\leq$ be a partial order over $\cS$.
An $(\cS,\leq)$-sorted \emph{signature modulo built-ins} is an order-sorted signature $\Sigma$ that includes $\Sigma^b$ as a subsignature and such that the only built-in constants in $\Sigma$ are elements in the built-in model, i.e., $\Sigma_{\varepsilon,\iota} = \Sigma^b_{\varepsilon,\iota}$ for any built-in sort $\iota \in \cS^b$.
We call $\Sigma^b$ a \emph{built-in subsignature} of $\Sigma$.
The \emph{constructor signature} of $\Sigma$, which is an $(\cS,\leq)$-sorted signature, is denoted by $\Sigma^c$:
$\Sigma^c = (\Sigma \setminus \Sigma^b) \cup \bigcup_{\iota \in \cS^b} \Sigma_{\varepsilon,\iota}$.
Note that $\Sigma^c$ is the signature without calculation symbols.

Regarding an $(\cS,\leq)$-sorted signature modulo built-ins $\Sigma$, we extend the built-in model $\cM^b$ to a model $\cM^\Sigma$ for $\Sigma$ defined as follows:
\begin{itemize}
	\item $M^\Sigma_\iota = \{ t \in T(\Sigma^c) \mid t : \iota \}$ for each non-built-in sort $\iota \in \cS \setminus \cS^b$,
	\item $f^{\Sigma} = f^{\Sigma^b}$ for each built-in function symbol $f \in \Sigma^b$,
	      and
	\item $f^{\Sigma}(t_1,\ldots,t_n) = f(t_1,\ldots,t_n)$ for each non-built-in function symbol $f \in \Sigma\setminus \Sigma^b$.
\end{itemize}
By fixing the interpretation of the non-built-in function symbols, constraint formulas are reduced to built-in constraint formulas by relying on an unification algorithm described in detail in~\cite{CAL18}.
We assume that $M_\iota \ne \emptyset$ for any sort $\iota \in \cS$.
Note that $M^\Sigma$ denotes the set of interpretations in in $\cM^\Sigma$:
$M^\Sigma = \bigcup_{\iota \in \cS} M^\Sigma_\iota$.
Note that $M^\Sigma = T(\Sigma^c)$ and $M^\Sigma$ (i.e., $T(\Sigma^c)$ is the set of terms that do not include any calculation symbol).
Note also that $(\Sigma^b)^c \subseteq \Sigma^c$.

The set $\CF_\Sigma$ of \emph{constraint formulas} is the set of first-order formulas with equality over the signature $\Sigma$.
As for terms, the set of variables freely occurring in a constraint formula $\phi$ is denoted by $\Var(\phi)$.
A \emph{$\Sigma$-valuation} for a set $X$ of variables is a sort-preserving mapping from $X$ to $M^\Sigma$.
Given a constraint formula $\phi$ in $\CF_\Sigma$ and a $\Sigma$-valuation $\alpha$ for $X$ with $X \supseteq \Var(\phi)$, we write $\cM^\Sigma, \alpha \models \phi$ if $\alpha(\phi)$ is evaluated by $\cM^\Sigma$ to $\top$.
The \emph{valuation semantics} of a constraint formula $\phi$ is the set $\ValSem{\phi}$ of $\Sigma$-valuations for $\Var(\phi)$ satisfying $\phi$:
$\ValSem{\phi} = \{ \alpha:\mbox{$\Sigma$-valuation for $\Var(\phi)$} \mid \cM^\Sigma, \alpha \models \phi \}$.
A constraint formula $\phi$ is called \emph{valid w.r.t.\ $\cM^\Sigma$}, written as $\cM^\Sigma \models \phi$, if
$\cM^\Sigma, \alpha \models \phi$ for any $\Sigma$-valuation $\alpha$ for $\Var(\phi)$.
A constraint formula $\phi$ is called \emph{satisfiable w.r.t.\ $\cM^\Sigma$} if
$\cM^\Sigma, \alpha \models \phi$ for some $\Sigma$-valuation $\alpha$ for $\Var(\phi)$.
Note that $\phi$ is \emph{unsatisfiable w.r.t.\ $\cM^\Sigma$} if and only if
$\cM^\Sigma, \alpha \models \neg\phi$ for any $\Sigma$-valuation $\alpha$ for $\Var(\phi)$.

A \emph{constrained term} of sort $\iota \in \cS$ is a pair $\CTerm{s}{\phi}$ of a term $s : \iota \in T(\Sigma,\cV)$ and a constraint formula $\phi$.
The \emph{state predicate semantics} of a constrained term $\CTerm{s}{\phi}$ is the set $\Eval{\CTerm{s}{\phi}}$ of ground instances of $s$ w.r.t.\ $\phi$:
$\Eval{\CTerm{s}{\phi}} = \{ \alpha(s) \mid \alpha \in \ValSem{\phi} \}$.
We naturally extend state predicate semantics to sets of constrained terms:
For a set $U$ of constrained terms,
$\Eval{U} = \bigcup_{\CTerm{s}{\phi} \in U} \Eval{\CTerm{s}{\phi}}$.

A \emph{logically constrained rewrite rule} over an $(\cS,\leq)$-sorted signature modulo built-ins $\Sigma$ is a triple $\ell \to r ~ \Constraint{\varphi}$, where
$\ell$ and $r$ are terms in $T(\Sigma,\cV)$ having the same sort,
$\ell$ is not a variable,
and $\phi$ is a constraint formula in $\CF_\Sigma$.
A \emph{logically constrained term rewrite system} (LCTRS, for short) over $\Sigma$ is a set $\cR$ of logically constrained rewrite rules over $\Sigma$.
The \emph{order-sorted rewrite relation} $\to_\cR$ over $M^\Sigma$ is defined as follows:
For any terms $s,t \in M^\Sigma$,
$s \to_\cR t$
if and only if
there exist a rule $\ell \to r ~\Constraint{\varphi} \in \cR$,
a position $p$ of $s$,
and
a $\Sigma$-valuation $\alpha$ for $\Var(\ell,r,\varphi)$
such that
$s|_p = \alpha(\ell)$,
$t = s[\alpha(r)]_p$,
and
$\cM^\Sigma, \alpha \models \varphi$.
The set of \emph{normal forms} of $\cR$ is denoted by $\NF[\cR]$.
Note that $\NF[\cR] \subseteq M^\Sigma$.
The LCTRS $\cR$ induces the ARS $(M^\Sigma,\to_\cR)$.
By definition, it is clear that
$\NF[\cR] = \NF[(M^\Sigma,\to_\cR)]$.

\begin{example}
	Let $\cSfact=\{\sort{cfg},\sort{bool},\sort{int}\}$ and $\cSfact^b = \{\sort{bool},\sort{int}\}$.
	Let $\SigmaFact$ be an $(\cSfact,\leq)$-sorted signature modulo built-ins with a built-in subsignature $\SigmaFact^b$ such that
	\begin{itemize}
		\item $\SigmaFact^b =
			      \{ \top,\bot: \sort{bool} \}
			      \cup
			      \{ n : \sort{int} \mid n \in \mathbb{Z} \}
			      \cup
			      \{
			      {\neg} : \sort{bool} \Rightarrow \sort{bool}, ~
			      {\land},{\lor},{\Rightarrow},{\Leftrightarrow}: \sort{bool} \times \sort{bool} \Rightarrow \sort{bool}
			      \}
			      \cup
			      \{
			      {+},{-},{\times},/,\symb{mod} : \sort{int} \times \sort{int} \Rightarrow \sort{int}, ~
			      {<},{\leq},{>},{\geq},{=},{\ne} : \sort{int} \times \sort{int} \Rightarrow \sort{bool}
			      \}
		      $,
		      and
		\item $\Sigma =
			      \{
			      \symb{fact} : \sort{int} \Rightarrow \sort{cfg}, ~
			      \symb{subjact} : \sort{int} \times \sort{int} \Rightarrow \sort{cfg}, ~
			      \symb{return} : \sort{int} \Rightarrow \sort{cfg}
			      \}
			      \cup \SigmaFact^b
		      $.
	\end{itemize}
	For $\SigmaFact^b$, we give a built-in model $\cMfact^b$ as follows:
	\begin{itemize}
		\item $M^b_{\sort[\scriptsize]{bool}} = \{\top,\bot\}$,
		\item $\neg^{\cMfact^b}(x) = \neg x$ for any $x \in M^b_{\sort[\scriptsize]{bool}}$,
		\item $\var{bop}^{\cMfact^b}(x_1,x_2) = x_1 \mathop{\var{bop}} x_2$ for any $x_1,x_2 \in M^b_{\sort[\scriptsize]{bool}}$ and $\var{bop} \in \SigmaFact^b_{\sort[\scriptsize]{bool},\sort[\scriptsize]{bool},\sort[\scriptsize]{bool}}$,
		\item $M^b_{\sort[\scriptsize]{int}} = \mathbb{Z}$,
		\item $\var{aop}^{\cMfact^b}(x_1,x_2) = x_1 \mathop{\var{aop}} x_2$ for any $x_1,x_2 \in M^b_{\sort[\scriptsize]{int}}$ and $\var{aop} \in \SigmaFact^b_{\sort[\scriptsize]{int},\sort[\scriptsize]{int},\sort[\scriptsize]{int}}$,
		      and
		\item $\var{cmp}^{\cMfact^b}(x_1,x_2) = x_1 \mathop{\var{cmp}} x_2$ for any $x_1,x_2 \in M^b_{\sort[\scriptsize]{int}}$ and $\var{cmp} \in \SigmaFact^b_{\sort[\scriptsize]{int},\sort[\scriptsize]{int},\sort[\scriptsize]{bool}}$.
	\end{itemize}
	The built-in model $\cMfact^b$ is extended to the model $\cMfact^\SigmaFact$ for $\SigmaFact$ as follows:
	\begin{itemize}
		\item $M^{\SigmaFact}_{\sort[\scriptsize]{bool}} = M^b_{\sort[\scriptsize]{bool}} = \{\top,\bot\}$,
		\item $M^{\SigmaFact}_{\sort[\scriptsize]{int}} = M^b_{\sort[\scriptsize]{int}} = \mathbb{Z}$,
		\item $\neg^{\cMfact}(x) = \neg^{\cMfact^b}(x)$ for any $x \in M^b_{\sort[\scriptsize]{bool}}$,
		\item $\var{bop}^{\cMfact}(x_1,x_2) = \var{bop}^{\cMfact^b}(x_1,x_2)$ for any $x_1,x_2 \in M^b_{\sort[\scriptsize]{bool}}$ and $\var{bop} \in \SigmaFact^b_{\sort[\scriptsize]{bool},\sort[\scriptsize]{bool},\sort[\scriptsize]{bool}}$,
		\item $\var{aop}^{\cMfact}(x_1,x_2) = \var{aop}^{\cMfact^b}(x_1,x_2)$ for any $x_1,x_2 \in M^b_{\sort[\scriptsize]{int}}$ and $\var{aop} \in \SigmaFact^b_{\sort[\scriptsize]{int},\sort[\scriptsize]{int},\sort[\scriptsize]{int}}$,
		\item $\var{cmp}^{\cMfact}(x_1,x_2) = \var{cmp}^{\cMfact^b}(x_1,x_2)$ for any $x_1,x_2 \in M^b_{\sort[\scriptsize]{int}}$ and $\var{cmp} \in \SigmaFact^b_{\sort[\scriptsize]{int},\sort[\scriptsize]{int},\sort[\scriptsize]{bool}}$,
		\item $M^{\SigmaFact}_{\sort[\scriptsize]{cfg}} =  \{ t \mid t : \sort{cfg} \in T(\{\symb{fact},\symb{subfact},\symb{return}\}\cup\{\top,\bot\}\cup\mathbb{Z})\}$,
		\item $\symb{fact}^{\cMfact}(x) = \symb{fact}(x)$ for any $x \in M^{\SigmaFact}_{\sort[\scriptsize]{int}}$,
		\item $\symb{subfact}^{\cMfact}(x_1,x_2) = \symb{subfact}(x_1,x_2)$ for any $x_1,x_2 \in M^{\SigmaFact}_{\sort[\scriptsize]{int}}$,
		      and
		\item $\symb{return}^{\cMfact}(x) = \symb{return}(x)$ for any $x \in M^{\SigmaFact}_{\sort[\scriptsize]{int}}$.
	\end{itemize}
	Note that $M^{\SigmaFact} = M^{\SigmaFact}_{\sort[\scriptsize]{boot}} \cup M^{\SigmaFact}_{\sort[\scriptsize]{int}} \cup M^{\SigmaFact}_{\sort[\scriptsize]{cfg}}$
	The term $\symb{fact}(1)$ is included in $M^{\SigmaFact}$, but $\symb{fact}(1+1)$ is not.
	The following LCTRS over $\SigmaFact$ calculates the \emph{factorial} function over $\mathbb{Z}$ iteratively:
	\[
		\cRfact =
		\left\{
		\begin{array}{r@{\>}c@{\>}ll}
			\symb{fact}(x)      & \to & \symb{subfact}(x,1)                                       \\
			\symb{subfact}(x,y) & \to & \symb{return}(y)                  & \Constraint{ x\leq0 } \\
			\symb{subfact}(x,y) & \to & \symb{subfact}(x - 1, x \times y) & \Constraint{ x > 0 }  \\
		\end{array}
		\right\}
	\]
	The term $\symb{fact}(3)$ is reduced by $\cRfact$ to $6$:
	$
		\symb{fact}(3)
		\to_{\cRfact}
		\symb{subfact}(3,1 )
		\to_{\cRfact}
		\symb{subfact}(2,3)
		\to_{\cRfact}
		\symb{subfact}(1,6)
		\to_{\cRfact}
		\symb{subfact}(0,6)
		\to_{\cRfact}
		\symb{return}(6)
	$.
\end{example}

In the remainder of this section, we use $\cR$ as an LCTRS over an $(\cS,\leq)$-sorted signature modulo built-ins $\Sigma$ without notice.

\subsection{All-Path Reachability Predicates of LCTRSs}

Since we formally introduced LCTRSs in \Cref{subsec:LCTRSs}, we revisit the APR framework for LCTRSs~\cite{CL18}.

\begin{definition}[\cite{CL18}]
	An \emph{APR predicate} over $\Sigma$ is a pair $\ReachPred{\CTerm{s}{\phi}}{\CTerm{t}{\psi}}$ of constrained terms $\CTerm{s}{\phi},\CTerm{t}{\psi}$, which may share variables.
	We assume w.l.o.g.\ that $\CTerm{s}{\phi}$ and $\CTerm{t}{\psi}$ have the same sort.
	We say that LCTRS $\cR$ \emph{demonically satisfies} $\ReachPred{\CTerm{s}{\phi}}{\CTerm{t}{\psi}}$ (or $\ReachPred{\CTerm{s}{\phi}}{\CTerm{t}{\psi}}$ is \emph{demonically valid w.r.t.\ $\cR$}),
	written as $\PartiallyValid[\cR]{\ReachPred{\CTerm{s}{\phi}}{\CTerm{t}{\psi}}}$,
	if
	$\PartiallyValid[(M^\Sigma,\to_\cR)]{\ReachPred{\Eval{\CTerm{\alpha(s)}{\alpha(\phi)}}}{\Eval{\CTerm{\alpha(t)}{\alpha(\psi)}}}}$
	for any $\Sigma$-valuation $\alpha$ for $\Var(s,\phi)\cap\Var(t,\psi)$.
	An \emph{execution path} of $\cR$ is an execution path of $(M^\Sigma,\to_\cR)$.
\end{definition}
Regarding APR predicates $\ReachPred{\CTerm{s}{\phi}}{\CTerm{t}{\psi}}$ over $\Sigma$, we assume w.l.o.g.\ that $\Var(s,\phi) \cap \Var(t,\psi) = \emptyset$.

\begin{theorem}
	Let $\ReachPred{\CTerm{s}{\phi}}{\CTerm{t}{\psi}}$ be an APR predicate over $\Sigma$,
	$\iota$ be the sort of $\CTerm{s}{\phi}$,
	and
	the shared variables of $\CTerm{s}{\phi}$ and $\CTerm{t}{\psi}$ be $x_1:\iota_1,\ldots,x_n:\iota_n$ ($n > 0$),
	i.e., $\Var(s,\phi) \cap \Var(t,\psi) = \{x_1,\ldots,x_n\}$.
	Let $\bar{\iota}$ be a fresh sort not in $\cS$, $c: \iota \times \iota_1 \times \cdots \times \iota_n \Rightarrow \bar{\iota}$ be an $n+1$-ary function symbol not in $\Sigma$,
	and
	$\Sigma'$ be an extension of $\Sigma$ by adding $\bar{\iota}$ and $c$ to $\cS$ and $\Sigma$, respectively:
	$\cS' = \cS \cup \{\bar{\iota}\}$ and $\Sigma' = \Sigma \cup \{ c:\iota \times \iota_1 \times \cdots \times \iota_n \Rightarrow \bar{\iota} \}$.
	Let $\cR'$ be an LCTRS over $\Sigma'$ such that the rules in $\cR'$ are the same as those in $\cR$,
	i.e.,
	$\cR$ and $\cR'$ are equivalent in the sense of including rules but they induce the ARSs $(M^\Sigma,\to_\cR)$ and $(M^{\Sigma'},\to_{\cR'})$.
	Let $x'_1:\iota_1,\ldots,x'_n:\iota_n$ be fresh pairwise different variables and $\eta$ be a renaming such that $\eta(x_i)=x'_i$ for $1\leq i \leq n$ and $\eta(y) = y$ for each variable $y \in \cV \setminus \{x_1,\ldots,x_n\}$.
	Then,
	$\PartiallyValid[\cR]{\ReachPred{\CTerm{s}{\phi}}{\CTerm{t}{\psi}}}$
	if and only if
	$\PartiallyValid[\cR']{\ReachPred{\CTerm{c(s,x_1,\ldots,x_n)}{\phi}}{\CTerm{c(\eta(t),x'_1,\ldots,x'_n)}{\eta(\psi)}}}$.\footnote{Note that $\Var(c(s,x_1,\ldots,x_n),\phi) \cap \Var(c(\eta(t),x'_1,\ldots,x'_n),\eta(\psi)) = \emptyset$.}
\end{theorem}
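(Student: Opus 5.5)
The plan is to reduce both sides to statements about sets of ground terms and execution paths, and to compare the two ARSs $(M^\Sigma,\to_\cR)$ and $(M^{\Sigma'},\to_{\cR'})$ directly.

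\textbf{Structure of $\to_{\cR'}$.} The first step concerns terms rooted by $c$. Every ground term of sort $\bar\iota$ in $M^{\Sigma'}$ has the form $c(u,a_1,\ldots,a_n)$ with $u\in M^\Sigma_\iota$ and $a_i\in M^\Sigma_{\iota_i}$. Since $c$ is fresh and of a fresh sort, no rule of $\cR'$ has a left-hand side rooted by $c$. Hence
\[
c(u,\vec a)\to_{\cR'} w \iff w=c(u',\vec a') \mbox{ with either } u\to_\cR u',\ \vec a'=\vec a \mbox{ or } u=u',\ a_i\to_\cR a'_i \mbox{ for one } i .
\]
The annoying point is that the arguments $a_i$ could themselves be reducible. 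I would first check that the $a_i$ are built-in values, or more generally that the sorts $\iota_i$ of shared variables are such that $M^\Sigma_{\iota_i}$ consists of normal forms. If this fails, I would handle it by noting that reductions inside the $a_i$ only add steps. In the worst case I would restrict attention to the first component.

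\textbf{Key identity.} Let $\alpha$ range over valuations of $\vec x$. The source set is
\[
S=\Eval{\CTerm{c(s,\vec x)}{\phi}}=\bigcup_\alpha \{c(u,\alpha(\vec x)) \mid u\in\Eval{\CTerm{\alpha(s)}{\alpha(\phi)}}\}.
\]
The target set is $T=\bigcup_\beta \{c(w,\beta(\vec x')) \mid w\in\Eval{\CTerm{\beta(\eta t)}{\beta(\eta\psi)}}\}$. Because $\eta$ renames only the shared variables, the tag $\beta(\vec x')$ carries exactly the valuation of $\vec x$ used in $t$. Consequently $c(w,\vec a)\in T$ iff $w\in\Eval{\CTerm{\alpha(t)}{\alpha(\psi)}}$ for the $\alpha$ with $\alpha(\vec x)=\vec a$.

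\textbf{Both directions.} Assuming the tags are inert, execution paths of $\cR'$ from $c(u,\vec a)$ correspond exactly to execution paths of $\cR$ from $u$, with the tag $\vec a$ frozen. Normal forms correspond as well: $c(u,\vec a)\in\NF[\cR']$ iff $u\in\NF[\cR]$. By the key identity, a path of $\cR'$ from $S$ reaches $T$ iff the underlying $\cR$-path from $\Eval{\CTerm{\alpha(s)}{\alpha(\phi)}}$ reaches $\Eval{\CTerm{\alpha(t)}{\alpha(\psi)}}$ for the corresponding $\alpha$. Therefore a finite execution path of $\cR'$ from $S$ avoiding $T$ exists iff, for some $\alpha$, a finite execution path of $\cR$ avoiding the $\alpha$-instance of the target exists. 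This uses the characterization of partial validity through finite execution paths (``every finite execution path reaches $Q$''), which is how the paper already reads $\nu\widehat{\DVP}$ in the proofs of \Cref{thm:soundness-of-APR-proofs} and \Cref{prop:disproof-criterion}. The equivalence then follows.

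\textbf{Main obstacle.} I expect the hardest step to be the inertness of the tag components. If some $\iota_i$ is non-built-in and its terms are reducible, paths of $\cR'$ can interleave steps on the $a_i$. Such steps create new elements $c(u,\vec a')$ whose tag no longer matches the original $\alpha$, which would break the correspondence. I would try first to show that the tag changing can only delay reaching a normal form. Failing that, I would argue that finite paths of $\cR'$ ending in a normal form must normalize the tags, and compare against the target with normalized tags. This still requires care, so I would likely add or justify the assumption that shared variables range over irreducible values.
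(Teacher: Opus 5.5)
\textbf{Same approach as the paper, but the obstacle you flag is a real gap that your proposed repairs cannot close.} Your core argument is the paper's. Both directions go by contradiction through a finite execution path. That path is projected onto, or lifted from, the first argument of $c$ with the tag frozen at $\alpha(\vec{x})$. The renaming $\eta$ makes the tag of a target element record exactly the valuation of the shared variables.

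Your repairs (``tag steps only add steps'', ``compare against the target with normalized tags'') fail because without an extra hypothesis the statement is false in both directions.

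Only-if direction: take a non-built-in sort with constants $\symb{a},\symb{b},\symb{d},\symb{e}$, a shared variable $x$ of that sort, $\cR=\{\symb{a}\to\symb{b},\ \symb{d}\to\symb{e}\}$, and $\ReachPred{\CTerm{\symb{d}}{x=\symb{a}}}{\CTerm{\symb{e}}{x=\symb{a}}}$. Every instance is $\ReachPred{\emptyset}{\emptyset}$ or $\ReachPred{\{\symb{d}\}}{\{\symb{e}\}}$, and both are partially valid. Yet in $\cR'$ the path $c(\symb{d},\symb{a})\to_{\cR'}c(\symb{d},\symb{b})\to_{\cR'}c(\symb{e},\symb{b})\in\NF[\cR']$ starts in the source and never meets the target $\{c(\symb{e},\symb{a})\}$.

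If direction: take $\cR=\{\symb{a}\to\symb{b}\}$ and $\ReachPred{\CTerm{\symb{d}}{x=\symb{a}}}{\CTerm{\symb{d}}{x=\symb{b}}}$. The $\cR'$-predicate is partially valid, since its only execution path is $c(\symb{d},\symb{a})\to_{\cR'}c(\symb{d},\symb{b})$. But the instance $\alpha(x)=\symb{a}$ is $\ReachPred{\{\symb{d}\}}{\emptyset}$ with $\symb{d}$ irreducible, so it is not partially valid. The lifted path ending in $c(u_m,\beta(\vec{x}))$ need not be maximal.

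The paper's proof has exactly the same hole. It infers from ``$c$ is not defined by $\cR'$'' that every $u_i$ has the form $c(u'_i,\alpha(x_1),\ldots,\alpha(x_n))$, which ignores steps inside the tag. In the if-direction it treats $c(u_m,\beta(x_1),\ldots,\beta(x_n))$ as a normal form without justification.

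So the assumption you mention last cannot be justified, only added. Require $M^\Sigma_{\iota_i}\subseteq\NF[\cR]$ for $1\le i\le n$; for example, shared variables of built-in sorts and no left-hand side that is a value. Under that hypothesis:
\begin{itemize}
\item $c(u,\vec{a})\to_{\cR'}w$ iff $w=c(u',\vec{a})$ with $u\to_\cR u'$;
\item $c(u,\vec{a})\in\NF[\cR']$ iff $u\in\NF[\cR]$.
\end{itemize}
Together with your key identity for the target, this gives both directions. Your proof then coincides with the paper's proof, corrected by the same hypothesis.
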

\begin{proof}
	We first show the \textit{only-if} part by contradiction.
	Assume that
	$\PartiallyValid[\cR]{\ReachPred{\CTerm{s}{\phi}}{\CTerm{t}{\psi}}}$
	and
	$\NotPartiallyValid[\cR']{\ReachPred{\CTerm{c(s,x_1,\ldots,x_n)}{\phi}}{\CTerm{c(\eta(t),x'_1,\ldots,x'_n)}{\eta(\psi)}}}$.
	Then, there exists a finite execution path $u_0 \to_{\cR'} u_1 \to_{\cR'} \cdots \to_{\cR'} u_m \in \NF[(M^{\Sigma'},\to_{\cR'})]$ such that
	\begin{itemize}
		\item $u_0 \in \Eval[\cM^{\Sigma'}]{\CTerm{c(s,x_1,\ldots,x_n)}{\phi}}$,
		      and
		\item $u_i \notin \Eval[\cM^{\Sigma'}]{\CTerm{c(\eta(t),x'_1,\ldots,x'_n)}{\eta(\psi)}}$ for any $i \in \{0,\ldots,m\}$.
	\end{itemize}
	Since $c$ is not defined by $\cR'$, the reduction steps of the execution path are not topmost, and thus
	\begin{itemize}
		\item there is a $\Sigma'$-valuation $\alpha$ such that $u_0 = \alpha(c(s,x_1,\ldots,x_n)) \in \Eval[\cM^{\Sigma'}]{\CTerm{c(s,x_1,\ldots,x_n)}{\phi}}$ (and thus $\cM^{\Sigma'},\alpha \models \phi$),
		\item $u_i$ is of the form $c(u'_i,\alpha(x_1),\ldots,\alpha(x_n))$ for any $i \in \{1,\ldots,m\}$,
		\item $u'_m \in \NF[(M^\Sigma,\to_\cR)]$.
	\end{itemize}
	Let $\beta$ be the $\Sigma$-valuation from $\{x_1,\ldots,x_n\}$ to $M^\Sigma$ such that $\beta(x_i)=\alpha(x_i)$ for each $i \in \{1,\ldots,n\}$.
	Let $\beta'$ be the $\Sigma$-valuation from $\{x'_1,\ldots,x'_n\}$ to $M^\Sigma$ such that $\beta'(x'_i)=\beta(x_i)$ for each $i \in \{1,\ldots,n\}$.
	Then, we have that $\beta = \beta'\circ\eta$.
	Since $\PartiallyValid[\cR]{\ReachPred{\CTerm{s}{\phi}}{\CTerm{t}{\psi}}}$,
	we have that $\PartiallyValid[(M^\Sigma,\to_\cR)]{\ReachPred{\Eval{\CTerm{\beta(s)}{\beta(\phi)}}}{\Eval{\CTerm{\beta(t)}{\beta(\psi)}}}}$, and thus
	$\PartiallyValid[(M^\Sigma,\to_\cR)]{\ReachPred{\Eval{\CTerm{\beta(s)}{\beta(\phi)}}}{\Eval{\CTerm{\beta'(\eta(t))}{\beta'(\eta(\psi))}}}}$.
	By definition, $\alpha$ is an extension of $\beta$, and thus
	$\alpha(s) \in \Eval{\CTerm{s}{\phi}} = \Eval{\CTerm{\beta(s)}{\beta(\phi)}}$.
	In addition,
	we have that
	$u'_i \notin \Eval{\CTerm{\eta(t)}{\eta(\psi)}} = \Eval{\CTerm{\beta'(\eta(t))}{\beta'(\eta(\psi))}}$ for any $i \in \{0,\ldots,m\}$:
	If $u'_i \in \Eval{\CTerm{\eta(t)}{\eta(\psi)}}$ for some $i \in \{0,\ldots,m\}$,
	then we have that $u_i \in \Eval[\cM^{\Sigma'}]{\CTerm{c(\eta(t),x'_1,\ldots,x'_n)}{\eta(\psi)}}$.
	Then, there exists a finite execution path $\alpha(s) \to_\cR u_1 \to_\cR \cdots \to_\cR u_m \in \NF[(M^\Sigma,\to_\cR)]$ such that
	$\alpha(s) \in \Eval{\CTerm{\beta(s)}{\beta(\phi)}}$ and $\alpha(s),u_1,\ldots,u_m \notin \Eval{\CTerm{\beta'(\eta(t))}{\beta'(\eta(\psi))}}$.
	This contradicts the fact that $\PartiallyValid[(M^\Sigma,\to_\cR)]{\ReachPred{\Eval{\CTerm{\beta(s)}{\beta(\phi)}}}{\Eval{\CTerm{\beta(t)}{\beta(\psi)}}}}$.

	Next, we show the \textit{if} part by contradiction.
	Assume that
	$\NotPartiallyValid[\cR]{\ReachPred{\CTerm{s}{\phi}}{\CTerm{t}{\psi}}}$
	and
	$\PartiallyValid[\cR']{\ReachPred{\CTerm{c(s,x_1,\ldots,x_n)}{\phi}}{\CTerm{c(\eta(t),x'_1,\ldots,x'_n)}{\eta(\psi)}}}$.
	Then, there is a $\Sigma$-valuation $\beta$ from $\{x_1,\ldots,x_n\}$ to $M^\Sigma$ such that
	$\NotPartiallyValid[(M^\Sigma,\to_\cR)]{\ReachPred{\Eval{\CTerm{\beta(s)}{\beta(\phi)}}}{\Eval{\CTerm{\beta(t)}{\beta(\psi)}}}}$.
	Thus, there exists a finite execution path
	$u_0 \to_\cR u_1 \to_\cR \cdots \to_\cR u_m \in \NF[(M^\Sigma,\to_\cR)]$ such that
	\begin{itemize}
		\item $u_0 \in \Eval{\CTerm{\beta(s)}{\beta(\phi)}}$,
		      and
		\item $u_i \notin \Eval{\CTerm{\beta(t)}{\beta(\psi)}}$.
	\end{itemize}
	Let $\alpha$ be a $\Sigma$-valuation for $\cV\setminus\{x_1,\ldots,x_n\}$ such that
	$\cM^\Sigma, \alpha \models \beta(\phi)$.
	Then, we have that $\cM^\Sigma, \alpha\circ\beta \models \phi$.
	By definition, we have that
	\[
		c(u_0,\beta(x_1),\ldots,\beta(x_n))
		\to_{\cR'}
		c(u_1,\beta(x_1),\ldots,\beta(x_n))
		\to_{\cR'}
		\cdots
		\to_{\cR'}
		c(u_m,\beta(x_1),\ldots,\beta(x_n))
	\]
	and
	\begin{itemize}
		\item $c(u_0,\beta(x_1),\ldots,\beta(x_n)) \in \Eval[\cM^{\Sigma'}]{\CTerm{c(\beta(s),\beta(x_1),\ldots,\beta(x_n))}{\beta(\phi)}}$,
		\item $\Eval[\cM^{\Sigma'}]{\CTerm{c(\beta(s),\beta(x_1),\ldots,\beta(x_n))}{\beta(\phi)}} \subseteq \Eval[\cM^{\Sigma'}]{\CTerm{c(s,x_1,\ldots,x_n)}{\phi}}$,
		      and
		\item $c(u_i,\beta(x_1),\ldots,\beta(x_n)) \notin \Eval[\cM^{\Sigma'}]{\CTerm{c(\beta(t),\beta(x_1),\ldots,\beta(x_n))}{\beta(\phi)}}$ for any $i \in \{0,\ldots,m\}$.
	\end{itemize}

	We now show that $c(u_i,\beta(x_1),\ldots,\beta(x_n)) \notin \Eval[\cM^{\Sigma'}]{\CTerm{c(\eta(t),x'_1,\ldots,x'_n)}{\eta(\phi)}}$ for any $i \in \{0,\ldots,m\}$.
	We proceed by contradiction.
	Assume that $c(u_i,\beta(x_1),\ldots,\beta(x_n)) \in \Eval[\cM^{\Sigma'}]{\CTerm{c(\eta(t),x'_1,\ldots,x'_n)}{\eta(\phi)}}$ for some $i \in \{0,\ldots,m\}$.
	Then, there exists a $\Sigma'$-valuation $\alpha'$ for $\Var(\eta(\phi))$ such that
	$\cM^{\Sigma'}, \alpha' \models \eta(\phi)$
	and
	$c(u_i,\beta(x_1),\ldots,\beta(x_n)) = \alpha'(c(\eta(t),x'_1,\ldots,x'_n))$,
	and thus
	$u_i = \alpha'(\eta(t)) = \alpha'(\beta(t))$
	and
	$\beta(x_i) = \alpha'(x'_i) = \alpha'(\eta(x_i))$ for each $i \in \{1,\ldots,n\}$.
	Hence, we have that
	$c(u_i,\beta(x_1),\ldots,\beta(x_n)) \in \Eval[\cM^{\Sigma'}]{\CTerm{c(\beta(t),\beta(x_1),\ldots,\beta(x_n))}{\beta(\phi)}}$.
	This contradicts the fact that
	$c(u_i,\beta(x_1),\ldots,\beta(x_n)) \notin \Eval[\cM^{\Sigma'}]{\CTerm{c(\beta(t),\beta(x_1),\ldots,\beta(x_n))}{\beta(\phi)}}$.

	In summary, we have the finite execution path
	that contradicts the assumption
	$\PartiallyValid[\cR']{\ReachPred{\CTerm{c(s,x_1,\ldots,x_n)}{\phi}}{\CTerm{c(\eta(t),x'_1,\ldots,x'_n)}{\eta(\psi)}}}$.
\end{proof}

For an APR predicate $\ReachPred{\CTerm{s}{\phi}}{\CTerm{t}{\psi}}$ with $\Var(s,\phi)\cap\Var(t,\psi)=\emptyset$,
by definition, we have the following property.
\begin{proposition}
	For an APR predicate $\ReachPred{\CTerm{s}{\phi}}{\CTerm{t}{\psi}}$ over $\Sigma$ with $\Var(s,\phi)\cap\Var(t,\psi)=\emptyset$,
	$\PartiallyValid[\cR]{\ReachPred{\CTerm{s}{\phi}}{\CTerm{t}{\psi}}}$
	if and only if
	$\PartiallyValid[(M^\Sigma,\to_\cR)]{\ReachPred{\Eval{\CTerm{s}{\phi}}}{\Eval{\CTerm{t}{\psi}}}}$.
\end{proposition}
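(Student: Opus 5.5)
The plan is to unfold the definition of demonic validity w.r.t.\ $\cR$ and to show that, when the two constrained terms share no variables, the universal quantification over valuations of the shared variables is trivial.

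By definition, $\PartiallyValid[\cR]{\ReachPred{\CTerm{s}{\phi}}{\CTerm{t}{\psi}}}$ holds if and only if
$\PartiallyValid[(M^\Sigma,\to_\cR)]{\ReachPred{\Eval{\CTerm{\alpha(s)}{\alpha(\phi)}}}{\Eval{\CTerm{\alpha(t)}{\alpha(\psi)}}}}$
for every $\Sigma$-valuation $\alpha$ for $\Var(s,\phi)\cap\Var(t,\psi)$. Under the hypothesis $\Var(s,\phi)\cap\Var(t,\psi)=\emptyset$, the domain of $\alpha$ is empty. Hence there is exactly one such valuation, the empty valuation $\alpha_\emptyset$, and it acts as the identity on terms and constraint formulas: $\alpha_\emptyset(s)=s$, $\alpha_\emptyset(\phi)=\phi$, and similarly for $t,\psi$.

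The universal statement therefore collapses to the single instance
$\PartiallyValid[(M^\Sigma,\to_\cR)]{\ReachPred{\Eval{\CTerm{s}{\phi}}}{\Eval{\CTerm{t}{\psi}}}}$,
which gives both directions at once. For the if direction, the single instance covers every valuation, since $\alpha_\emptyset$ is the only one. For the only-if direction, we instantiate the universal statement with $\alpha_\emptyset$.

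The only step needing care is to check that $\Eval{\CTerm{\alpha_\emptyset(s)}{\alpha_\emptyset(\phi)}}$ really coincides with $\Eval{\CTerm{s}{\phi}}$. This is immediate because applying an empty substitution changes neither $s$ nor $\phi$. There is no genuine obstacle, and the proof is a direct unfolding of definitions.
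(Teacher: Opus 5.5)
Your proof is correct and is the same argument the paper relies on: the paper gives no separate proof and states the proposition holds ``by definition''. Because $\Var(s,\phi)\cap\Var(t,\psi)=\emptyset$, the only valuation of the shared variables is the empty one. The universally quantified condition therefore collapses to the single instance $\PartiallyValid[(M^\Sigma,\to_\cR)]{\ReachPred{\Eval{\CTerm{s}{\phi}}}{\Eval{\CTerm{t}{\psi}}}}$.
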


The derivatives of LCTRSs is defined as follows.
\begin{definition}[\cite{CL18}]
	The set of \emph{derivatives} of a constrained term $\CTerm{s}{\phi}$, denoted by $\Delta_\cR(\CTerm{s}{\phi})$, is defined as follows:
	\[
		\Delta_\cR(\CTerm{s}{\phi}) = \{ \CTerm{s[r]_p}{\phi \land (s = s[\ell]_p) \land \varphi} \mid (\ell \to r ~\Constraint{\varphi}) \in \cR, \mbox{$s|_p$ is not a variable} \}
	\]
	where $\ell \to r ~\Constraint{\varphi}$ is renamed so that $\Var(s,\phi) \cap \Var(\ell,r,\varphi) = \emptyset$.
	A constrained term $\CTerm{s}{\phi}$ is called \emph{derivable w.r.t.\ $\cR$} (\emph{$\cR$-derivable}, for short) if $\Delta_\cR(\CTerm{s}{\phi}) \ne \emptyset$.
\end{definition}
The derivatives have the following property.
\begin{proposition}[\cite{CL18}]
	\label{prop:property-of-Delta}
	Let $\CTerm{s}{\phi}$ be a constrained term over $\Sigma$.
	Then, $\Eval{\Delta_\cR(\CTerm{s}{\phi})} = \Derivative[(M^\Sigma,\to_\cR)](\Eval{\CTerm{s}{\phi}})$.
\end{proposition}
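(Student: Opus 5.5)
The statement to prove is $\Eval{\Delta_\cR(\CTerm{s}{\phi})} = \Derivative[(M^\Sigma,\to_\cR)](\Eval{\CTerm{s}{\phi}})$. I would prove it by double inclusion, unfolding both sides to the level of valuations. By definition, $\Eval{\Delta_\cR(\CTerm{s}{\phi})}$ is the union, over rules $\ell \to r ~\Constraint{\varphi}$ in $\cR$ (renamed apart from $s,\phi$) and non-variable positions $p$ of $s$, of the sets $\{\alpha(s[r]_p) \mid \cM^\Sigma,\alpha \models \phi \land (s = s[\ell]_p) \land \varphi\}$. The right-hand side is $\{u' \mid \exists \beta \in \ValSem{\phi}.\ \beta(s) \to_\cR u'\}$.

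\textbf{Inclusion $\subseteq$.} Take $\alpha$ with $\cM^\Sigma,\alpha\models \phi\land (s=s[\ell]_p)\land\varphi$. Since $\alpha$ maps into $M^\Sigma = T(\Sigma^c)$ and equality is interpreted in $\cM^\Sigma$, the equation $s = s[\ell]_p$ gives $\alpha(s)|_p = \alpha(\ell)$. Let $\beta = \alpha|_{\Var(s,\phi)}$, so $\alpha(s) = \beta(s) \in \Eval{\CTerm{s}{\phi}}$. Let $\alpha$ restricted to $\Var(\ell,r,\varphi)$ be the rule valuation. Then $\varphi$ holds, $\alpha(s)|_p = \alpha(\ell)$, and $\alpha(s[r]_p) = \alpha(s)[\alpha(r)]_p$. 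Hence $\beta(s) \to_\cR \alpha(s[r]_p)$ at position $p$. This uses that the renaming makes the variable sets disjoint.

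\textbf{Inclusion $\supseteq$.} Take $\beta \in \ValSem{\phi}$ and a step $\beta(s) \to_\cR u'$ via a rule, a position $q$, and a valuation $\gamma$ with $\beta(s)|_q = \gamma(\ell)$, $u' = \beta(s)[\gamma(r)]_q$, and $\varphi$ true under $\gamma$. The main obstacle is that $q$ need not be a non-variable position of $s$: it might lie inside $\beta(x)$ for some variable $x$ of $s$. Here I would use that $\Sigma$-valuations map to $M^\Sigma = T(\Sigma^c)$, which contains no calculation symbols, together with the LCTRS convention that (as in \cite{CL18}) the rules' left-hand sides are rooted by defined symbols, or more generally that values are irreducible. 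Then no redex lies strictly inside a value $\beta(x)$, so $q$ is a non-variable position $p$ of $s$. If this assumption is not available, I would state it explicitly as the hypothesis under which \cite[Theorem~1]{CL18} holds.

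With $q = p$ a non-variable position, combine $\beta$ and $\gamma$ into one valuation $\alpha$, which is well defined by disjointness. Then $\alpha$ satisfies $\phi$, $\varphi$, and $s = s[\ell]_p$, because $\alpha(s[\ell]_p) = \beta(s)[\gamma(\ell)]_p = \beta(s)$. Moreover $\alpha(s[r]_p) = u'$, so $u'$ lies in the left-hand side.

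Both inclusions are routine except for the position argument, which I would handle first.
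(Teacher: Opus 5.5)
The paper gives no proof of this proposition; it is imported from \cite{CL18}. So your argument has to stand on its own, and it fails at exactly the step you flagged. In the $\supseteq$ direction you combine two facts: valuations map into $M^\Sigma = T(\Sigma^c)$, which has no calculation symbols, and left-hand sides are rooted by defined symbols. From these you conclude that no redex of $\beta(s)$ lies inside some $\beta(x)$. That inference is invalid. $\Sigma^c$ omits only calculation symbols and contains every non-built-in symbol, defined ones included; the paper's own example states that $\symb{fact}(1) \in M^\Sigma$. So a variable of non-built-in sort can be instantiated by a reducible term, and the redex can sit at or below a variable position of $s$; ``at'' matters too, not only ``strictly inside''.

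In the paper's formalism the equality is in fact false without an extra hypothesis. Take the factorial LCTRS, $s = z$ with $z:\sort{cfg}$, and $\phi$ the formula $z = z$. The only position of $s$ is a variable position, so $\Delta_\cR(\CTerm{z}{\phi}) = \emptyset$ and the left-hand side is empty. Yet $\symb{fact}(3) \in \Eval{\CTerm{z}{\phi}}$ and $\symb{fact}(3) \to_\cR \symb{subfact}(3,1)$. Your $\subseteq$ argument is fine (modulo the remark below), but $\supseteq$ cannot be proved as stated.

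What is missing is therefore not a cleverer argument but a hypothesis added to the statement. One choice is to require $\beta(x) \in \NF[\cR]$ for every $\beta \in \ValSem{\phi}$ and every $x \in \Var(s)$. This holds, for example, when all variables of $s$ have built-in sorts, since built-in values are irreducible when left-hand sides are rooted by non-built-in symbols. Under such a hypothesis, every redex position $q$ of $\beta(s)$ is a non-variable position of $s$, and your gluing of $\beta$ and $\gamma$ via disjointness goes through. Your fallback of stating the assumption as a hypothesis is the right instinct, but it must actually be made: it does not follow from the paper's definitions.

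A smaller point affects both directions. Identities such as $\alpha(s[\ell]_p) = \alpha(s)[\alpha(\ell)]_p$, and the identification of positions of $\beta(s)$ with positions of $s$, require that no calculation symbol of $s$ lies on the path to $p$, because $\alpha$ evaluates calculation symbols. This follows from sorting: left-hand sides have non-built-in sorts, and calculation symbols take only built-in arguments. You should still say so explicitly.
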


We define \emph{$\cR$-runnability} of constrained terms by $\cR$-derivability and validity of a certain constraint formula~\cite{CL18}.
We call a constrained term $\CTerm{s}{\phi}$ \emph{runnable w.r.t.\ $\cR$} (\emph{$\cR$-runnable}, for short) if
$\CTerm{s}{\phi}$ is $\cR$-derivable and
the constraint formula $\phi \Rightarrow \bigvee_{i=1}^k \exists \vec{y_i}.\ \phi_i$ is valid w.r.t.\ $\cM^\Sigma$,
where $\Delta_\cR(\CTerm{s}{\phi}) = \{ \CTerm{s_1}{\phi_1},\ldots,\CTerm{s_k}{\phi_k} \}$  for some $k > 0$
and $\{\vec{y_i}\} = \Var(s_i,\phi_i) \setminus \Var(s,\phi)$ for each $i \in \{1,\ldots,k\}$.
The constraint formula has the following property.
\begin{proposition}
	\label{prop:R-runnability-properties}
	Let $\CTerm{s}{\phi}$ be a constrained term over $\Sigma$,
	$\Delta_\cR(\CTerm{s}{\phi}) = \{ \CTerm{s_1}{\phi_1},\ldots,\linebreak \CTerm{s_k}{\phi_k} \}$ for some $k \geq 0$,
	and $\{\vec{y_i}\} = \Var(s_i,\phi_i) \setminus \Var(s,\phi)$ for each $i \in \{1,\ldots,k\}$.
	Then,
	$\phi \Rightarrow \bigvee_{i=1}^k \exists \vec{y_i}.\ \phi_i$ is valid w.r.t.\ $\cM^\Sigma$
	if and only if
	$k\geq 1$ and
	$\Eval{\CTerm{s}{\phi}} \cap \NF[\cR] = \emptyset$.
\end{proposition}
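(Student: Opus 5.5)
We prove the two directions of the equivalence separately, unfolding $\cR$-runnability as stated before the proposition. The one semantic fact we need is \Cref{prop:property-of-Delta}: $\Eval{\Delta_\cR(\CTerm{s}{\phi})} = \Derivative[(M^\Sigma,\to_\cR)](\Eval{\CTerm{s}{\phi}})$. Concretely, a valuation $\alpha$ of $\Var(s,\phi)$ satisfies $\exists \vec{y_i}.\ \phi_i$ exactly when there is an extension $\alpha'$ with $\cM^\Sigma,\alpha'\models\phi_i$. By the shape of $\phi_i = \phi \land (s = s[\ell]_p) \land \varphi$, this means $\alpha(s)$ is rewritten at position $p$ by the renamed rule $\ell \to r~\Constraint{\varphi}$, yielding $\alpha'(s_i)$. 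We first record this pointwise statement as a small lemma: for every valuation $\alpha$ with $\cM^\Sigma,\alpha\models\phi$, the term $\alpha(s)$ is reducible if and only if $\cM^\Sigma,\alpha\models\bigvee_{i=1}^k\exists\vec{y_i}.\ \phi_i$.

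The lemma follows from the definition of $\to_\cR$. If $\alpha(s)\to_\cR t$ at position $q$ with rule $\ell\to r~\Constraint{\varphi}$ and valuation $\beta$, there are two cases. If $q$ is a non-variable position of $s$, then the corresponding derivative $\CTerm{s_i}{\phi_i}$ is witnessed by $\alpha\cup\beta$. If $q$ lies at or below a variable position of $s$, a narrowing step does not directly cover it; this is the main obstacle. I would first try to rule it out using the LCTRS conventions: $M^\Sigma = T(\Sigma^c)$, and valuations map variables to values (or constructor terms). For built-in sorts, values are irreducible, since left-hand sides are not variables and built-in constants are normal forms. For non-built-in sorts, I expect to need the standard assumption of~\cite{CL18} that instances of variables are normal forms, or that rules are defined only at top-level-like positions; I would cite~\cite[Theorem~1]{CL18}, which underlies \Cref{prop:property-of-Delta}, to justify that every redex of $\alpha(s)$ is at a non-variable position of $s$. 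Conversely, if $\alpha$ satisfies some $\exists\vec{y_i}.\ \phi_i$, the witness directly gives a rewrite step from $\alpha(s)$.

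With the lemma in hand, both directions are short. For ($\Rightarrow$), assume the implication is valid. If $k=0$, the disjunction is $\bot$, so $\phi$ would be unsatisfiable. Then $\Eval{\CTerm{s}{\phi}}=\emptyset$, which trivially avoids $\NF[\cR]$; the claim $k\ge1$ then fails. So the proposition should be read with $\phi$ satisfiable (as for runnability, which requires non-emptiness), or else we accept this degenerate edge case, and I would note this explicitly. Otherwise, every $\alpha$ satisfying $\phi$ also satisfies the disjunction, so by the lemma each $\alpha(s)$ is reducible, giving $\Eval{\CTerm{s}{\phi}}\cap\NF[\cR]=\emptyset$; moreover some disjunct must exist, so $k\ge1$.

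For ($\Leftarrow$), assume $k\ge1$ and no instance is a normal form. Given any $\alpha$ satisfying $\phi$, the term $\alpha(s)$ is reducible, so by the lemma $\alpha$ satisfies the disjunction. Valuations not satisfying $\phi$ make the implication hold vacuously, so the implication is valid.
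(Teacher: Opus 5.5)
Your argument is essentially the paper's: the only-if direction turns a witness of $\exists\vec{y_i}.\ \phi_i$ into a rewrite step from $\alpha(s)$, and the if direction takes $\alpha\models\phi$ with $\alpha(s)$ reducible and appeals to \Cref{prop:property-of-Delta} (the paper's version of \cite[Theorem~1]{CL18}), which is the fact you fall back on for redexes at or below variable positions of $s$---and to get the pointwise form your lemma needs (which the paper's ``This implies'' also takes for granted), apply that proposition to $\CTerm{s}{\phi \land \bigwedge_{x \in \Var(s,\phi)} x = \alpha(x)}$, whose semantics is $\{\alpha(s)\}$. Your caveat is also right: for $k=0$ and unsatisfiable $\phi$ (e.g., $s$ a variable and $\phi$ false) the implication is valid while $k\geq 1$ fails, and the paper's proof misses this because it obtains $k\geq 1$ only from some $u\in\Eval{\CTerm{s}{\phi}}$.
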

\begin{proof}
	We first show the \textit{only-if} part.
	It suffices to show that $k \geq 1$ and every term in $\Eval{\CTerm{s}{\phi}}$ is not a normal form of $\cR$.
	Let $u \in \Eval{\CTerm{s}{\phi}}$.
	Then, by definition, there exists a $\Sigma$-valuation $\alpha$ for $\Var(s,\phi)$ such that
	$\cM^\Sigma, \alpha \models \phi$.
	By assumption, we have that
	$\cM^\Sigma, \alpha \models \phi \Rightarrow \bigvee_{i=1}^k \exists \vec{y_i}.\ \phi_i$, and thus
	$k\geq 1$ and $\cM^\Sigma, \alpha \models \exists \vec{y_i}.\ \phi_i$ for some $i \in \{1,\ldots,k\}$.
	Let $p$ be a position of $s$ with $s|_p \notin\cV$, $s_i$ be $s[r]_p$, and $\phi_i$ be $\phi \land (s = s[\ell]_p) \land \varphi$, where $\ell \to r ~ \Constraint{\varphi} \in \cR$ and $\Var(s,\phi) \cap \Var(\ell,r,\varphi) = \emptyset$.
	Then, by definition, we have that $\{\vec{y_i}\} = \Var(\ell,r,\varphi)$.
	Since $\cM^\Sigma, \alpha \models \exists \vec{y_i}.\ \phi_i$, we have that
	$\cM^\Sigma, \alpha \models \exists \vec{y_i}.\ (\phi \land (s = s[\ell]_p) \land \varphi)$, and thus
	$\cM^\Sigma, \alpha \models \exists \vec{y_i}.\ ((s = s[\ell]_p) \land \varphi)$.
	Hence we have that
	$\cM^\Sigma \models \exists \vec{y_i}.\ ((\alpha(s) = \alpha(s)[\ell]_p) \land \varphi)$.
	Then, there exists a $\Sigma$-valuation $\beta$ for $\{\vec{y_i}\}$ such that
	$\cM^\Sigma, \beta \models (\alpha(s) = \alpha(s)[\ell]_p) \land \varphi$.
	We now have that $\alpha(s) = \alpha(s)[\beta(\ell)]_p$ and $\cM^\Sigma, \beta \models \varphi$, and thus
	$u = \alpha(s) \to_\cR \alpha(s[\beta(r)]_p$.
	Therefore, $u$ is not a normal form of $\cR$.

	Next, we show the \textit{if} part by contradiction.
	Assume that
	$k\geq 1$,
	$\Eval{\CTerm{s}{\phi}} \cap \NF[\cR] = \emptyset$,
	and
	$\phi \Rightarrow \bigvee_{i=1}^k \exists \vec{y_i}.\ \phi_i$ is not valid w.r.t.\ $\cM^\Sigma$.
	Then, there exists a $\Sigma$-valuation $\alpha$ for $\Var(s,\phi)$ such that
	$\cM^\Sigma, \alpha \not\models \phi \Rightarrow \bigvee_{i=1}^k \exists \vec{y_i}.\ \phi_i$, and thus
	$\cM^\Sigma, \alpha \models \phi$ and
	$\cM^\Sigma, \alpha \not\models \bigvee_{i=1}^k \exists \vec{y_i}.\ \phi_i$.
	By definition, we have that $\alpha(s) \in \Eval{\CTerm{s}{\phi}}$.
	By assumption, we have that $\alpha(s) \notin \NF[\cR]$.
	Then, by definition, there exist a rule $\ell \to r ~\Constraint{\varphi} \in \cR$, a position $p$ of $\alpha(s)$, and a $\Sigma$-valuation $\beta$ from $\Var(\ell,r,\varphi)$ to $M^\Sigma$ such that
	$\alpha(s)|_p \notin \cV$,
	$\alpha(s)|_p = \beta(\ell)$,
	$\cM^\Sigma, \beta \models \varphi$,
	and
	$\alpha(s) \to_\cR \alpha(s)[\beta(r)]_p$.
	By definition, we have that $\alpha(s)[\beta(r)]_p \in \Derivative[(M^\Sigma,\to_\cR)](\CTerm{s}{\phi})
		= \Eval{\Delta_\cR(\CTerm{s}{\phi})}
	$ (by \Cref{prop:property-of-Delta}), and thus
	there exists some $i \in \{1\leq k\}$ such that
	$\alpha(s)[\beta(r)]_p \in \Eval{\CTerm{s_i}{\phi_i}}$.
	This implies that $\cM^\Sigma, \alpha \models \exists \vec{y_i}.\ \phi_i$.
	This contradicts the fact that $\cM^\Sigma, \alpha \not\models \bigvee_{i=1}^k \exists \vec{y_i}.\ \phi_i$.
\end{proof}

We now recall the proof system $\DCC$.
Since we assumed that $\Var(s,\phi) \cap \Var(t,\psi) = \emptyset$ for any APR predicate $\ReachPred{\CTerm{s}{\phi}}{\CTerm{t}{\psi}}$,
the inference rules of $\DCC$ in \Cref{fig:rules-of-DCC} can be simplified as shown in \Cref{fig:rules-of-simplified-DCC}.
Note that the rules in \Cref{fig:rules-of-simplified-DCC} are adapted to the LCTRS formalism in \Cref{subsec:LCTRSs} and rule \RuleName{Circ} is replaced by \RuleName{Cyc} on page~\pageref{eq:Cyc-for-LCTRS}.

\begin{figure}[t]
	\[
		\InfRule[Axiom]{}{\ReachPred{\CTerm{s}{\phi}}{\CTerm{t}{\psi}}
		}
		~\mbox{if $\phi$ is unsatisfiable w.r.t.\ \textcolor{blue}{$\cM^\Sigma$}.}
	\]
	\[
		\InfRule[Subs]{\ReachPred{\CTerm{s}{\phi \land \neg (\exists \vec{x}.\ ((s = t) \land \psi))}}{\CTerm{t}{\psi}}
		}{\ReachPred{\CTerm{s}{\phi}}{\CTerm{t}{\psi}}
		}
		~\mbox{if $(s = t) \land \phi \land \psi$ is satisfiable w.r.t.\ \textcolor{blue}{$\cM^\Sigma$},}
	\]
	\qquad
	where \textcolor{blue}{$\{\vec{x}\} = \Var(t,\psi)$}.
	\[
		\InfRule[Der]{\ReachPred{\CTerm{s_1}{\phi_1}}{\CTerm{t}{\psi}}
			\quad
			\ldots
			\quad
			\ReachPred{\CTerm{s_n}{\phi_n}}{\CTerm{t}{\psi}}
		}{\ReachPred{\CTerm{s}{\phi}}{\CTerm{t}{\psi}}}
		~\mbox{if
			$\CTerm{s}{\phi}$ is $\cR$-runnable,
		}
	\]
	\qquad
	where $\Delta_\cR(\CTerm{s}{\phi}) = \{ \CTerm{s_i}{\phi_i} \mid 1 \leq i \leq n \}$ for some $n>0$.
	\[\color{blue}
		\InfRule[Cyc]{}{\ReachPred{\CTerm{s}{\phi}\!}{\!\CTerm{t}{\psi}}
		}
		~\mbox{if
			$\CTerm{s}{\phi} \,{\subseteq}\, \CTerm{s'\!}{\phi'}$
			and
			$\CTerm{t}{\psi} \,{\supseteq}\, \CTerm{t'\!}{\psi'}$
			for some $(\ReachPred{\CTerm{s'\!}{\phi'}\!}{\!\CTerm{t'\!}{\psi'}}) \,{\in}\, G$.}
	\]
	\caption{Simplified inference rules of $\DCC$~\cite{CL18} for partial validity of APR predicates in $G$ of an LCTRS $\cR$ over $\Sigma$,
		where the differences from \Cref{fig:rules-of-DCC} are highlighted in blue.}
	\label{fig:rules-of-simplified-DCC}
\end{figure}

\subsection{Formal Description for Instantiation}

We first show some properties of constrained terms and their state predicate semantics.
\begin{proposition}
	\label{prop:properties-of-side-conditions-of-DCC}
	Let $\CTerm{s}{\phi}$ and $\CTerm{t}{\psi}$ be constrained terms over $\Sigma$ such that $\Var(s,\phi) \cap \Var(t,\psi)=\emptyset$.
	Let $\{\vec{x}\} = \Var(t,\psi)$ and $\Delta_\cR(\CTerm{s}{\phi}) = \{ \CTerm{s_1}{\phi_1},\ldots,\CTerm{s_n}{\phi_n} \}$ for some $n \geq 0$.
	Then, all of the following statements hold:
	\begin{enumerate}
		\renewcommand{\labelenumi}{(\arabic{enumi})}
		\leftskip=1ex
		\item
		      $\phi$ is unsatisfiable w.r.t.\ $\cM^\Sigma$
		      if and only if
		      $\Eval{\CTerm{s}{\phi}} = \emptyset$,

		\item $(s = t) \land \phi \land \psi$ is satisfiable w.r.t.\ $\cM^\Sigma$
		      if and only if
		      $\Eval{\CTerm{s}{\phi}} \cap \Eval{\CTerm{t}{\psi}} \ne \emptyset$,

		\item
		      $\Eval{\CTerm{s}{\phi \land \neg (\exists \vec{x}.\ ((s = t) \land \psi))}}
			      =
			      \Eval{\CTerm{s}{\phi}} \setminus \Eval{\CTerm{t}{\psi}}
		      $,

		\item $\CTerm{s}{\phi}$ is $\cR$-runnable
		      if and only if
		      $\Eval{\CTerm{s}{\phi}} \ne \emptyset$
		      and
		      $\Eval{\CTerm{s}{\phi}} \cap \NF[(M^\Sigma,\to_\cR)] = \emptyset$,
		      and

		\item
		      $\Eval{\Delta_\cR(\CTerm{s}{\phi})}
			      =
			      \Derivative[(M^\Sigma,\to_\cR)](\Eval{\CTerm{s}{\phi}})
			      =
			      \Eval{\CTerm{s_1}{\phi_1}} \cup \cdots \cup \Eval{\CTerm{s_n}{\phi_n}}
		      $.
	\end{enumerate}
\end{proposition}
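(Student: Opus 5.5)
We prove the five statements one at a time. Throughout we unfold $\Eval{\CTerm{s}{\phi}} = \{ \alpha(s) \mid \alpha \in \ValSem{\phi} \}$ and use the assumption $\Var(s,\phi)\cap\Var(t,\psi)=\emptyset$. Disjointness lets us combine a valuation for $\Var(s,\phi)$ and one for $\Var(t,\psi)$ into a single valuation without conflict.

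For (1), a valuation $\alpha$ with $\cM^\Sigma,\alpha\models\phi$ exists if and only if $\alpha(s)\in\Eval{\CTerm{s}{\phi}}$ for some such $\alpha$, which is immediate from the definition. We only need to note that every sort is inhabited, so a valuation of $\Var(\phi)$ extends to $\Var(s,\phi)$.

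For (2), we argue in both directions.
\begin{itemize}
\item If $\alpha\models (s=t)\land\phi\land\psi$, we restrict $\alpha$ to the two disjoint variable sets. Then $\alpha(s)=\alpha(t)$ lies in both semantics.
\item Conversely, suppose $\alpha(s)=\beta(t)$ with $\alpha\models\phi$ and $\beta\models\psi$. By disjointness, $\alpha\cup\beta$ is a valuation satisfying the conjunction.
\end{itemize}

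For (3), we fix $\alpha\models\phi$. Then $\alpha\models\neg\exists\vec{x}.((s=t)\land\psi)$ holds if and only if there is no $\beta$ on $\Var(t,\psi)$ with $\beta\models\psi$ and $\beta(t)=\alpha(s)$. This is exactly $\alpha(s)\notin\Eval{\CTerm{t}{\psi}}$, because $\alpha$ does not touch $\vec{x}$.

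One subtlety has to be handled here. Distinct valuations $\alpha,\alpha'$ may give the same term $\alpha(s)=\alpha'(s)$ while one satisfies $\phi$ and the other does not. This is harmless: membership of $\alpha(s)$ in $\Eval{\CTerm{t}{\psi}}$ depends only on the term $\alpha(s)$, not on $\alpha$. So the set equality holds in both directions.

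For (4), we combine the definition of $\cR$-runnability with \Cref{prop:R-runnability-properties}. Runnability means $\Delta_\cR(\CTerm{s}{\phi})\ne\emptyset$ together with validity of the implication $\phi\Rightarrow\bigvee_{i=1}^k\exists\vec{y_i}.\phi_i$. By \Cref{prop:R-runnability-properties}, validity of the implication is equivalent to $k\ge1$ and $\Eval{\CTerm{s}{\phi}}\cap\NF[\cR]=\emptyset$, and $\NF[\cR]=\NF[(M^\Sigma,\to_\cR)]$.

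The remaining issue, and the main obstacle, is the emptiness clause.
\begin{itemize}
\item If $\Eval{\CTerm{s}{\phi}}=\emptyset$, then $\phi$ is unsatisfiable by (1). The implication is then vacuously valid, and $\Delta_\cR(\CTerm{s}{\phi})$ may still be nonempty syntactically. So the stated equivalence would require $\Eval{\CTerm{s}{\phi}}\neq\emptyset$ to be part of runnability. We read the paper's notion of runnability, defined to mirror $\cA$-runnability (which includes $P\ne\emptyset$), as containing this nonemptiness requirement, and we state this reading explicitly.
\item Conversely, if $\Eval{\CTerm{s}{\phi}}\ne\emptyset$ and it contains no normal forms, then some instance is reducible. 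By \Cref{prop:property-of-Delta}, $\Eval{\Delta_\cR(\CTerm{s}{\phi})}\ne\emptyset$, hence $k\ge1$, and \Cref{prop:R-runnability-properties} gives validity.
\end{itemize}

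For (5), the first equality is exactly \Cref{prop:property-of-Delta}. The second is the definition of $\Eval{\cdot}$ extended to the finite set $\Delta_\cR(\CTerm{s}{\phi})=\{\CTerm{s_1}{\phi_1},\ldots,\CTerm{s_n}{\phi_n}\}$ as a union.
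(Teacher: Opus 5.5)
Your proposal follows the paper's proof. Items (1), (2) and (5) are argued exactly as in the paper. For (3) you give a direct pointwise equivalence: for fixed $\alpha\models\phi$, $\alpha\models\neg\exists\vec{x}.\,((s=t)\land\psi)$ iff $\alpha(s)\notin\Eval{\CTerm{t}{\psi}}$. The paper instead proves the two inclusions separately by contradiction, but the content is the same.

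Your concern about (4) is justified and is not a misreading. In the paper's definition, $\cR$-runnability is $\cR$-derivability plus validity of $\phi\Rightarrow\bigvee_{i}\exists\vec{y_i}.\,\phi_i$. Derivability is purely syntactic: it only asks that $\Delta_\cR(\CTerm{s}{\phi})\neq\emptyset$. Take the factorial LCTRS and the constrained term $\CTerm{\symb{fact}(x)}{x<x}$. The root position yields a nonempty $\Delta_\cR$, and the implication holds vacuously, so the term is $\cR$-runnable. Yet $\Eval{\CTerm{\symb{fact}(x)}{x<x}}=\emptyset$.

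So the only-if direction of (4) is false as literally stated. The paper's one-line justification (``trivial by definition and \Cref{prop:R-runnability-properties}'') overlooks this. The same degenerate case also breaks the only-if direction of \Cref{prop:R-runnability-properties} itself: when $k=0$ and $\phi$ is unsatisfiable, the implication is valid but $k\geq 1$ fails. Your argument only uses the part of that proposition that survives, since you obtain $k\geq 1$ from derivability.

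Your repair is the right one: add satisfiability of $\phi$, equivalently $\Eval{\CTerm{s}{\phi}}\neq\emptyset$, to $\cR$-runnability. With that reading your proof of (4) is complete.
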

\begin{proof}
	The first claim~\Bfnum{(1)} is trivial by definition.
	We prove the remaining claims.
	\begin{enumerate}
		\renewcommand{\labelenumi}{(\arabic{enumi})}
		\leftskip=1ex
		\item[\Bfnum{(2)}]
		      We first show the \textit{only-if} part.
		      Assume that $(s = t) \land \phi \land \psi$ is satisfiable w.r.t.\ $\cM^\Sigma$.
		      Then, by definition, there exists a $\Sigma$-valuation $\alpha$ for $\Var(s,\phi,t,\psi)$ such that
		      $\cM^\Sigma, \alpha \models (s = t) \land \phi \land \psi$.
		      Thus, we have that
		      $\alpha(s) = \alpha(t)$,
		      $\cM^\Sigma, \alpha \models \phi$, and
		      $\cM^\Sigma, \alpha \models \psi$.
		      By definition, we have that
		      $\alpha(s) \in \Eval{\CTerm{s}{\phi}}$
		      and
		      $\alpha(t) \in \Eval{\CTerm{t}{\psi}}$.
		      Therefore, we have that $\alpha(s) \in \Eval{\CTerm{s}{\phi}} \cap \Eval{\CTerm{t}{\psi}}$,
		      and hence $\Eval{\CTerm{s}{\phi}} \cap \Eval{\CTerm{t}{\psi}} \ne \emptyset$.

		      Next, we show the \textit{if} part.
		      Assume that $\Eval{\CTerm{s}{\phi}} \cap \Eval{\CTerm{t}{\psi}} \ne \emptyset$.
		      Then, there exists a term $u \in \Eval{\CTerm{s}{\phi}} \cap \Eval{\CTerm{t}{\psi}}$.
		      By definition, there exist $\Sigma$-valuations $\alpha_1,\alpha_2$ for $\Var(s,\phi)$ and $\Var(t,\psi)$, respectively, such that
		      $u= \alpha_1(s)$,
		      $\cM^\Sigma, \alpha_1 \models \phi$,
		      $u= \alpha_2(t)$,
		      and
		      $\cM^\Sigma, \alpha_2 \models \psi$.
		      By assumption, we have that $\Var(s,\phi)\cap\Var(t,\psi)=\emptyset$.
		      Let $\alpha$ be the $\Sigma$-valuation for $\Var(s,\phi,t,\psi)$ such that
		      $\alpha(x) = \alpha_1(x)$ for any variable $x$ in $\Var(s,\phi)$, and
		      $\alpha(x) = \alpha_2(x)$ for any variable $x$ in $\Var(t,\psi)$.
		      Then, we have that
		      $\alpha(s) = \alpha_1(s) = u = \alpha_2(t) = \alpha(t)$,
		      $\cM^\Sigma, \alpha \models \phi$, and
		      $\cM^\Sigma, \alpha \models \psi$, and thus
		      $\cM^\Sigma, \alpha \models (s = t) \land \phi \land \psi$.
		      Therefore, $(s = t) \land \phi \land \psi$ is satisfiable w.r.t.\ $\cM^\Sigma$.

		\item[\Bfnum{(3)}]
		      We first show that
		      $\Eval{\CTerm{s}{\phi \land \neg (\exists \vec{x}.\ ((s = t) \land \psi))}}
			      \subseteq
			      \Eval{\CTerm{s}{\phi}} \setminus \Eval{\CTerm{t}{\psi}}
		      $.
		      We proceed by contradiction.
		      Assume that
		      $\Eval{\CTerm{s}{\phi \land \neg (\exists \vec{x}.\ ((s = t) \land \psi))}}
			      \not\subseteq
			      \Eval{\CTerm{s}{\phi}} \setminus \Eval{\CTerm{t}{\psi}}
		      $.
		      Then, there exists a term $u$ such that
		      $u \in \Eval{\CTerm{s}{\phi \land \neg (\exists \vec{x}.\ ((s = t) \land \psi))}}$
		      but
		      $u \notin \Eval{\CTerm{s}{\phi}} \setminus \Eval{\CTerm{t}{\psi}}$.
		      By definition, there exists a $\Sigma$-valuation $\alpha$ for $\Var(s,\phi)$ such that
		      $u = \alpha(s)$ and
		      $\cM^\Sigma, \alpha \models \phi \land \neg (\exists \vec{x}.\ ((s = t) \land \psi))$.
		      We have that
		      $\cM^\Sigma, \alpha \models \phi$
		      and
		      $\cM^\Sigma, \alpha \models \neg (\exists \vec{x}.\ ((s = t) \land \psi))$, and thus,
		      $u \in \Eval{\CTerm{s}{\phi}}$.
		      By the assumption, we have that $u \in \Eval{\CTerm{t}{\psi}}$, and thus
		      there exists a $\Sigma$-valuation $\beta$ for $\Var(t,\psi)$ such that
		      $u = \beta(t)$ and
		      $\cM^\Sigma, \beta \models \psi$.
		      Since $\cM^\Sigma, \alpha \models \neg (\exists \vec{x}.\ ((s = t) \land \psi))$
		      and the domain of $\alpha$ is $\Var(s,\phi)$ with $\Var(s,\phi) \cap \Var(t,\psi) = \emptyset$,
		      we have that $\cM^\Sigma \models \neg (\exists \vec{x}.\ ((\alpha(s) = t) \land \psi))$, and thus
		      $\cM^\Sigma \not\models \exists \vec{x}.\ ((\alpha(s) = t) \land \psi)$.
		      This implies that there is no $\Sigma$-valuation $\gamma$ for $\Var(t,\psi)$ such that
		      $\cM^\Sigma, \gamma \models (\alpha(s) = t) \land \psi$.
		      Since $\alpha(s)=u=\beta(t)$,
		      we have that $\cM^\Sigma, \beta \models (\alpha(s) = t) \land \psi$.
		      This contradicts the non-existence of $\Sigma$-valuations $\gamma$ such that $\cM^\Sigma, \gamma \models (\alpha(s) = t) \land \psi$.

		      Next, we show that
		      $\Eval{\CTerm{s}{\phi \land \neg (\exists \vec{x}.\ ((s = t) \land \psi))}}
			      \supseteq
			      \Eval{\CTerm{s}{\phi}} \setminus \Eval{\CTerm{t}{\psi}}
		      $.
		      We proceed by contradiction.
		      Assume that
		      $\Eval{\CTerm{s}{\phi \land \neg (\exists \vec{x}.\ ((s = t) \land \psi))}}
			      \not\supseteq
			      \Eval{\CTerm{s}{\phi}} \setminus \Eval{\CTerm{t}{\psi}}
		      $.
		      Then, there exists a term $u$ such that
		      $u \in \Eval{\CTerm{s}{\phi}}$,
		      $u \notin \Eval{\CTerm{t}{\psi}}$,
		      and
		      $u \notin \Eval{\CTerm{s}{\phi \land \neg (\exists \vec{x}.\ ((s = t) \land \psi))}}$.
		      By definition, there exists a $\Sigma$-valuation $\alpha$ for $\Var(s,\phi)$ such that
		      $\cM^\Sigma, \alpha \models \phi$.
		      Since $\cM^\Sigma, \alpha \models \phi$ and $u \notin \Eval{\CTerm{s}{\phi \land \neg (\exists \vec{x}.\ ((s = t) \land \psi))}}$,
		      we have that $\cM^\Sigma, \alpha \not\models \neg (\exists \vec{x}.\ ((s = t) \land \psi))$, and thus
		      $\cM^\Sigma, \alpha \models \exists \vec{x}.\ ((s = t) \land \psi)$.
		      Then, there exists a $\Sigma$-valuation $\beta$ for $\Var(t,\psi)$ such that
		      $\cM^\Sigma, \beta \models (\alpha(s)=t) \land \psi$, and thus
		      $u \in  \Eval{\CTerm{t}{\psi}}$.
		      This contradicts the fact that $u \notin \Eval{\CTerm{t}{\psi}}$.

		\item[\Bfnum{(4)}]
		      Trivial by definition and \Cref{prop:R-runnability-properties}.

		\item[\Bfnum{(5)}]
		      Trivial by definition and \Cref{prop:property-of-Delta}.

	\end{enumerate}
\end{proof}

By using constrained terms as sets of target objects, the APR framework for LCTRSs is formulated and $\DCC$ is defined for constrained terms.
Since an LCTRS $\cR$ induces the ARS $(M^\Sigma,\to_\cR)$,
by taking it as an ARS $\cA$ for $\DVPnew$,
\Cref{prop:properties-of-side-conditions-of-DCC} implies that each rules in $\DCC$ is an instance of the corresponding one in $\DVPnew$.
Therefore, $\DCC$ can be considered an instance of $\DVPnew$.

\end{document}